\newtheorem{lemma}{Lemma}
\newtheorem{theorem}{Theorem}
\newtheorem{proposition}{Proposition}
\newcommand{\STAB}{\mathrm{STAB}}
\newcommand{\ent}{H}
\title{Sorting under Partial Information\\
(without the Ellipsoid Algorithm)\footnote{This work was supported by the ``Actions de Recherche Concert\'ees'' (ARC) fund of the ``Communaut\'e fran\c{c}aise de Belgique'', NSERC of Canada, and the Canada Research Chairs Programme. G.J.\ and R.J.\ are Postdoctoral Researchers of the ``Fonds National de la Recherche Scientifique'' (F.R.S.--FNRS). A preliminary version of the work appeared in \cite{SUPI-STOC}.}}
\date{}
\author{Jean Cardinal, Samuel Fiorini, Gwena\"el Joret\footnote{Universit\'e Libre de Bruxelles (ULB), Brussels, Belgium. {E-mail: \tt\small \{jcardin,sfiorini,gjoret\}@ulb.ac.be}.},\\ 
Rapha\"el M. Jungers\footnote{Universit\'e Catholique de Louvain (UCL), Louvain-La-Neuve, Belgium. {E-mail: \tt\small raphael.jungers@uclouvain.be}}, J. Ian Munro%
\footnote{University of Waterloo, Waterloo, Ontario, Canada. {E-mail: \tt\small imunro@uwaterloo.ca}}
}
\begin{document}

\maketitle

\sloppy

\begin{abstract}
We revisit the well-known problem of sorting under partial information: sort a finite set given the outcomes of comparisons between some pairs of elements. The input is a partially ordered set $P$, and solving the problem amounts to discovering an unknown linear extension of $P$, using pairwise comparisons. The information-theoretic lower bound on the number of comparisons needed in the worst case is $\log e(P)$, the binary logarithm of the number of linear extensions of $P$. In a breakthrough paper, Jeff Kahn and Jeong Han Kim ({\em J.\ Comput.\ System Sci.\ 51 (3), 390--399, 1995}) showed that there exists a polynomial-time algorithm for the problem achieving this bound up to a constant factor. Their algorithm invokes the ellipsoid algorithm at each iteration for determining the next comparison, making it impractical.

We develop efficient algorithms for sorting under partial information. Like Kahn and Kim, our approach relies on graph entropy. However, our algorithms differ in essential ways from theirs. Rather than resorting to convex programming for computing the entropy, we approximate the entropy, or make sure it is computed only once, in a restricted class of graphs, permitting the use of a simpler algorithm. Specifically, we present:
\begin{enumerate}
\item an $O(n^2)$ algorithm performing $O(\log n\cdot \log e(P))$ comparisons;
\item an $O(n^{2.5})$ algorithm performing at most $(1+\varepsilon) \log e(P) + O_\varepsilon(n)$ comparisons;
\item an $O(n^{2.5})$ algorithm performing $O(\log e(P))$ comparisons.
\end{enumerate}
All our algorithms can be implemented in such a way that their computational bottleneck is confined in a preprocessing phase, while the sorting phase is completed in $O(q) + O(n)$ time, where $q$ denotes the number of comparisons performed.
\end{abstract}

\newpage
\section{Introduction}

\paragraph*{Problem Definition}

We consider the following problem:\medskip

{\it Let $V = \{ v_1 , \ldots, v_n \}$ be a set equipped with an unknown linear order $\leqslant$. Given a subset of the relations $v_i \leqslant v_j$, determine the complete linear order by queries of the form: ``is $v_i \leqslant v_j$?''.}\medskip

\begin{figure}
\begin{center}
\includegraphics[scale=.5]{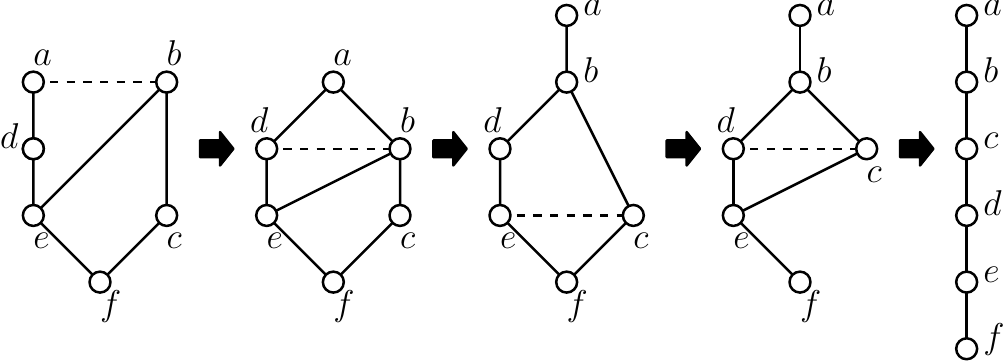}
\end{center}
\caption{\label{fig:example}An instance of the problem of sorting under partial information. In this example, we use 4 comparisons (dashed edges). At every step, the Hasse diagram of the currently known partial order is shown.}
\end{figure}

This problem is called {\it Sorting under Partial Information}. We are given the outcomes of a number of comparisons between elements of a linearly ordered set, and we wish to ``complete the sort'' by performing more comparisons. The partially ordered set (poset) $P = (V,\leqslant_P)$ encoding these known outcomes is a partial information that should help reducing the number of comparisons performed. Denoting by $e(P)$ the number of linear extensions of $P$, it is obvious that the number of required comparisons is at least $\log e(P)$ in the worst case\footnote{Throughout the paper, $\log x$ denotes the binary logarithm of $x$.}. An example is given in Figure~\ref{fig:example}.

\paragraph*{Previous Results}

The problem was first posed by Fredman~\cite{F76}. He showed that there exists an algorithm that performs $\log e(P) + 2n$ additional comparisons between elements of $V$. However, the number of comparisons performed by Fredman's algorithm is not $O(\log e(P))$ when $\log e(P)$ is sub-linear, and deciding what comparisons should be done takes super-polynomial time. At that time, it remained open whether there existed, on the one hand, an algorithm performing $O(\log e(P))$ comparisons, and, on the other hand, an algorithm running in polynomial time.

The first question was answered by Kahn and Saks~\cite{KS84j}. They showed that there always exists a query of the form ``is $v_i \leqslant v_j$?'' such that the fraction of linear extensions in which $v_i$ is smaller than $v_j$ lies in the interval $(3/11,8/11)$. This is a relaxation of the well-known $1/3$--$2/3$ conjecture, a conjecture formulated independently by Fredman, Linial, and Stanley, see \cite{L84}. A simpler proof yielding weaker bounds was given by Kahn and Linial~\cite{KL91}. Better bounds were later given by Brightwell, Felsner, and Trotter~\cite{BFT95}, and Brightwell~\cite{B99}. Iteratively choosing such a comparison yields an algorithm that performs $O(\log e(P))$ comparisons. However, finding the right comparisons remained intractable.

In 1995, Kahn and Kim published a breakthrough paper~\cite{KK95} in which they describe a polynomial-time algorithm performing $O(\log e(P))$ comparisons, thus answering both questions positively. Their key insight is to relate $\log e(P)$ to the entropy of the incomparability graph of $P$, a quantity that can be computed in polynomial time. Their algorithm, although polynomial, is still far from practical because it uses the ellipsoid algorithm $O(\log e(P)) = O(n \log n)$ times to determine the comparisons.

\paragraph*{Contribution}

Our results are summarized in Table \ref{tab:results} below. 

\begin{table}[h!]
\begin{center}
\begin{tabular}{|c|c|c|}
\hline
Algorithm & Global complexity & Number of comparisons\\
\hline
\hline
\cite{KK95} & $O(n \log n \cdot EA(n))$& $\le 9.82\cdot\log e(P)$\\
\hline
\hline
{\bf Algorithm 1} & $O(n^2)$ & $O(\log n \cdot \log e(P))$\\
\hline
{\bf Algorithm 2} & $O(n^{2.5})$ & $\le (1+\varepsilon) \log e(P) + O_\varepsilon(n)$\\
\hline
{\bf Algorithm 3} & $O(n^{2.5})$ & $\le 15.09 \cdot \log e(P)$\\
\hline
\end{tabular}
\end{center}
\label{tab:results}
\caption{We denote by $EA(n)$ the time needed for the ellipsoid algorithm to compute the entropy of a poset of order $n$. The original bound given by Kahn and Kim on the number of comparisons performed by their algorithm is $54.45 \cdot \log e(P)$. The improved bound given in the table is a byproduct of our results. (The notation $O_\varepsilon(n)$ means that the hidden constant may depend on $\varepsilon$.)
}
\end{table}

We now compare these results to those of Kahn and Kim (denoted: K\&K). In terms of global complexity, each of our algorithms greatly improves over that of K\&K. Furthermore:

\begin{itemize}
\item If $\log e(P)$ is super-linear in $n$, the number of comparisons of our second algorithm is lower than that of K\&K. By optimizing over $\varepsilon$, it can be shown that the number of comparisons is actually $\log e(P) + o(\log e(P)) + O(n)$ in this case, a number of comparisons comparable to that of Fredman's algorithm. 

\item If $\log e(P)$ is linear or sub-linear in $n$, the number of comparisons of our third algorithm is comparable to that of K\&K, although the constant in front of $\log e(P)$ is still far from the best constant achieved by a super-polynomial algorithm via balancing pairs \cite{BFT95,B99}.

\item Our algorithms have the following useful property: they compute information that guides the sorting and can then be reused to solve any given instance with the same partial information $P$, in time proportional to the number of comparisons, plus a term linear in $n$.
\end{itemize}

Finally, note that randomized algorithms for sorting under partial information can be derived from random linear extension generation algorithms. The idea here would be to estimate the efficiency of a comparison -- that is, the fraction of linear extensions remaining after some query ``is $v_i \leqslant v_j$?'' is performed -- arbitrarily closely by testing a sufficiently large random sample of linear extensions. However, the running time of such an algorithm would be much higher than the ones we propose here. For instance the recent sampling algorithm from Huber~\cite{H06}, has expected running time $O(n^3\log n)$, and this sampling step has to be performed a large number of times.

\paragraph*{Outline and Key Ideas}

K\&K showed that graph entropy, as defined by K\"orner~\cite{K73}, is a useful tool in the problem of sorting under partial information. Letting $H(\bar{P})$ be the entropy of the incomparability graph of $P$, they showed that $\log e(P) = \Theta(nH(\bar{P}))$. Every comparison performed by their algorithm decreases $n H(\bar{P})$ by at least some constant. Hence the total number of comparisons is $O(nH(\bar{P}))$ and thus $O(\log e(P))$. Furthermore, their algorithm is polynomial, because the entropy can be computed in polynomial time using convex programming.

Our goal is to obtain practical algorithms, without sacrificing the number of comparisons. Our first key idea is to compute a {\sl greedy chain decomposition\/} of $P$, that is, a partition of $P$ into chains (totally ordered subsets), obtained by iteratively extracting a longest chain. This allows us to get rid of the costly convex programming machinery and enables us to focus only on the relevant part of $P$. In \cite{POP_SICOMP}, we have provided bounds on the amount of information (in terms of entropy) that is lost when we forget the relations of $P$ between two distinct chains of a greedy chain decomposition.

As a warmup, we first describe Algorithm 1, an insertion sort-like algorithm. Then we describe Algorithm 2, a mergesort-like algorithm: find a greedy chain decomposition of $P$, and merge the chains using a simple linear-time merging algorithm. The number of comparisons performed by this algorithm can be shown to be close to $\log e(P)$, up to an arbitrarily small factor and a term linear in $n$. This is described in Section~\ref{sec:merge}.

As noted above, our mergesort-like algorithm performs better than that of K\&K provided the information theoretic lower bound $\log e(P)$ is super-linear. The algorithms are comparable (in terms of number of comparisons) if $\log e(P)$ is linear. If $\log e(P)$ is sub-linear, we have to use another strategy: instead of forgetting all the relations of $P$ between the chains of a greedy chain decomposition, we keep some of them. Namely, we keep all the relations between the elements of the longest chain and the rest of $P$. When $\log e(P)$ is small compared to $n$, the longest chain contains a large fraction of the elements. Hence, this less radical strategy keeps most of the information contained in $P$.

Our second key idea is contained in the following algorithm: find a longest chain $A$, use the mergesort-like algorithm on $P-A$, yielding a chain $B$, and cautiously merge the chains $A$ and $B$ using the current partial information. Thus we reduce the general sorting problem to an easier subproblem known as {\sl merging under partial information}. It is a special case of the problem of sorting under partial information in which $P$ can be covered by exactly two chains, and has been studied by Linial~\cite{L84}. By using an algorithm for merging under partial information performing $O(\log e(P))$ comparisons, we obtain an algorithm for the general sorting problem performing $O(\log e(P))$ comparisons. This is shown in Section \ref{sec:cautious_merge}.

The problem of merging under partial information is tackled in Section~\ref{sec:MUPIsec}. Linial~\cite{L84} already provided an algorithm for the problem, but we develop an alternative solution. We first show that in this special case, the entropy of the incomparability graph of $P$ can be computed very easily. The computation relies on a structural lemma on the entropy of bipartite graphs by K\"orner and Marton~\cite{KM88}, and on the additional structure exhibited by the incomparability graph of a poset covered by two chains.

Then, we show that given the vertex weights achieving the entropy, there exists a sequence of pairwise chain mergings, each of which decreases $nH(\bar{P})$ by an amount proportional to the number of comparisons performed. After each merging, the weights on the vertices can be updated efficiently. This yields the desired algorithm for merging under partial information, and thus an algorithm for sorting under partial information performing $O(\log e(P))$ comparisons. We refer to it as Algorithm 3. The global complexity of Algorithm 3 is $O(n^{2.5})$.

The plan of the paper is as follows. Preliminaries on complexity measures, the entropy of a graph, and greedy chain decompositions, are given in Section~\ref{sec:prelim}. In Section~\ref{sec:tight_bound}, we offer new results on the entropy, improving several aspects of K\&K's analysis. Mainly, we prove the tight inequality $nH(\bar{P}) \leq 2 \log e(P)$, whereas K\&K show $nH(\bar{P}) \leq (1+7\log e) \log e(P) \simeq 11.1 \log e(P)$.

As a first simple example of a near-optimal algorithm for sorting under partial information, we describe our (simple) Algorithm~1 in Section~\ref{sec:insertion}. This algorithm has global complexity $O(n^2)$ and performs a number of comparisons within a $\log n$ factor only of the information-theoretic lower bound. 

As mentioned above, the mergesort-like algorithm (Algorithm 2) is given in Section~\ref{sec:merge},
while Sections \ref{sec:cautious_merge} and~\ref{sec:MUPIsec} are devoted to
Algorithm 3 performing $O(\log e(P))$ comparisons. In the last section, Section~\ref{sec:reuse}, we explain how that algorithm can be implemented in such a way that all costly computations are done in a {\em preprocessing phase}. As a result, the algorithm can reuse the information computed during that preprocessing phase and solve any other instance with the same partial information $P$, in time proportional to the number of comparisons plus a term linear in $n$.

As a final remark, we report an important observation from an anonymous referee concerning Linial's algorithm for merging under partial information~\cite{L84}. Using dynamic programming, it can be shown that this algorithm can be implemented in a way that would be competitive with our proposition. It would not, however, have a sorting phase that is as efficient.

A related note is that the algorithm for merging under partial information given in the preliminary version~\cite{SUPI-STOC} of this paper is slightly different from the one presented here. The resulting new algorithm for sorting under partial information is simpler and can be implemented so that the sorting phase takes $O(q) + O(n)$ time, where $q$ is the number of comparisons performed by the algorithm. Achieving the latter property was left as an open problem in~\cite{SUPI-STOC}.

We also include an appendix, in which we discuss the complexities of some important steps used in our algorithms, among which is the construction of a greedy chain decomposition.

\paragraph*{Other Related Works}
In 2004, Yao proved that the information-theoretic lower bound for the problem of sorting under partial information also holds for quantum decision trees, up to a term linear in $n$~\cite{Yao04}. His analysis also relies on the notion of graph entropy.

In a recent paper, Daskalakis et al.~\cite{DKMRV09} analyze the problem of discovering a partial order using comparisons. In that setting, a comparison can have three outcomes, including one stating that the two elements are incomparable, and the goal is to completely identify the underlying partial order. They propose an algorithm performing a number of comparisons that is within a constant factor of the information-theoretic lower bound for partial orders of a given width.

\section{Preliminaries}
\label{sec:prelim}

We give a number of definitions and basic results, and summarize the contribution of Kahn and Kim~\cite{KK95} to the problem. 

\paragraph*{Complexity Measures}

Consider an algorithm for sorting under partial information. The {\sl query complexity\/} is the number of comparisons between elements of $P$ that are done by the algorithm. The {\sl preprocessing complexity\/} measures the computational work done before the first comparison is performed. The rest of the work is measured by the {\sl sorting complexity}. The {\sl preprocessing phase\/} and {\sl sorting phases\/} are defined similarly. Thus, in the preprocessing phase, we are restricted to only process the input poset. The comparisons are performed during the sorting phase. The {\sl global complexity\/} is simply the sum of the preprocessing and sorting complexities.

Our model of computation is a RAM machine with $\Theta (\log n)$-size words. The global complexity is measured as the total number of arithmetic and logical operations on words.

\paragraph*{Entropy and Sorting}

We recall that a subset $S$ of vertices of a graph is a {\sl stable set\/} (or {\sl independent set}) if the vertices in $S$ are pairwise nonadjacent. The {\sl stable set polytope\/} of a graph $G$ with vertex set $V$ and order $n$ is the $n$-dimensional polytope
\begin{equation*}
\STAB(G) := \mathrm{\ conv} \{\chi^S \in \mathbb{R}^V : S \textrm{ stable set in }G\},
\end{equation*}
where $\chi^S$ is the characteristic vector of the subset $S$, assigning the value $1$ to every vertex in $S$, and $0$ to the others. The {\sl entropy\/} of $G$ is defined as (see~\cite{K73,CKLMS90})
\begin{equation}
\label{def-H}
H(G) := \min_{x\in \STAB(G)} - \frac 1n \sum_{v\in V} \log x_v.
\end{equation}
Any point $x \in \STAB(G)$ describes a feasible solution of the convex program defined in the right-hand side of \eqref{def-H}. The {\sl entropy\/} of $x$ is the value of the objective function of that program with respect to $x$, which we denote by $\ent(x)$.

For any given poset $P$, we consider two graphs: the {\sl comparability graph\/} $G(P)$ and the {\sl incomparability graph\/} $\bar{G}(P)$. The vertex set of $G(P)$ is the ground set of $P$ and two distinct vertices $v$ and $w$ are adjacent in $G(P)$ whenever they are comparable in $P$. The incomparability graph $\bar{G}(P)$ is simply the complement of $G(P)$. Following K\&K, we denote by $H(P)$ the entropy of $G(P)$ and by $H(\bar{P})$ the entropy of $\bar{G}(P)$.

Entropy plays an important role in the sorting under partial information problem. 

The first reason is explained by the following result due to K\&K. In particular, it implies $\log e(P) = \Theta(nH(\bar{P}))$. Thus the information theoretic lower bound and the entropy of the incomparability graph of $P$ are tightly related.

\begin{lemma}[\cite{KK95}]
\label{lem:kk}
For any poset $P$ of order $n$,
$\log e(P) \leq nH(\bar{P}) \leq \min \{\log e(P) + \log e \cdot n, c_1 \log e(P)\},$
where $c_1 = (1+7\log e) \simeq 11.1$.
\end{lemma}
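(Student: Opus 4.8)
The plan is to prove the three inequalities of Lemma~\ref{lem:kk} more or less independently, each relying on a suitable description of linear extensions in terms of stable sets of the incomparability graph $\bar G(P)$. The unifying observation is that a stable set of $\bar G(P)$ is exactly a chain of $P$, and that a linear extension of $P$, read off in order, partitions $V$ canonically into ``levels'' — but more useful here is the bijective/surjective correspondence between antichain-ordered data and points of $\STAB(\bar G(P))$ coming from the fractional relaxation. Concretely, for the lower bound $\log e(P) \le nH(\bar P)$, I would recall (this is the standard ``entropy of a perfect graph'' fact, since comparability and incomparability graphs of posets are perfect) that for a perfect graph $G$ one has $H(G) + H(\bar G) = \log n$ after appropriate normalization, or more directly I would use the probabilistic/counting argument: pick a uniformly random linear extension $L$; for each vertex $v$ let $x_v$ be the probability that $v$ is the minimum of its ``block'' in a random greedy peeling, or rather use the classical argument that assigns to each vertex its normalized rank and shows the resulting vector lies in $\STAB(\bar P)$ while $-\frac1n\sum\log x_v \ge \frac1n\log e(P)$ by concavity of $\log$ and the fact that a uniformly random linear extension can be encoded with $\log e(P)$ bits distributed among the $n$ coordinates. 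I expect this lower bound to be the cleanest of the three.

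For the upper bound $nH(\bar P) \le \log e(P) + (\log e)\, n$, the plan is to exhibit an explicit feasible point $x \in \STAB(\bar P)$ whose entropy $H(x)$ is at most $\frac1n\log e(P) + \log e$. The natural candidate comes from the greedy chain decomposition (or equivalently from peeling off maximal antichains repeatedly): decompose $V$ into chains $C_1, \dots, C_k$ obtained greedily, and set $x_v = 1/|C_i|$ for $v \in C_i$. Since each $C_i$ is a chain, hence a stable set of $\bar G(P)$, and these stable sets partition $V$, the convex combination $\sum_i (|C_i|/n)\,\chi^{C_i}$ equals this $x$ and lies in $\STAB(\bar P)$. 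Then $nH(x) = \sum_i |C_i|\log|C_i|$, and one needs $\sum_i |C_i|\log|C_i| \le \log e(P) + (\log e)\,n$. The point is that $e(P) \ge$ (product of the factorials $|C_i|!$) divided by the number of ways the chains can be interleaved — no, more carefully: since $P$ restricted to $C_1 \cup \dots \cup C_k$ contains the $k$ disjoint chains, $e(P)$ is at least $\prod_i |C_i|!$ only if the chains are incomparable, which they need not be; instead I would bound $e(P) \ge \binom{n}{|C_1|, \dots, |C_k|}^{-1} n!$-type quantities are wrong too. The correct route, and the one I would pursue, is to use that any linear extension of $P$ yields a linear extension of the subposet that is the disjoint union of the chains, but we want a lower bound on $e(P)$; so instead bound via $e(P) \ge \prod_i |C_i|!$ when the $C_i$ are an antichain of chains — and for a greedy decomposition one argues the total ``interleaving loss'' is controlled. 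Then Stirling, $|C_i|! \ge (|C_i|/e)^{|C_i|}$, gives $\log\prod_i |C_i|! \ge \sum_i |C_i|\log|C_i| - (\log e)\sum_i|C_i| = nH(x) - (\log e)\,n$, which rearranges to the claim. The main obstacle here is exactly the interleaving issue: a greedy chain decomposition need not consist of pairwise incomparable chains, so $e(P) \ge \prod_i |C_i|!$ is not immediate; I would handle this by instead invoking the bound from \cite{POP_SICOMP} on the entropy lost when forgetting cross-chain relations, or by a direct induction peeling one chain at a time and tracking the factor lost at each step. I expect to spend most of the effort making this step airtight.

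For the upper bound $nH(\bar P) \le c_1 \log e(P)$ with $c_1 = 1 + 7\log e$, I would follow K\&K's original argument but present it cleanly. The idea is to combine the previous inequality with a separate bound showing $n \le O(\log e(P))$ whenever $\bar G(P)$ has positive entropy, i.e., whenever $P$ is not a chain. Specifically, $H(\bar P) \ge 1/n$ is false in general, but one shows $\log e(P) \ge c\, n H(\bar P)$ for a constant $c$ by a more refined argument: K\&K use a probabilistic peeling where at each step, because the incomparability graph restricted to the remaining elements has entropy bounded below by the current per-element entropy contribution, a random linear extension reveals at least a constant number of bits per element over the course of the peeling. I would reproduce this via the inequality $\log e(P) \ge \sum (\text{local entropy terms})$ and bound the sum below by a constant times $nH(\bar P)$ using that $x \mapsto -\log x$ is large when $x$ is small. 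Concretely: combine $nH(\bar P) \le \log e(P) + (\log e)\,n$ with the fact — to be proven — that $n \le \frac{7\log e}{1+7\log e}\cdot\frac{c_1}{\log e}\log e(P)$, i.e., $(\log e)\,n \le 7 (\log e)\cdot\frac{\log e(P)}{\text{something}}$; the clean statement is $(\log e)\, n \le (c_1 - 1)\log e(P)$, which would follow if $n \le 7\log e(P)$, but that is false for chains. The resolution is that when $H(\bar P)$ is small the term $(\log e)\,n$ dominates and one uses a better lower bound $\log e(P) \ge$ (number of elements in nontrivial antichains) rather than $n$; combining cases gives the constant. The main obstacle in this third part is getting the bookkeeping of the two regimes (entropy large vs.\ small) to combine into the single clean constant $1 + 7\log e$, and I would structure the proof to first establish the two ``pure'' inequalities above and then optimize the trade-off between them.

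Overall, I would present the lemma as three short paragraphs: (i) a counting/concavity argument for $\log e(P) \le nH(\bar P)$; (ii) the greedy-chain feasible point plus Stirling for $nH(\bar P) \le \log e(P) + (\log e)\,n$, citing \cite{POP_SICOMP} to control cross-chain interleaving; (iii) a case analysis combining (ii) with a lower bound on $\log e(P)$ in terms of $n$ valid when $P$ is far from a chain, yielding the constant $c_1$. The single hardest point I anticipate is step (ii)'s interleaving loss, and the whole argument hinges on having the right lower bound for $e(P)$ in terms of a greedy chain decomposition.
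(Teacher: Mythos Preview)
This lemma is cited from \cite{KK95}; the paper does not give its own proof but does sketch K\&K's argument for the ``easy part'' $\log e(P) \le nH(\bar P) \le \log e(P) + (\log e)\,n$ in Section~\ref{sec:tight_bound}. That argument is quite different from yours and is worth knowing: take the optimal $x$ for $H(P)$; it determines a box of volume $2^{-nH(P)}$ contained in $\STAB(G(P))$ and (by optimality) a simplex of volume $(n^n/n!)\,2^{-nH(P)}$ containing it. Stanley's theorem gives $\mathrm{vol}\,\STAB(G(P)) = e(P)/n!$. Comparing volumes and using $H(P)+H(\bar P)=\log n$ (perfection) yields both inequalities at once. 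The paper also notes that the upper bound alone follows more directly from the interval formulation of $H(P)$: the optimal intervals give a box of volume $2^{-nH(P)}$ inside the order polytope $O(P)$, whose volume is $e(P)/n!$. The constant $c_1 = 1+7\log e$ is simply quoted; the paper replaces it by the sharp $c_1=2$ via a separate induction (Theorem~\ref{th:c1}).

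Your proposal, by contrast, tries to build a feasible point of $\STAB(\bar G(P))$ out of a greedy chain decomposition and bound its entropy combinatorially. There are genuine gaps. First, a concrete error: you write $x_v = 1/|C_i|$ and claim this equals $\sum_i (|C_i|/n)\,\chi^{C_i}$, but that convex combination gives $x_v = |C_i|/n$, not $1/|C_i|$; and the point with $x_v = 1/|C_i|$ is typically \emph{not} in $\STAB(\bar G(P))$ at all, since an antichain meeting each $C_i$ once gives $\sum_i 1/|C_i|$, which can exceed $1$. Second, and more serious, your whole step (ii) hinges on $e(P) \ge \prod_i |C_i|!$, which you correctly flag as false for a general chain decomposition and never repair --- ``citing \cite{POP_SICOMP}'' does not help, as the bounds there go in the direction of controlling how much \emph{larger} the greedy entropy is than $H(\bar P)$, not of lower-bounding $e(P)$ by a product of factorials. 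Third, your lower-bound argument in (i) is too vague to assess; the clean route is the simplex-volume comparison above, which the paper explicitly says it does not know how to avoid. In short, the volume/Stanley argument is the right tool here, and your chain-decomposition plan, as stated, does not close.
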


The second reason is that, while computing $e(P)$ is $\#P$-complete \cite{BW91}, computing $H(\bar{P})$ can be done in polynomial time by solving the convex minimization problem (\ref{def-H}), as we now explain. When $G = \bar{G}(P)$, the stable set polytope $\STAB(G)$ has a known description in terms of linear inequalities. Although the number of inequalities is (in most cases) exponential, the corresponding separation problem can be solved efficiently. Hence (\ref{def-H}) can be solved by the ellipsoid algorithm. 
(To be precise, the ellipsoid algorithm will actually {\em approximate}
the optimum of (\ref{def-H}) to any fixed precision, in polynomial time.)

Much of this favorable behaviour is due to the perfection of $\bar{G}(P)$. We recall that a graph $G$ is {\sl perfect\/} if $\omega(H)=\chi(H)$ holds for every induced subgraph $H$ of $G$, where $\omega(H)$ and $\chi(H)$ denote the clique and chromatic numbers of $H$, respectively. If $G$ is perfect, then its complement $\bar{G}$ is also perfect~\cite{L72}. It is known that the comparability graph $G(P)$ of $P$ is perfect, and therefore so is the incomparability graph $\bar{G}(P)$ of $P$. The latter statement is known as Dilworth's Theorem. The following basic result is a manifestation of convex programming duality (see for instance \cite{S95} for a proof).

\begin{lemma}
\label{lem:perf}
Assume $G$ is a perfect graph with vertex set $V$ and order $n$, and let $x \in \mathbb{R}^V$ and $z \in \mathbb{R}^V$ be feasible solutions to (\ref{def-H}) for $G$ and $\bar{G}$, respectively. Then $x$ and $z$ are optimal iff $x_v z_v = 1/n$ for all $v \in V$. In particular, $H(G) + H(\bar G)=\log n$.
\end{lemma}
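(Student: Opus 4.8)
The plan is to exploit convex-programming duality between the two minimization problems defining $H(G)$ and $H(\bar G)$. First I would set up the Lagrangian dual of the convex program \eqref{def-H} for a perfect graph $G$. Since $G$ is perfect, a classical theorem of Fulkerson--Lov\'asz--Chv\'atal tells us that $\STAB(G)$ is exactly the \emph{fractional} stable set polytope, i.e.\ the set of nonnegative vectors $x$ satisfying $\sum_{v\in K} x_v \le 1$ for every clique $K$ of $G$; and likewise $\STAB(\bar G)$ is described by the clique inequalities of $\bar G$, which are precisely the inequalities indexed by stable sets of $G$. The objective $-\tfrac1n\sum_v \log x_v$ is strictly convex on the positive orthant, so the minimum in \eqref{def-H} is attained at a unique point in the interior of the positive orthant, and Slater's condition holds. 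Hence strong duality applies and the KKT conditions are necessary and sufficient for optimality.

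Next I would write down the KKT stationarity condition. Introducing a multiplier $\lambda_K \ge 0$ for each clique inequality of $G$, stationarity at an optimal $x$ reads $\tfrac1{n x_v} = \sum_{K \ni v} \lambda_K$ for every $v$, together with complementary slackness $\lambda_K \big(1 - \sum_{v\in K} x_v\big) = 0$. The key observation is that the multiplier vector, suitably interpreted, is a point of $\STAB(\bar G)$: the vector $z$ with $z_v := \tfrac1{n x_v}$ is a nonnegative combination (with coefficients $n\lambda_K$, summing by feasibility-tightness to at most $1$ over any stable set of $G$) of characteristic vectors of cliques of $G$, i.e.\ of \emph{stable sets of $\bar G$}; so $z \in \STAB(\bar G)$. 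Then the relation $x_v z_v = 1/n$ holds coordinatewise by construction. One checks that this $z$ satisfies the KKT conditions for the program defining $H(\bar G)$ — indeed those conditions are exactly the mirror image, with the roles of $x$ and $z$ swapped — so $z$ is optimal for $\bar G$. This proves the "only if" direction: if $x$ is optimal for $G$ then the $z$ with $x_v z_v = 1/n$ is optimal for $\bar G$, and hence (since the optimum of $\bar G$ is unique) any optimal $z$ for $\bar G$ satisfies $x_v z_v = 1/n$.

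For the converse, suppose $x$ and $z$ are feasible for $G$ and $\bar G$ respectively with $x_v z_v = 1/n$ for all $v$. I would run the argument the other way: feasibility of $z$ in $\STAB(\bar G)$ expresses $z$ as a subconvex combination of characteristic vectors of stable sets of $\bar G$ (= cliques of $G$), and reading off the coefficients as multipliers $\lambda_K$ verifies the KKT stationarity condition $\tfrac1{n x_v} = z_v = \sum_{K\ni v}\lambda_K$; complementary slackness follows from the tightness of the clique inequalities that actually appear. Hence $x$ is optimal for \eqref{def-H}, and symmetrically $z$ is optimal for $\bar G$. Finally, plugging $z_v = 1/(n x_v)$ into the objective gives
\begin{equation*}
H(G) + H(\bar G) = -\frac1n\sum_{v\in V}\log x_v - \frac1n\sum_{v\in V}\log\frac{1}{n x_v} = -\frac1n\sum_{v\in V}\log\frac1n = \log n.
\end{equation*}

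The main obstacle is purely bookkeeping: correctly identifying the dual program of \eqref{def-H} and matching its feasible region with $\STAB(\bar G)$ via the perfection of $G$ (so that clique inequalities suffice to describe both polytopes), and then checking that the multipliers combine into an honest point of $\STAB(\bar G)$ rather than something living in a larger relaxation. Everything else — strong duality, uniqueness of the optimum from strict convexity, the coordinatewise identity — is routine once the duality setup is in place. Since the paper cites \cite{S95} for the proof, I would keep the exposition at the level of this sketch.
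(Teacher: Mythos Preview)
Your plan is correct and is precisely the convex-programming duality argument the paper alludes to (the paper does not give its own proof but simply cites Simonyi's survey \cite{S95}, calling the lemma ``a manifestation of convex programming duality''). The forward direction is clean: from KKT stationarity $\tfrac{1}{nx_v}=\sum_{K\ni v}\lambda_K$ and complementary slackness you get $\sum_K\lambda_K=\sum_K\lambda_K\sum_{v\in K}x_v=\sum_v x_v\cdot\tfrac{1}{nx_v}=1$, so $z:=1/(nx)$ is an honest convex combination of incidence vectors of cliques of $G$, hence lies in $\STAB(\bar G)$.

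One small point to tighten in your converse: you assert that ``complementary slackness follows from the tightness of the clique inequalities that actually appear,'' but at that stage you have not yet established that any clique $K$ with $\lambda_K>0$ is tight for $x$. The missing one-line computation is
\[
1\ \ge\ \sum_K \lambda_K\ \ge\ \sum_K \lambda_K \sum_{v\in K} x_v\ =\ \sum_v x_v z_v\ =\ 1,
\]
which forces both $\sum_K\lambda_K=1$ and $\sum_{v\in K}x_v=1$ whenever $\lambda_K>0$. Alternatively, once the forward direction gives $H(G)+H(\bar G)=\log n$, the converse is immediate: if $x_vz_v=1/n$ then $\ent(x)+\ent(z)=\log n=H(G)+H(\bar G)$, so both inequalities $\ent(x)\ge H(G)$ and $\ent(z)\ge H(\bar G)$ are equalities.
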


Csisz{\'a}r et al.~\cite{CKLMS90} have characterized perfect graphs as the graphs that ``split graph entropy''. More precisely, they proved that $G$ is perfect if and only if, for {\em every\/} probability distribution $p$ on the vertex set of $G$, the sum of the entropies of $G$ and $\bar{G}$ with respect to $p$ (see the references for a precise definition of this) equals the (Shannon) entropy of $p$.

The algorithm of Kahn and Kim~\cite{KK95} is based on two main lemmas, Lemma \ref{lem:kk} above and the next lemma. Whenever $a$ and $b$ are incomparable elements of $P$, we denote by $P(a < b)$ the poset obtained by adding the relation $(a,b)$ to the partial order of $P$ and then closing transitively.

\begin{lemma}[\cite{KK95}]
\label{lem:kk2}
In any poset $P$ of order $n$ that is not a chain there are $a$, $b$ incomparable such that
$$
\max \{nH(\overline{P(a<b)}),nH(\overline{P(b<a)})\} \leq nH(\bar{P}) - c_2,
$$
where $c_2 = \log (1 + 17/112) \simeq 0.2$.
\end{lemma}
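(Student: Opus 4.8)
The plan is to argue on the incomparability side and to exploit perfection of $\bar{G}(P)$. Fix an optimal point $x \in \STAB(\bar{G}(P))$ attaining $H(\bar P)$, together with its complementary partner $z \in \STAB(G(P))$, characterized by $x_v z_v = 1/n$ for all $v$ (Lemma~\ref{lem:perf}). Since $\bar{G}(P)$ is perfect, $\STAB(\bar{G}(P))$ is described by the antichain inequalities $\sum_{v\in A} x_v \leqslant 1$, one per maximal antichain $A$ of $P$, and passing from $P$ to $P(a<b)$ only \emph{deletes} the inequalities coming from antichains that contain an element $\leqslant a$ together with an element $\geqslant b$. Hence $\STAB(\overline{P(a<b)})\supseteq\STAB(\bar{G}(P))$, and likewise $\STAB(\overline{P(b<a)})\supseteq\STAB(\bar{G}(P))$, so $x$ stays feasible for both enlarged polytopes and $nH(\overline{P(a<b)})\leqslant nH(\bar P)$ for \emph{every} incomparable pair. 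What must be shown is that some incomparable pair $a,b$ admits a single perturbation $x'$ of $x$, feasible for both enlarged polytopes, with $-\sum_v\log x'_v \leqslant -\sum_v\log x_v - c_2$.

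The device for serving both orientations at once is to draw the improvement only from inequalities that vanish under \emph{either} of them: every antichain of $P$ containing \emph{both} $a$ and $b$ fails to be an antichain of $P(a<b)$ and of $P(b<a)$ alike. So it suffices to pick $a,b$ inside a common antichain $A^*$ that is \emph{tight} for $x$ (that is, $\sum_{v\in A^*}x_v = 1$) and whose inequality is not implied by the remaining tight antichain inequalities. Such a pair exists because $P$ is not a chain: otherwise $x$, which only benefits from dropping antichain inequalities, would already be the optimum $\mathbf 1$ of the box $[0,1]^V$, forcing $H(\bar P)=0$; and one may take $|A^*|\geqslant 2$, since singleton antichain inequalities are always redundant. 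After deleting all antichains through both $a$ and $b$ simultaneously, $x$ is no longer optimal, so there is a direction $d$ feasible for the enlarged polytope with $\langle d,z\rangle>0$: concretely one raises $x$ on $a$, $b$, and on any vertices that the transitive closure makes comparable to everything, compensating by a small decrease along the still-tight antichains. The same $x'$ works for $P(b<a)$ as well, because the antichains it exploits are gone there too.

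It remains to upgrade this qualitative strict improvement to the absolute constant $c_2$, and this is the delicate part. Taking $x' = x + \delta d$ for a fixed (not infinitesimal) step $\delta$ and using concavity of $\log$ together with $z_v = 1/(nx_v)$ gives $-\sum_v\log x'_v \leqslant -\sum_v\log x_v - n\,\delta\langle d, z\rangle + (\text{second-order terms})$, so one needs a lower bound on $\langle d,z\rangle$ measured against the slack that deleting $A^*$ creates, followed by an optimization over $\delta$. I expect the main obstacle to be exactly this: choosing $A^*$, and the pair $a,b$ inside it, so that this slack — hence the gain — is bounded below by a \emph{universal} constant rather than by a quantity that might degrade to $0$. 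This forces a short case analysis on the cardinalities and weights of the tight antichains meeting $a$ and $b$, and the final one-parameter optimization (balancing how much $x_a,x_b$ can be raised against the compensating decrease elsewhere) is what yields the value $c_2 = \log(1+17/112)$, matching Kahn and Kim's constant.
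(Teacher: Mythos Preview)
The paper does not prove this lemma; it is quoted from \cite{KK95}, and only the method is sketched: Kahn and Kim choose $a,b$ from the optimal point for $G(P)$ (equivalently, the interval representation of Lemma~\ref{lem:intH}) and then, for \emph{each} outcome $P'\in\{P(a<b),P(b<a)\}$ separately, build a feasible $z'\in\STAB(\bar G(P'))$. The proof of Theorem~\ref{th:c1} in this paper follows the same two-modification template.

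Your plan diverges at the crucial point: you look for a \emph{single} $x'$ feasible for both enlarged polytopes, that is, in the polytope obtained from $\STAB(\bar G(P))$ by deleting only the antichain inequalities through both $a$ and $b$. That device fails. Take $P$ on $\{a,b,c,d\}$ with only the relations $a<c$ and $b<d$. The optimum is $x\equiv\tfrac12$, and the tight antichain constraints are $\{a,b\},\{a,d\},\{b,c\},\{c,d\}$. For the pair $a,b$ (and, by symmetry, for \emph{every} incomparable pair) the only antichain through both is $\{a,b\}$; after deleting it the remaining tight normals still generate the gradient, since $(2,2,2,2)=2\,\chi^{\{a,d\}}+2\,\chi^{\{b,c\}}$. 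Hence $x\equiv\tfrac12$ remains optimal in the intersection polytope and your best single $x'$ yields \emph{zero} improvement---yet $nH(\overline{P(a<b)})=3\log 3-2$, a drop of roughly $1.25$ from $nH(\bar P)=4$. Kahn and Kim capture this drop because in $P(a<b)$ the antichain $\{a,d\}$ also vanishes (while it survives in $P(b<a)$), and they exploit that outcome-dependent slack. So the gap is not merely the unfinished optimization at the end: the symmetric device itself is wrong. Your condition ``not implied by the remaining tight antichain inequalities'' ensures $A^*$ is facet-defining, but not that it is essential for \emph{this} optimum; any correct argument must let the modification depend on which of $a<b$ or $b<a$ holds.
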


\paragraph*{The Algorithm of K\&K and its Complexity}

Let $V$ denote the ground set of $P$. Given an optimal solution $x \in \mathbb{R}^V$ to (\ref{def-H}) for $G(P)$, K\&K show how to choose a pair $a$, $b$ as in Lemma \ref{lem:kk2}. Knowing the {\sl primal\/} solution $x$, this choice can be done efficiently (in $O(n^2)$ time).

Comparing $a$ and $b$ gives a new partial information $P' \in \{P(a<b),P(b<a)\}$. The key is that for any outcome, $nH(\bar{P'}) \leq nH(\bar{P})-c_2$. This is proved by modifying appropriately an optimal {\sl dual\/} solution, that is, an optimal solution $z \in \mathbb{R}^V$ to (\ref{def-H}) for $\bar{G}(P)$. By Lemma \ref{lem:perf}, $z_v = 1/(nx_v)$ for all $v \in V$. Knowing $x$, a new dual solution $z'$ can be efficiently constructed (in $O(n^3)$ time).

To determine the next comparison, the K\&K algorithm needs to compute an optimal solution $x'$ to (\ref{def-H}) for $G(P')$. Because the optimality of $z'$ is not guaranteed, letting $x'_v = 1/(nz'_v)$ for $v \in V$ does not work. This explains why their algorithm uses the ellipsoid algorithm before each comparison.

We have shown in \cite{POP_SICOMP} that $H(P)$ can be expressed via a convex minimization problem with $2n$ variables and at most $n^2$ constraints, making possible the use of interior point algorithms for computing $H(P)$ (this alternative formulation is described in Section~\ref{sec:tight_bound}).
Although this makes the K\&K algorithm more practical, this does not make it competitive with our algorithms in terms of running time since it is unlikely that computing $H(P)$ using interior point algorithms can be done in less than $O(n^4)$ time (plugging in in a straightforward way the number of variables and constraints in complexity bounds for interior point algorithms would yield a $O(n^6)$ complexity \cite{BV04}).

\paragraph*{Greedy Chain Decompositions}

Suppose we want to approximate the entropy $H(G)$ of a given perfect graph $G$. 
We have shown \cite{POP_SICOMP} that the following greedy heuristic performs very well. First, iteratively remove a maximum stable set in $G$. Denote by $S_1$, \ldots, $S_k$ the stable sets extracted from $G$. Second, construct the {\sl greedy point\/}
$$
x := \sum_{i=1}^k \frac{|S_i|}{n} \chi^{S_i}
$$ 
in $\STAB(G)$. The entropy of this point is
$$
\ent(x) = -\frac{1}{n} \sum_{v \in V} \log x_v
= \sum_{i = 1}^k -\frac{|S_i|}{n} \log \frac{|S_i|}{n}.
$$
Note that this is precisely the entropy of the probability distribution $\big\{\frac{|S_1|}{n},\ldots,\frac{|S_k|}{n}\big\}$.

\begin{theorem}[\cite{POP_SICOMP}]
\label{thm:greedy}
Let $G$ be a perfect graph on $n$ vertices and let $x$ be an arbitrary greedy point in $\STAB(G)$. 
Then, for every $\varepsilon > 0$,
$$
\ent(x) \leq (1+\varepsilon) H(G) + (1+\varepsilon) \log \Big(1 + \frac{1}{\varepsilon}\Big).
$$
\end{theorem}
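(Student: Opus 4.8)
The plan is to bound $\ent(x)$ by comparing the greedy stable-set sizes $|S_1| \geq |S_2| \geq \cdots \geq |S_k|$ against the profile of an optimal point $x^*$ achieving $H(G)$. Since $G$ is perfect, Lemma~\ref{lem:perf} gives us a handle on $x^*$ via the complementary optimal point $z^*$ for $\bar G$ satisfying $x^*_v z^*_v = 1/n$; in particular, the chromatic number and clique covers of $G$ are well-behaved, so the greedy procedure of iteratively extracting a maximum stable set does not get stuck far from optimal. The first step is therefore to set up the comparison: write $\alpha_i := |S_i|/n$, so $\ent(x) = \sum_i -\alpha_i \log \alpha_i = H(\{\alpha_i\})$, the Shannon entropy of the distribution $(\alpha_1,\ldots,\alpha_k)$.

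Next I would exploit the defining greedy property: because $S_1$ is a \emph{maximum} stable set of $G$, and because $G$ is perfect, $|S_1| = \alpha(G)$ is large relative to what any fractional stable-set assignment can do — concretely, for any $x^* \in \STAB(G)$ we have $\sum_v x^*_v \leq \alpha(G) = |S_1|$, since the all-ones functional is maximized over $\STAB(G)$ at a vertex. More usefully, after removing $S_1$ the graph $G - S_1$ is still perfect, and $H(G-S_1)$ can be related to $H(G)$. The key inductive estimate I expect to need is a relation of the form
\begin{equation*}
\ent(x) \;\leq\; \alpha_1 \log \tfrac{1}{\alpha_1} + (1-\alpha_1)\,\bigl[\text{entropy of the greedy point on } G - S_1\bigr] + (1-\alpha_1)\log\tfrac{1}{1-\alpha_1},
\end{equation*}
i.e.\ the chain-rule decomposition of Shannon entropy, peeling off $S_1$ and recursing on the remaining $(1-\alpha_1)$-fraction of the vertices. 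Combined with a lower bound on $H(G)$ in terms of $\alpha_1$ and $H(G-S_1)$ — which should follow from restricting an optimal point for $G$ to the coordinates outside $S_1$ and renormalizing — this sets up an induction on $n$.

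The heart of the argument, and the step I expect to be the main obstacle, is controlling the ``loss per peel.'' One wants to show that extracting the \emph{maximum} stable set (as opposed to the stable set an optimal $x^*$ would most heavily weight) costs at most a multiplicative $(1+\varepsilon)$ factor plus an additive $O(\log(1+1/\varepsilon))$ term, \emph{summed over all $k$ iterations}. The natural mechanism: since $|S_1| \geq n x^*_v$ summed appropriately, the greedy first weight $\alpha_1$ is at least the ``largest piece'' that $x^*$ implicitly uses, so peeling it removes at least as much entropy-mass as the optimal strategy would in one step, but the recursion on $G - S_1$ may have fewer vertices to amortize against — hence the additive slack term, which is where the $\log(1+1/\varepsilon)$ enters via an optimization $\min_{t}\,[t\log\frac1t + (1-t)(\cdots)]$ trading off the ``$\varepsilon$ tax'' against the additive constant. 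I would carry out the induction with hypothesis $\ent(x_G) \leq (1+\varepsilon)H(G) + (1+\varepsilon)\log(1+\tfrac1\varepsilon)$ for all smaller perfect graphs, split on whether $\alpha_1 \geq \tfrac{\varepsilon}{1+\varepsilon}$ or $\alpha_1 < \tfrac{\varepsilon}{1+\varepsilon}$: in the first (``big first piece'') case the single term $-\alpha_1\log\alpha_1$ is already comparable to $\varepsilon H(G)$ and one closes the induction directly; in the second case $\alpha_1$ is small, so all greedy pieces are small, $k$ is large, and $\ent(x)$ is close to $\log k \leq \log n$, which one bounds against $(1+\varepsilon)H(G)$ using that $H(G) \geq \log n - \log|S_1| = \log(1/\alpha_1) \geq \log\tfrac{1+\varepsilon}{\varepsilon}$. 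Reconciling the two cases at the threshold is the delicate bookkeeping I'd allot the most care to.
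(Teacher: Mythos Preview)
First, a framing remark: the paper does not prove Theorem~\ref{thm:greedy}. It is quoted from \cite{POP_SICOMP} and used as a black box, so there is no ``paper's own proof'' to compare against here. What follows is therefore an assessment of your sketch on its own merits.

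Your overall architecture---peel off the first greedy stable set $S_1$, use the chain rule
\[
\ent(x) = h(\alpha_1) + (1-\alpha_1)\,\ent(x'),\qquad \alpha_1 = |S_1|/n,
\]
and induct on $G' = G - S_1$---is natural, and the two structural inputs you identify,
\[
H(G) \geq (1-\alpha_1)\,H(G') \qquad\text{and}\qquad H(G) \geq -\log \alpha_1,
\]
are both correct (the first by restricting an optimal point, the second from $\alpha(G) = |S_1|$ as in Lemma~\ref{lem:bigchain}). The problem is that these two bounds together are \emph{not strong enough} to close the induction with the stated constants, and your case split does not repair this. Concretely: combining the chain rule, the inductive hypothesis for $G'$, and the first bound yields
\[
\ent(x) \;\leq\; h(\alpha_1) + (1+\varepsilon)H(G) + (1-\alpha_1)C,
\qquad C := (1+\varepsilon)\log\!\Big(1+\tfrac{1}{\varepsilon}\Big),
\]
so closing the induction requires $h(\alpha_1) \leq \alpha_1 C$. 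At your threshold $\alpha_1 = \varepsilon/(1+\varepsilon)$ one computes
\[
\frac{h(\alpha_1)}{\alpha_1} \;=\; \log\!\Big(1+\tfrac{1}{\varepsilon}\Big) + \tfrac{1}{\varepsilon}\log(1+\varepsilon),
\]
and the needed inequality $\tfrac{1}{\varepsilon}\log(1+\varepsilon) \leq \varepsilon\log(1+\tfrac{1}{\varepsilon})$ is \emph{false} for small $\varepsilon$ (the left side is $\Theta(1)$, the right side is $\Theta(\varepsilon\log\tfrac{1}{\varepsilon}) \to 0$). So the ``big first piece'' case already fails at the boundary. Your ``small $\alpha_1$'' case fares no better: the bound $H(G) \geq -\log\alpha_1 \geq \log(1+\tfrac{1}{\varepsilon})$ gives only a \emph{constant} lower bound on $H(G)$, while $\ent(x)$ can be as large as $\log n$; invoking ``$\ent(x)$ is close to $\log k \leq \log n$'' gives nothing to absorb into $(1+\varepsilon)H(G)+C$.

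The missing idea is that a one-step peel loses too much: you need to exploit \emph{all} the greedy constraints simultaneously. The useful global fact is that for every $i$ the restriction of an optimal $x^*$ to $G_i := G - S_1 - \cdots - S_{i-1}$ lies in $\STAB(G_i)$, whence $\sum_{j \geq i}\sum_{v \in S_j} x^*_v \leq \alpha(G_i) = |S_i|$. This family of ``tail'' inequalities (not just the single case $i=1$) is what allows the greedy profile $(|S_1|,\ldots,|S_k|)$ to be compared to $x^*$ tightly enough to obtain the $(1+\varepsilon)$ factor with the additive $(1+\varepsilon)\log(1+\tfrac{1}{\varepsilon})$; this is the route taken in \cite{POP_SICOMP}. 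Your inductive scheme only ever uses the $i=1$ instance of this family, which is why it cannot close.
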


In the context of the sorting under partial information problem, we apply the greedy heuristic to $\bar{G}(P)$. This gives a decomposition of $P$ into chains $C_1$, \ldots, $C_k$ that we call a {\sl greedy chain decomposition}. Although the fastest known algorithm for computing a maximum chain in a poset of order $n$ has complexity $O(n^2)$ (see \cite{Gbook}, Chapter 5), a greedy chain decomposition can be found in $O(n^{2.5})$ time, see Appendix~\ref{app:greedy}.

\section{A Tight Bound on the Entropy of an Incomparability Graph}
\label{sec:tight_bound}

K\&K conjectured that the value for the constant $c_1$ in Lemma~\ref{lem:kk} could be improved to $c_{1} = 1 + \log e \simeq 2.44$. We show that one can actually take $c_{1}=2$, which is best possible, as shown by the poset consisting of two incomparable elements.

\begin{theorem}
\label{th:c1}
For any poset $P$ of order $n$,
$$
n H(\bar{P}) \leq 2 \log e(P).
$$
\end{theorem}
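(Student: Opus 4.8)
The plan is to bound $nH(\bar P)$ from above by exhibiting a good feasible point in $\STAB(\bar G(P))$, using a chain decomposition of $P$, and then to compare the resulting quantity to $\log e(P)$ via a counting argument on linear extensions. First I would pick a partition of $P$ into chains $C_1,\ldots,C_k$ — most naturally the greedy chain decomposition, though for this bound any chain partition realizing $x:=\sum_i \frac{|C_i|}{n}\chi^{C_i}$ as a point in $\STAB(\bar G(P))$ will do (note that each $C_i$ is a stable set of $\bar G(P)$). Then $H(\bar P)\le \ent(x)=\sum_i -\frac{|C_i|}{n}\log\frac{|C_i|}{n}$, so it suffices to prove $\sum_i |C_i|\log\frac{n}{|C_i|}\le 2\log e(P)$, i.e. to compare the Shannon entropy of the block-size distribution (scaled by $n$) against $2\log e(P)$.

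The key step is a lower bound on $e(P)$ in terms of the chain sizes. The standard fact is that a partition into chains of sizes $c_1,\ldots,c_k$ forces $e(P)\ge \frac{n!}{c_1!\,c_2!\cdots c_k!}$, the multinomial coefficient: indeed, every linear extension of $P$ induces, on each pair of chains, one of the finitely many interleavings, and one can injectively reconstruct $P$'s linear extensions from the "global interleaving pattern" — more concretely, the number of linear extensions of the disjoint union of the chains (which is exactly the multinomial coefficient) is a lower bound for $e(P)$ since adding the cross-relations of $P$ only restricts. Taking logs, $\log e(P)\ge \log n! - \sum_i \log c_i!$. Now I would invoke the entropy/factorial estimate: $\log\binom{n}{c_1,\ldots,c_k}\ge n\sum_i \frac{c_i}{n}\log\frac{n}{c_i} - O(k\log n)$ is too lossy, so instead I would use the clean inequality $\log\frac{n!}{c_1!\cdots c_k!}\ge \sum_i c_i\log\frac{n}{c_i} - (\text{lower-order})$; to get the exact constant $2$ and avoid error terms, the cleaner route is the bound $n!\ge \prod_i \left(\frac{n}{c_i}\right)^{c_i}\cdot (\text{correction})$ combined with $c_i!\le c_i^{c_i}$, giving $\frac{n!}{\prod_i c_i!}\ge \frac{n!}{\prod_i c_i^{c_i}}$ and then $\log n!\ge n\log n - n\log e$, hence $\log e(P)\ge n\log n - n\log e - \sum_i c_i\log c_i = \sum_i c_i\log\frac{n}{c_i} - n\log e$. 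Since trivially $\log e(P)\ge \log n!\ge n\log n - n\log e$ is not directly what I want, I expect the honest computation to show $\ent(x)\cdot n=\sum_i c_i\log\frac{n}{c_i}\le \log e(P)+n\log e$, and then to close the gap to the factor $2$ I would additionally use Lemma~\ref{lem:kk}'s lower bound $\log e(P)\ge nH(\bar P)$ isn't circular here — rather, I would use the second, independent fact that $n\log e\le \log e(P)$ fails in general (e.g. an antichain has $e(P)=n!$), so the factor-$2$ must come from a sharper pairing: split $\sum_i c_i\log\frac{n}{c_i}$ and bound one copy by $\log\frac{n!}{\prod c_i!}\le\log e(P)$ and the other by the same quantity after a symmetrization, yielding $\le 2\log e(P)$.

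The main obstacle I anticipate is getting the constant exactly $2$ with no additive slack, for arbitrary (not necessarily balanced) chain sizes. The crude Stirling estimates leak an $O(n)$ term, which is fatal since $\log e(P)$ can be $\Theta(n)$; so the right approach is almost certainly to avoid Stirling entirely and argue combinatorially/inductively. Concretely, I would try induction on $n$: remove a well-chosen chain (say a longest one, $C_1$ of size $c_1$), relate $e(P)$ to $e(P-C_1)$ via the number of ways to weave $C_1$ into a linear extension of $P-C_1$ (which is at least $\binom{n}{c_1}/(\text{something controlled})$, and at least $1$ always), and track how $nH(\bar P)$ changes. The delicate point is that $H(\bar P)$ is not additive over the chain removal, so I would instead carry the explicit greedy point $x$ through the induction and bound its entropy increment against the increment of $2\log e(P)$, term by term, where the extremal case "two incomparable elements" ($n=2$, $e(P)=2$, $H(\bar P)=1$, giving $2\cdot 1 = 2\log 2$) pins down the tightness and tells me exactly which inequality must be equality along the way.
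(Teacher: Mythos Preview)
Your approach has a fatal flaw from the outset: the inequality you are trying to establish, $n\,\ent(x)\le 2\log e(P)$ for a chain-decomposition point $x$, is simply false.

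First, the multinomial bound is stated backwards. If $C_1,\dots,C_k$ is a chain partition of $P$, then the disjoint union $C_1\sqcup\cdots\sqcup C_k$ is a \emph{weakening} of $P$ (it has fewer relations), so every linear extension of $P$ is a linear extension of the disjoint union, giving
\[
e(P)\ \le\ e(C_1\sqcup\cdots\sqcup C_k)=\binom{n}{c_1,\dots,c_k},
\]
not $\ge$. Your phrase ``adding the cross-relations of $P$ only restricts'' is correct but yields the opposite conclusion from the one you draw. A chain decomposition thus produces an \emph{upper} bound on $\log e(P)$, which is useless for the desired direction.

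Second, and decisively, here is a counterexample to the target inequality itself. Let $P$ consist of a chain $a_1<\cdots<a_{n-1}$ together with one extra element $b$ satisfying $a_1<b<a_3$, so that $b$ is incomparable only with $a_2$. Then $e(P)=2$, hence $2\log e(P)=2$. Any chain decomposition has sizes $c_1=n-1$ and $c_2=1$, so
\[
n\,\ent(x)=(n-1)\log\frac{n}{n-1}+\log n,
\]
which already exceeds $2$ for $n=5$ and tends to infinity with $n$. The chain-decomposition point is therefore far too crude; no constant, let alone~$2$, is recoverable along this route. This is exactly what Theorem~\ref{thm:greedy} predicts: the greedy point satisfies only $\ent(x)\le(1+\varepsilon)H(\bar P)+O_\varepsilon(1)$, and after multiplying by $n$ the additive slack becomes $O_\varepsilon(n)$, which swamps $\log e(P)$ whenever the latter is bounded. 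Your proposed inductive fix (``carry the explicit greedy point $x$ through the induction'') inherits the same defect, since the point you carry is never close enough to optimal in posets like the one above.

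The paper's proof proceeds entirely differently. It works with the \emph{optimal} point for $H(P)$ via the interval representation of Lemma~\ref{lem:intH}, and inducts on the number of incomparable pairs rather than on chains removed. At each step it locates two incomparable elements $a,b$ whose optimal intervals share the same right endpoint, and shrinks their intervals in two carefully calibrated ways to produce extensions $P_1$ (with $a<b$) and $P_2$ (with $b<a$). The shrinking factors $\alpha_i,\beta_i$ are explicit functions of $\lambda=x_b/x_a$, chosen so that $\tfrac{1}{\sqrt{\alpha_1\beta_1}}+\tfrac{1}{\sqrt{\alpha_2\beta_2}}\ge 1$; since $e(P_1)+e(P_2)\le e(P)$, one of the $P_i$ satisfies $e(P_i)/e(P)\le 1/\sqrt{\alpha_i\beta_i}$, and the induction closes with the exact constant~$2$. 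The essential idea you are missing is that one must stay with the \emph{optimal} entropy point and degrade it in a controlled, local way, rather than replacing it wholesale by a combinatorial surrogate.
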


Before proving this result, we give an equivalent definition of the entropy of a poset in terms of consistent collections of intervals, that is used crucially in our proof of Theorem~\ref{th:c1}.

We say that a collection of open intervals $\{(y_{v^-},y_{v^+})\}_{v \in V}$, each of which is contained in the interval $(0,1)$, is {\sl consistent\/} with $P$ if $v <_P w$ implies that the interval for $v$ is entirely to the left of the interval for $w$, that is, $y_{v^+} \leq y_{w^-}$. We denote $\mathcal{I}(P)$ the set of all such collections of intervals.

As is easily seen \cite{POP_SICOMP}, $H(P)$ equals the minimum of 
$$- \frac{1}{n} \sum_{v \in V} \log x_v$$
over all vectors $x \in \mathbb{R}_+^V$ such that there exists a collection of intervals in $\mathcal{I}(P)$ where, for each $v \in V$, the interval for $v$ has length $x_v$. In other words, the following lemma holds.

\begin{lemma}
\label{lem:intH}
Let $P$ be a poset of order $n$ with ground set $V$. Then, we have
$$
H(P) = \min \Big\{ - \frac{1}{n} \sum_{v \in V} \log x_v : \exists \{(y_{v^-},y_{v^+})\}_{v \in V} \in \mathcal{I}(P) \textrm{ s.t. } \forall v \in V : x_v = y_{v^+} - y_{v^-}\Big\}.
$$
\end{lemma}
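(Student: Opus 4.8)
The plan is to establish the claimed equality by showing two inequalities, namely that the interval formulation gives a lower bound for $H(P)$ and also an upper bound. The starting point is the fundamental correspondence, due to \cite{CKLMS90}, between the entropy $H(P)$ of the comparability graph $G(P)$ and its characterization via $\STAB(G(P))$. Recall that $\STAB(G(P))$ is exactly the set of vectors $x \in \mathbb{R}_+^V$ that are dominated by a convex combination of characteristic vectors of cliques of $G(P)$, i.e., of chains of $P$ (using that $G(P)$ is perfect, so its clique polytope and fractional chromatic structure coincide). So I would first recast the right-hand side of Lemma~\ref{lem:intH} in polyhedral terms: a vector of interval lengths $(x_v)_{v \in V}$ arising from a consistent collection in $\mathcal{I}(P)$ should be shown to lie in $\STAB(G(P))$, and conversely every point of $\STAB(G(P))$ should be shown to be dominated by such a vector of interval lengths. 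Since the objective $-\frac1n\sum_v \log x_v$ is monotone decreasing in each coordinate, domination is enough: enlarging coordinates only decreases the objective, so the two minima agree.

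For the direction ``intervals $\Rightarrow$ point in $\STAB$'': given a consistent collection $\{(y_{v^-},y_{v^+})\}_{v\in V}$, I would partition $(0,1)$ at all the endpoints $y_{v^-}, y_{v^+}$ into finitely many elementary subintervals $J_1,\dots,J_m$ of lengths $\ell_1,\dots,\ell_m$ summing to (at most) $1$. For each elementary interval $J_j$, the set $A_j := \{v \in V : J_j \subseteq (y_{v^-},y_{v^+})\}$ of vertices whose interval covers $J_j$ is an \emph{antichain} of $P$: if $v <_P w$ then $y_{v^+}\le y_{w^-}$, so their intervals are disjoint and cannot both contain $J_j$. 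Equivalently $A_j$ is a stable set of $\bar G(P)$, i.e.\ a clique of $G(P)$ — wait, I need to be careful about which graph. Since $A_j$ is an antichain of $P$, it is a stable set of $G(P)$. Hmm, but $\STAB(G(P))$ is the convex hull of stable sets of $G(P)$, and stable sets of $G(P)$ are exactly antichains of $P$. So $x_v = y_{v^+}-y_{v^-} = \sum_{j : v \in A_j} \ell_j = \sum_j \ell_j (\chi^{A_j})_v$, exhibiting $x$ as a subconvex combination (coefficients summing to $\le 1$) of characteristic vectors of stable sets of $G(P)$; since $\STAB(G(P))$ is down-closed and contains $0$, this puts $x \in \STAB(G(P))$. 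The converse direction is where the real work lies: given $x \in \STAB(G(P))$, write $x \le \sum_i \lambda_i \chi^{A_i}$ with $A_i$ antichains of $P$, $\lambda_i \ge 0$, $\sum_i \lambda_i = 1$; I would then lay out the ``layers'' $A_1,\dots$ consecutively inside $(0,1)$, placing for layer $i$ the sub-subinterval $[\sigma_{i-1},\sigma_{i-1}+\lambda_i]$ (where $\sigma_i = \sum_{i'\le i}\lambda_{i'}$), and giving each vertex $v$ the union of the layer-intervals of the antichains containing it — but a union of intervals is not an interval. The fix is standard: process the antichains in an order compatible with a linear extension of $P$, or rather pick a linear extension $L$ of $P$ and, within the layer of width $\lambda_i$ assigned to $A_i$, place the elements of $A_i$ in the order given by $L$; this can be arranged so that each $v$ occupies a contiguous set of positions, and consistency with $P$ follows because $v <_P w$ forces, in every layer, $v$ to sit left of $w$. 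The cleanest route may be to invoke the interval representation of $G(P)$ directly, or to argue by induction on $n$ by peeling off a maximal antichain.

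The main obstacle, then, is the converse inclusion — specifically, arranging the interval representation so that each vertex gets a genuine interval (not a union of disjoint pieces) while preserving consistency with all of $P$'s relations simultaneously. Everything else is bookkeeping: the forward inclusion is a direct elementary-interval decomposition, and the monotonicity of $-\frac1n\sum\log x_v$ reduces the set equality of feasible points to a domination statement, which both directions supply. I expect the converse to be handled either by a careful sequential placement argument respecting a fixed linear extension of $P$, or — most economically — by citing the known fact that for a perfect graph $G$ the entropy $H(G)$ admits the ``interval'' description (which for comparability graphs is essentially the statement that $\STAB(G(P))$ is generated by antichains, already available since $G(P)$ is perfect and its cliques are the chains of $P$), as is done in \cite{POP_SICOMP}; since the lemma is explicitly attributed there, a short self-contained proof of the nontrivial direction plus a pointer to \cite{POP_SICOMP} is the intended level of detail.
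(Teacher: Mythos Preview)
The paper does not actually prove this lemma: it states it as ``easily seen'' and cites \cite{POP_SICOMP}. So there is no paper proof to compare against, only the assertion that the result is routine once one knows the reference.

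Your forward direction (intervals $\Rightarrow$ point of $\STAB(G(P))$) is clean and correct: slicing $(0,1)$ at the endpoints gives elementary subintervals, the set of vertices whose interval covers a given slice is an antichain of $P$ (hence a stable set of $G(P)$), and the length vector is the corresponding subconvex combination.

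For the converse you take an unnecessarily hard road. Writing $x$ as a convex combination of antichains and then trying to glue the layers into contiguous intervals per vertex does work, but the bookkeeping you flag (getting a single interval rather than a union) is genuine and awkward. The standard direct construction avoids it entirely: since $G(P)$ is perfect, $\STAB(G(P))$ is exactly $\{x \ge 0 : \sum_{v\in C} x_v \le 1 \text{ for every chain } C\}$. Given such an $x$, set
\[
y_{v^+} := \max\Big\{\sum_{w \in C} x_w : C \text{ is a chain of } P \text{ with maximum element } v\Big\},
\qquad y_{v^-} := y_{v^+} - x_v.
\]
The chain inequalities give $y_{v^+} \le 1$, and if $u <_P v$ then any chain ending at $u$ extends by $v$, so $y_{u^+} + x_v \le y_{v^+}$, i.e.\ $y_{u^+} \le y_{v^-}$. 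This produces a consistent family in $\mathcal I(P)$ with interval lengths exactly $x_v$, and the equality of the two minima is immediate. This is presumably the argument behind the ``easily seen'' in \cite{POP_SICOMP}; your proposal would benefit from replacing the layer-placement sketch with this two-line construction.
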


This new definition of the entropy of a poset has some advantages.

First, it yields a convex program with $2n$ variables and at most $n^2$ constraints for computing the entropy. This shows that the entropy of a poset can be computed with interior point algorithms.

Second, it gives a more intuitive framework to reason about the entropy of a poset. As an illustration we sketch short proofs of two results by Kahn and Kim~\cite{KK95}.

In order to show that $\log e(P) \leq nH(\bar{P}) \leq \log e(P) + \log e \cdot n$, the ``easy part'' of Lemma \ref{lem:kk}, K\&K consider an optimal solution $x$ to \eqref{def-H}. Because $x$ is feasible, it defines a box that is contained in $\STAB(G(P))$. (The defining property of this box is that it has $x$ and the origin as opposite vertices.) Because $x$ is optimal, it yields a simplex that contains $\STAB(G(P))$. Thus the box is contained in $\STAB(G)$, which is contained in the simplex. This gives inequalities between the volumes of these polytopes. The volume of the box is $2^{-nH(P)}$ and that of the simplex is $(n^n/n!)\,2^{-nH(P)}$. By invoking a beautiful result of Stanley~\cite{S86} relating the volume of $\STAB(G(P))$ to $e(P)$, and also $H(P) + H(\bar{P}) = \log n$ (see Lemma \ref{lem:perf}), K\&K derive the desired inequalities. Stanley~\cite{S86} proves that the stable set polytope $\STAB(G(P))$ and the {\sl order polytope} 
$$
O(P) := \{x \in \mathbb{R}^V : x \in [0,1]^V, x_v \leq x_w\ \text{whenever}\ v \leqslant_P w\}
$$ have the same volume. Because $O(P)$ canonically decomposes into $e(P)$ simplices, of volume $1/n!$ each, one obtains that the volume of $O(P)$, and thus $\STAB(G(P))$, is precisely $e(P)/n!$.

Now let $\{(y_{v^-},y_{v^+})\}_{v \in V}$ denote any optimal collection of intervals consistent with $P$. These intervals define another box of volume $2^{-nH(P)}$, this time contained in $O(P)$. This directly implies $nH(\bar{P}) \leq \log e(P) + \log e \cdot n$, without using Stanley's result. An elegant, elementary proof of the inequality $\log e(P) \leq nH(\bar{P})$ was given more recently by Brightwell and Tetali (\cite{BT03}, Theorem 5.2).

The second result of Kahn and Kim~\cite{KK95} is an adversarial strategy that forces any algorithm for sorting under partial information to perform a number of queries that is close to the lower bound. We give a short proof of this, namely that any algorithm can be forced to perform $\frac{1}{2} nH(\bar{P}) \geq \frac{1}{2} \log e(P)$ queries. Initially, compute an optimal collection of intervals $\{(y_{v^-},y_{v^+})\}_{v \in V}$ consistent with $P$. When faced with the query ``is $a \leqslant b$?'', answer ``yes'' if and only if $m_a \leq m_b$ in the current collection of intervals, where $m_v$ denotes the midpoint of the interval for $v \in V$. If the answer is ``yes'' (and $a \neq b$), replace the interval for $a$ by $(y_{a^-},m_a)$ and the interval for $b$ by $(m_b,y_{b^+})$. If the answer is ``no'', replace the interval for $a$ by $(m_a,y_{a^+})$ and the interval for $b$ by $(y_{b^-},m_b)$. Since $H(P) + H(\bar{P}) = \log n$, such an answer guarantees that each comparison decreases $nH(\bar{P})$ by at most $2$. Therefore, the number of comparisons performed is at least $\frac{1}{2} nH(\bar{P})$.

We now prove Theorem~\ref{th:c1}.

\begin{proof}[Proof of Theorem~\ref{th:c1}]
The proof is by induction on $n$ and, for $n$ fixed, on the number of incomparabilities in $P$. The result being true for $n = 1$, we assume $n \geq 2$. Consider an optimal vector $x \in \mathbb{R}_+^V$ and corresponding collection of open intervals $\{(y_{v^-},y_{v^+})\}_{v \in V}$.
Let $a \in V$ be such that $y_{a^+}$ is maximum.

If $a$ is comparable to all elements of $V$, then the induction hypothesis implies
$$
n H(\bar{P}) = (n-1) H(\overline{P - a}) \leq 2\log e(P - a) = 2\log e(P).
$$

Hence, we may assume that $a$ is incomparable to some element in $V$. Let $b$ be such an element
with $y_{b^+}$ maximum. Clearly, $y_{b^+} \leq y_{a^+}$.
In fact, it must be that $y_{b^+} = y_{a^+}$: Indeed, by our choice of $a$ and $b$,
we have $y_{c^+} \leq y_{b^+}$ for every $c\in V - \{a,b\}$. Thus, if $y_{b^+} < y_{a^+}$,
then one could extend to the right the interval corresponding to $b$
by an amount of $y_{a^+} - y_{b^+}$ and still
have a collection of intervals consistent with $P$. However, this new collection
defines a corresponding vector $x' \in \mathbb{R}_+^V$ such that
$$
- \frac{1}{n} \sum_{v \in V} \log x'_v
= - \frac{1}{n} \sum_{v \in V} \log x_v + \frac{1}{n} (\log x_{b} - \log x'_{b})
< - \frac{1}{n} \sum_{v \in V} \log x_v,
$$
contradicting the optimality of $x$.

Exchanging $a$ and $b$ if necessary, we may assume that $x_{a} \geq x_{b}$.
By shortening the intervals of $a$ and $b$ in two different ways,
we will define two collections of open intervals
$\{(y^{1}_{v^-},y^{1}_{v^+})\}_{v \in V}$ and $\{(y^{2}_{v^-},y^{2}_{v^+})\}_{v \in V}$.
In the first one, we will have $y^{1}_{a^+} \leq y^{1}_{b^-}$, while for the second $y^{2}_{b^+} \leq y^{2}_{a-}$ will hold. To this aim, we introduce a few quantities.

Let $\lambda := x_{b} / x_{a}$ (thus, $\lambda \in [0,1]$).
Let
$$
\alpha_{1} := \left\{
\begin{array}{ll}
\frac{1}{1-\lambda} & \textrm{ if } \lambda \leq \frac{1}{2} \\
2 & \textrm{ otherwise}
\end{array}
\right.
\qquad\qquad
\beta_{1} := \left\{
\begin{array}{ll}
1 & \textrm{ if } \lambda \leq \frac{1}{2} \\
2\lambda & \textrm{ otherwise}
\end{array}
\right.
$$
and
$$
\alpha_{2} := 2 / \lambda
\qquad\qquad
\beta_{2} := 2.
$$
The collection $\{(y^{1}_{v^-},y^{1}_{v^+})\}_{v \in V}$ equals
$\{(y_{v^-},y_{v^+})\}_{v \in V}$, except that
\begin{align*}
y^{1}_{a^+} &:= y_{a^-} + \frac{x_{a}}{\alpha_{1}} \\
y^{1}_{b^-} &:= y_{b^+} - \frac{x_{b}}{\beta_{1}}.
\end{align*}
Similarly, $\{(y^{2}_{v^-},y^{2}_{v^+})\}_{v \in V}$ equals
$\{(y_{v^-},y_{v^+})\}_{v \in V}$ with the following two exceptions:
\begin{align*}
y^{2}_{a^-} &:= y_{a^+} - \frac{x_{a}}{\alpha_{2}} \\
y^{2}_{b^+} &:= y_{b^-} + \frac{x_{b}}{\beta_{2}}.
\end{align*}

Let $P_{i}$ ($i=1,2$) be the interval order defined by $\{(y^{i}_{v^-},y^{i}_{v^+})\}_{v \in V}$, with $v \leqslant_{P_i} w$ whenever $y^{i}_{v^+} \leq y^{i}_{w^-}$.
Clearly, both $P_{1}$ and $P_{2}$ extend $P$.

We claim that there exists an index $i\in\{1,2\}$ such that
\begin{equation}
\label{eq:index}
\frac{e(P_{i})}{e(P)} \leq \frac{1}{\sqrt{\alpha_{i}\beta_{i}}}.
\end{equation}
This is proved below. Assuming that the claim is correct, let $x'\in \mathbb{R}_+^V$ be the vector defined by the collection
of open intervals $\{(y^{i}_{v^-},y^{i}_{v^+})\}_{v \in V}$.
This vector gives an upper bound on the entropy of $P_{i}$, namely
$$
H(P_{i}) \leq - \frac{1}{n} \sum_{v \in V} \log x'_v
= - \frac{1}{n} \sum_{v \in V} \log x_v + \frac{1}{n}\log\alpha_{i} + \frac{1}{n}\log\beta_{i}.
$$
Hence,
\begin{equation}
\label{eq:cost}
nH(P_{i}) \leq nH(P) + \log\left( \alpha_{i}\beta_{i} \right).
\end{equation}

Using \eqref{eq:index}, \eqref{eq:cost}, and the induction hypothesis on $P_{i}$,
we obtain
\begin{align*}
nH(\bar P) &= n\log n - nH(P) \\
&\leq n\log n - nH(P_{i}) + \log\left( \alpha_{i}\beta_{i} \right) \\
&= nH(\bar{P_{i}}) + \log\left( \alpha_{i}\beta_{i} \right) \\
&\leq 2\log e(P_{i}) + \log\left( \alpha_{i}\beta_{i} \right) \\
&\leq 2\log \frac{e(P)}{\sqrt{\alpha_{i}\beta_{i}}} + \log\left( \alpha_{i}\beta_{i} \right) \\
&= 2\log e(P).
\end{align*}

In order to prove the claim that there exists an index $i\in\{1,2\}$ such that \eqref{eq:index} holds, we show the following two inequalities:
\begin{eqnarray}
\label{eq:eP1eP2}
\frac{e(P_{1})}{e(P)} + \frac{e(P_{2})}{e(P)} &\leq &1\\
\label{eq:roots}
\frac{1}{\sqrt{\alpha_{1}\beta_{1}}} +
\frac{1}{\sqrt{\alpha_{2}\beta_{2}}} &\geq &1.
\end{eqnarray}

For proving \eqref{eq:eP1eP2}, it is enough to show
$a <_{P_{1}} b$ and $a >_{P_{2}} b$.
By definition of $\alpha_{1}$ and $\beta_{1}$, we have
$$
y^{1}_{a^+} = y_{a^-} + \frac{x_{a}}{\alpha_{1}}
= \left\{
\begin{array}{ll}
y_{a^-} + x_{a} - x_{b}  & \textrm{ if } \lambda \leq \frac{1}{2} \\
y_{a^-} + x_{a}/2 & \textrm{ otherwise}
\end{array}
\right.
$$
and (using $y_{a^+} = y_{b^+}$)
$$
y^{1}_{b^-} = y_{b^+} - \frac{x_{b}}{\beta_{1}}
= y_{a^-} + x_{a} - \frac{x_{b}}{\beta_{1}}
= \left\{
\begin{array}{ll}
y_{a^-} + x_{a} - x_{b}  & \textrm{ if } \lambda \leq \frac{1}{2} \\
y_{a^-} + x_{a}/2 & \textrm{ otherwise.}
\end{array}
\right.
$$
Hence, $a <_{P_{1}} b$. Also,
$$
y^{2}_{a^-} = y_{a^+} - \frac{x_{a}}{\alpha_{2}} = y_{a^+} - x_{b} / 2
$$
and
$$
y^{2}_{b^+} = y_{b^-} + \frac{x_{b}}{\beta_{2}}
= y_{a^+} - x_{b} + \frac{x_{b}}{\beta_{2}}
= y_{a^+} - x_{b} / 2,
$$
implying $a >_{P_{2}} b$. Therefore, \eqref{eq:eP1eP2} holds true.

We proceed and show \eqref{eq:roots}. The left-hand side of \eqref{eq:roots} is a function of $\lambda$, which we denote by $f(\lambda)$ for short. We have
$$
f(\lambda) = \left \{
\begin{array}{ll}
\displaystyle
\sqrt{1 - \lambda} + \frac{\sqrt{\lambda}}{2}  & \textrm{ if } \lambda \leq \frac{1}{2} \\[2ex]
\displaystyle
\frac{1}{2\sqrt{\lambda}} + \frac{\sqrt{\lambda}}{2}  & \textrm{ otherwise.}
\end{array}
\right.
$$
(Note that $\sqrt{1 - \lambda} + \frac{\sqrt{\lambda}}{2} =
\frac{1}{2\sqrt{\lambda}} + \frac{\sqrt{\lambda}}{2}$ for $\lambda=\frac12$.)
The first derivative of $f(\lambda)$ is
$$
f'(\lambda) = \left \{
\begin{array}{ll}
\displaystyle
\frac{1}{4\sqrt{\lambda}} - \frac{1}{2\sqrt{1 - \lambda}}  & \textrm{ if } \lambda \leq \frac{1}{2} \\[2ex]
\displaystyle
\frac{1}{4\sqrt{\lambda}} - \frac{1}{4\lambda^{3/2}}  & \textrm{ otherwise.}
\end{array}
\right.
$$
As the reader will easily check, $f'$ is positive over the open interval $(0, \frac{1}{5})$
and negative over $(\frac{1}{5}, 1)$. Since $f(0)=f(1)=1$, we deduce that
$f(\lambda) \geq 1$ for every $\lambda \in [0,1]$, as claimed.
This concludes the proof.
\end{proof}

Finally, we sketch a simple proof of the weaker inequality $nH(\bar{P}) \leq 4 \log e(P)$. We follow the same proof structure as above. Instead of picking $a \in V$ such that $y_{a^+}$ is maximum,  pick $a \in V$ such that $x_a$ is maximum. Then pick $b \in V - \{a\}$ such that the interval for $b$ contains the midpoint of the interval for $a$. If $e(P(a<b)) \leq e(P)/2$, then define $x'$ by replacing the interval for $a$ by its first quarter and the interval for $b$ by its last quarter. Otherwise, we have $e(P(b<a)) \leq e(P)/2$. In this case, define $x'$ by replacing the interval for $a$ by its last quarter and the interval for $b$ by its first quarter.

\section{Insertion Sort}
\label{sec:insertion}

We first propose an $O(n^2)$ sorting algorithm with query complexity $O(\log n \cdot \log e(P))$. It consists of first finding a maximum chain $C\subseteq P$, then iteratively inserting the remaining elements of $P - C$ in the chain $C$, using binary search, see Algorithm \ref{alg:insertion_sort}. In order to show that its query complexity is $O(\log n \cdot \log e(P))$, we need two lemmas.

\begin{algorithm}[ht]
\caption{Insertion sort-like algorithm for sorting under partial information} \label{alg:insertion_sort}
\begin{algorithmic}
\STATE \COMMENT{\textit{Phase 1 (preprocessing)}} 
\STATE find a maximum chain $C\subseteq P$ 
\STATE \COMMENT{\textit{Phase 2 (sorting)}}
\WHILE{$P - C\not= \emptyset$} 
\STATE remove an element of $P - C$ and insert it in $C$ with a binary search
\ENDWHILE
\RETURN $C$
\end{algorithmic}
\end{algorithm}

\begin{lemma}
\label{lem:bigchain}
Let $P$ be a poset of order $n$ and let $C$ be a maximum chain in $P$. Then $|C| \geq 2^{-H(\bar P)}n$.
\end{lemma}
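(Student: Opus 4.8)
The plan is to bound $H(\bar{P})$ from below directly from its definition \eqref{def-H} as the minimum of $-\frac{1}{n}\sum_{v}\log x_v$ over $x\in\STAB(\bar{G}(P))$. The crucial observation is that the stable sets of $\bar{G}(P)$ are precisely the chains of $P$; hence a maximum chain $C$ is a maximum stable set of $\bar{G}(P)$, so for every vertex $\chi^S$ of $\STAB(\bar{G}(P))$ we have $\sum_{v}\chi^S_v=|S|\le|C|$. Since the linear functional $x\mapsto\sum_{v}x_v$ attains its maximum over a polytope at a vertex, it follows that $\sum_{v}x_v\le|C|$ for \emph{every} $x\in\STAB(\bar{G}(P))$.

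Next I would apply Jensen's inequality to the convex function $t\mapsto-\log t$. For any feasible $x$ with finite objective value (so that all $x_v>0$),
\[
-\frac{1}{n}\sum_{v\in V}\log x_v \;\ge\; -\log\!\Big(\tfrac{1}{n}\sum_{v\in V}x_v\Big) \;\ge\; -\log\tfrac{|C|}{n} \;=\; \log\tfrac{n}{|C|},
\]
where the first step is Jensen and the second uses $\sum_{v}x_v\le|C|$ together with the fact that $-\log$ is decreasing. Taking the minimum over $x\in\STAB(\bar{G}(P))$ gives $H(\bar{P})\ge\log(n/|C|)$, and exponentiating yields $2^{-H(\bar{P})}\le|C|/n$, i.e.\ $|C|\ge 2^{-H(\bar{P})}\,n$.

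I do not expect a genuine obstacle here. The two points needing a little care are the identification ``stable set of $\bar{G}(P)$ $=$ chain of $P$'' (this is what bounds the coordinate sum by $|C|$ rather than merely by $n$), and the degenerate case where some $x_v=0$, which only makes the objective $+\infty$ and so is irrelevant to the minimum. For the record, an equivalent ``dual'' derivation is available: by Lemma~\ref{lem:perf} the claim is equivalent to $H(P)\le\log|C|$, and the point $z=\frac{1}{|C|}\chi^V\in\STAB(G(P))$ — feasible because Mirsky's theorem partitions $P$ into $|C|$ antichains — witnesses $H(P)\le-\frac1n\sum_{v}\log\frac{1}{|C|}=\log|C|$. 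The direct argument above is shorter and avoids Mirsky's theorem.
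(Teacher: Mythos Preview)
Your proof is correct and is essentially the same approach as the paper's: the paper simply cites the well-known inequality $H(G)\ge -\log(\alpha(G)/n)$ and applies it to $\bar G(P)$, whereas you supply a self-contained proof of that inequality via the bound $\sum_v x_v\le\alpha(\bar G(P))=|C|$ on $\STAB(\bar G(P))$ together with Jensen's inequality. Your alternative ``dual'' derivation through Lemma~\ref{lem:perf} and Mirsky's theorem is also valid and a nice observation, though not needed here.
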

\begin{proof}
It is well known~(\cite{K86,CFJ07}) that the entropy of a graph on $n$ vertices with stability number $\alpha$ is at least $-\log \frac{\alpha}n$.
The result follows by applying this to $\bar{G}(P)$.
\end{proof}

\begin{lemma}
\label{lem:redfac}
For all $x \in \mathbb{R}$, $1 - 2^{-x} \leq \ln 2 \cdot x$.
\end{lemma}

Now the number of comparisons performed by the algorithm is at most
\begin{eqnarray*}
\log n \cdot (n - |C|) & \leq & \log n \cdot ( n - 2^{-H(\bar P)}n ) \qquad \text{(Lemma~\ref{lem:bigchain})} \\
& \leq & \log n \cdot \ln 2 \cdot n H(\bar P) \qquad \text{(Lemma~\ref{lem:redfac})} \\
& = & O(\log n \cdot \log e(P))\qquad \text{(Lemma~\ref{lem:kk})}.
\end{eqnarray*}

This algorithm has the property that we can perform the preprocessing step only once, and sort all instances with the same partial information in time $O(\log n \cdot \log e(P)) + O(n)$. To achieve this, we store the maximum chain $C$ in a balanced binary search tree in time $O(n)$ and insert each remaining element in time $O(\log n)$.

\section{Merge Sort}
\label{sec:merge}

In order to improve on the previous algorithm, we use an approach similar to merge sort, see Algorithm \ref{alg:merge_sort}. This algorithm is illustrated in Figure~\ref{fig:merge_sort}.

\begin{algorithm}
\caption{Mergesort-like algorithm for sorting under partial information} \label{alg:merge_sort}
\begin{algorithmic}
\STATE \COMMENT{\textit{Phase 1 (preprocessing)}}
\STATE find a greedy chain decomposition $C_1$, \ldots, $C_k$ of $P$
\STATE $\mathcal{C} \gets \{C_1,\ldots,C_k\}$
\STATE \COMMENT{\textit{Phase 2 (sorting)}}
\WHILE{$|{\mathcal{C}}|>1$}
\STATE pick the two smallest chains $C$ and $C'$ in $\mathcal{C}$
\STATE merge $C$ and $C'$ into a chain $C''$, in linear time
\STATE remove $C$ and $C'$ from $\mathcal{C}$, and replace them by $C''$
\ENDWHILE
\RETURN the unique chain in $\mathcal{C}$
\end{algorithmic}
\end{algorithm}

\begin{figure}
\begin{center}
\includegraphics[scale=1]{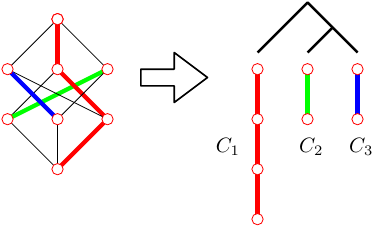}
\end{center}
\caption{\label{fig:merge_sort}Illustration of Algorithm~\ref{alg:merge_sort}.}
\end{figure}

Let $\tilde{g}$ denote the entropy of the probability distribution $\big\{\frac{|C_1|}{n},\ldots,\frac{|C_k|}{n}\big\}$, the distribution of the sizes of the chains in the greedy chain decomposition. Our next lemma bounds the query complexity of Algorithm \ref{alg:merge_sort} in terms of $\tilde{g}$.
\begin{lemma}
\label{lem:Huffman}
The query complexity of Algorithm \ref{alg:merge_sort} is at most $(\tilde{g}+1)n$.
\end{lemma}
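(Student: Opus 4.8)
The natural approach is to recognize Algorithm~\ref{alg:merge_sort} as running Huffman-style merging (always combining the two smallest current blocks) on the $k$ chains, where merging two chains of sizes $s$ and $t$ costs at most $s+t-1 < s+t$ comparisons by the standard linear-time merge. So I would first record that the total number of comparisons is bounded by the weighted external path length of the binary tree $T$ produced by this pairing process, where leaf $i$ carries weight $|C_i|$: precisely, if $d_i$ denotes the depth of leaf $i$ in $T$, then the cost of merging is at most $\sum_{i=1}^k |C_i|\, d_i$, since each chain $C_i$ participates in exactly $d_i$ merges and contributes at most $|C_i|$ comparisons to each (actually the $-1$ per merge only helps, so the bound is even a bit loose).

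Next I would invoke the classical fact from Huffman coding / Kraft's inequality: the tree built by repeatedly merging the two lightest blocks minimizes $\sum_i |C_i| d_i$ over all binary trees with these leaf weights, and this minimum is bounded above by $\big(\tilde g + 1\big)\sum_i |C_i|$. The cleanest self-contained way to see the upper bound is to exhibit \emph{some} binary tree whose weighted path length is at most $(\tilde g+1)n$ — namely the Shannon–Fano code: choose integer depths $d_i := \lceil -\log(|C_i|/n)\rceil$, which satisfy Kraft's inequality $\sum_i 2^{-d_i} \le \sum_i |C_i|/n = 1$, hence correspond to an actual binary tree (leaves may be at these depths or shallower). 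For that tree, $\sum_i |C_i| d_i \le \sum_i |C_i|\big(-\log(|C_i|/n) + 1\big) = n\sum_i \tfrac{|C_i|}{n}\big(-\log\tfrac{|C_i|}{n}\big) + n = n\tilde g + n = (\tilde g+1)n$. Since the Huffman pairing used by the algorithm does no worse than this particular tree, the same bound holds for it, and therefore the query complexity is at most $(\tilde g+1)n$.

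So the skeleton is: (1) merging cost $\le \sum_i |C_i| d_i$ for the algorithm's tree $T$; (2) $T$ is optimal among binary trees with leaf weights $|C_i|$ (greedy-merges-two-smallest is Huffman's algorithm); (3) the Shannon–Fano depths give a feasible competitor tree with weighted path length $\le (\tilde g+1)n$; (4) combine. I expect step (2) — the optimality of the bottom-up greedy merging — to be the only place requiring a genuine argument rather than bookkeeping; it is a textbook exchange argument (any optimal tree can be modified so the two lightest leaves are siblings at maximum depth, then induct), and the authors may well just cite it. A minor subtlety in step (1) is that the algorithm's merge tree has exactly $k$ leaves and $k-1$ internal nodes, and one must be careful that the "two smallest chains" are re-selected after each merge (so the weights at internal nodes are sums), which is exactly the Huffman process; the per-merge saving of $1$ comparison means the bound $(\tilde g+1)n$ is in fact not tight, but that is fine for a $\le$ statement.
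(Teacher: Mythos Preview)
Your proposal is correct and follows essentially the same approach as the paper: bound the total number of comparisons by the weighted external path length $\sum_i |C_i|\,d_i$ of the Huffman tree, then use that Huffman's average codeword length is within one bit of the entropy $\tilde g$. The only cosmetic difference is that the paper simply cites the ``within one bit'' fact from a coding-theory textbook, whereas you spell it out via Shannon--Fano depths and Kraft's inequality; your charging of comparisons (at most $s+t$ per merge) is also equivalent to the paper's per-element charging to the selected element.
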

\begin{proof}
Phase 2 of Algorithm \ref{alg:merge_sort} is a multiway merge of the chains $C_i$ extracted from $P$. The two smallest chains are iteratively merged, thereby forming a Huffman tree: this is a known strategy for merging sorted sequences of different lengths (see for instance~\cite{FB72}). Huffman codes have average codeword length within one bit of the entropy~\cite{TC}. Hence the average root-to-leaf distance in the tree with respect to the distribution $\big\{\frac{|C_1|}{n},\ldots,\frac{|C_k|}{n}\big\}$ is at most $\tilde{g} + 1$.

Merging two chains is done in linear time by iteratively choosing the minimum. Consider an element of the chain $C_i$. In the worst case, this element is compared at every node of the path from the leaf node corresponding to $C_i$, to the root of the tree. Every time we compare two elements at one node of the tree, we charge the comparison to the element that is selected. We denote by $t_i$ the length of the path to the $i$th chain $C_i$. Summing over all the elements, we get:
$$
\sum_i t_i |C_i| = n \sum_i t_i \frac{|C_i|}n \leq (\tilde{g} + 1)n,
$$
proving the lemma.
\end{proof}

The following theorem uses this bound and Theorem~\ref{thm:greedy}.
\begin{theorem}
\label{thm:nearopt}
For any $\varepsilon >0$, the query complexity of Algorithm \ref{alg:merge_sort} is at most
$$
(1+\varepsilon) \log e(P) + (1+\varepsilon) \Big(\log e + \log \Big(1+\frac{1}{\varepsilon}\Big)\Big)\,n + n = 
(1+\varepsilon) \log e(P) + O_\varepsilon(n).
$$
\end{theorem}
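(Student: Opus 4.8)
The plan is to combine the two results the theorem explicitly points to. By Lemma~\ref{lem:Huffman}, the query complexity of Algorithm~\ref{alg:merge_sort} is at most $(\tilde{g}+1)n$, where $\tilde{g}$ is the entropy of the probability distribution $\big\{\frac{|C_1|}{n},\ldots,\frac{|C_k|}{n}\big\}$ associated with the greedy chain decomposition. As noted in the preliminaries, $\tilde{g}$ is precisely the entropy $H(x)$ of the greedy point $x = \sum_i \frac{|C_i|}{n}\chi^{C_i} \in \STAB(\bar{G}(P))$ obtained by iteratively extracting maximum stable sets from $\bar{G}(P)$ (equivalently, longest chains from $P$). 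So the first step is simply to observe $\tilde{g} = \ent(x)$ for this greedy point $x$.

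Next I would apply Theorem~\ref{thm:greedy} to the perfect graph $G = \bar{G}(P)$ and this greedy point $x$: for every $\varepsilon > 0$,
$$
\tilde{g} = \ent(x) \leq (1+\varepsilon) H(\bar P) + (1+\varepsilon)\log\Big(1+\frac{1}{\varepsilon}\Big).
$$
Plugging this into the bound $(\tilde{g}+1)n$ from Lemma~\ref{lem:Huffman} gives query complexity at most
$$
(1+\varepsilon)\, n H(\bar P) + (1+\varepsilon)\log\Big(1+\frac{1}{\varepsilon}\Big)\, n + n.
$$
Finally, I would use the right-hand inequality of Lemma~\ref{lem:kk} in the form $nH(\bar P) \leq \log e(P) + \log e \cdot n$ to replace $nH(\bar P)$, yielding
$$
(1+\varepsilon)\log e(P) + (1+\varepsilon)\big(\log e + \log(1+\tfrac{1}{\varepsilon})\big)\,n + n,
$$
which is exactly the claimed expression; absorbing the $n$-terms into $O_\varepsilon(n)$ finishes the proof.

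There is no real obstacle here — the theorem is essentially an assembly of Lemma~\ref{lem:Huffman}, Theorem~\ref{thm:greedy}, and Lemma~\ref{lem:kk}, so the only point requiring a moment's care is the identification $\tilde{g} = \ent(x)$: one must check that the chains $C_1,\ldots,C_k$ produced by the greedy chain decomposition are indeed the maximum stable sets extracted from $\bar{G}(P)$ (true by definition, since a maximum stable set of $\bar{G}(P)$ is a longest chain of $P$), and that the greedy point they define has entropy equal to the entropy of their size distribution (the displayed computation $\ent(x) = \sum_i -\frac{|C_i|}{n}\log\frac{|C_i|}{n}$ given in the preliminaries). Note that one could alternatively bypass Lemma~\ref{lem:kk} and instead use $H(\bar P) = \log n - H(P)$ together with $nH(P) \geq \log n! \geq n\log n - n\log e$ (i.e.\ $nH(\bar P) \leq n\log e$ trivially, or the sharper $nH(\bar P) \leq \log e(P) + \log e\cdot n$), but invoking Lemma~\ref{lem:kk} directly is cleanest and is clearly the intended route given the phrasing ``This theorem uses this bound and Theorem~\ref{thm:greedy}.''
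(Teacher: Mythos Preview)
Your proposal is correct and follows essentially the same route as the paper: start from Lemma~\ref{lem:Huffman}, bound $\tilde g$ via Theorem~\ref{thm:greedy} applied to $\bar G(P)$, and then replace $nH(\bar P)$ using the inequality $nH(\bar P)\le \log e(P)+\log e\cdot n$ from Lemma~\ref{lem:kk}. The paper's proof is exactly this two-line chain of inequalities; your additional remark justifying $\tilde g=\ent(x)$ is already implicit from the preliminaries.
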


\begin{proof}
From Lemma \ref{lem:Huffman}, we infer that the query complexity is at most
\begin{eqnarray*}
(\tilde{g} + 1)\,n
& \leq & (1+\varepsilon) nH(\bar{P}) + (1+\varepsilon)\,\log \Big(1+\frac{1}{\varepsilon}\Big)\,n + n \qquad \text{(Theorem~\ref{thm:greedy})}\\
& \leq & (1+\varepsilon)\,(\log e(P) + \log e \cdot n) + (1+\varepsilon)\,\log \Big(1+\frac{1}{\varepsilon}\Big)\,n + n\qquad \text{(Lemma \ref{lem:kk})}.
\end{eqnarray*}
\end{proof}
We conclude that Algorithm \ref{alg:merge_sort} is an algorithm with query complexity at most $(1+\varepsilon)\log e(P) + O_\varepsilon(n)$, for any $\varepsilon > 0$. It is shown in Appendix~\ref{app:greedy} that the greedy chain decomposition can be performed in time $O(n^{2.5})$. This is actually the bottleneck of the algorithm, and the global complexity of Algorithm \ref{alg:merge_sort} is $O(n^{2.5})$ as well. Hence any improvement on the greedy chain decomposition algorithm would yield an improved sorting algorithm.

Note that again, we can reuse the chain decomposition obtained in the preprocessing phase for sorting any instance with the same partial information in time proportional to the query complexity.

\section{Cautious Merge Sort}
\label{sec:cautious_merge}

The query complexity of Algorithm \ref{alg:merge_sort} is not $O(\log e(P))$ because it completely ignores a large part of the partial information. Now, we show that using the partial information for the {\sl last\/} merge suffices to obtain an algorithm with query complexity $O(\log e(P))$.

The subproblem at hand is that of {\sl merging under partial information\/}. It is a special case of sorting under partial information, in which the given poset $P$ is covered by two chains $A$ and $B$, that is, $P$ is of width at most $2$. (The {\sl width\/} of a poset $P$ is the maximum size of an antichain of $P$.)

That problem was studied by Linial~\cite{L84}, who proposed an algorithm with query complexity $O(\log e(P))$. However, this algorithm requires computing polynomially many $\Theta (n)\times \Theta (n)$ determinants. In Section \ref{sec:MUPIsec}, we obtain an $O(n^2 \log^2 n)$ algorithm for the problem with query complexity at most $6 \log e(P)$. 

\begin{theorem}
\label{thm:mupired}
Suppose there exists an algorithm for the problem of merging under partial information with query complexity at most $c_3 \log e(P')$, given as partial information a poset $P'$ of order $n$ and width at most $2$. Then there exists an algorithm for the problem of sorting under partial information with query complexity at most $(9.09 + c_3) \log e(P)$.
\end{theorem}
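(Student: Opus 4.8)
The plan is to implement the "cautious merge sort" strategy described informally in the outline: extract a longest chain $A$ from $P$, sort $P - A$ recursively (or rather via Algorithm~\ref{alg:merge_sort}) into a single chain $B$, and then invoke the hypothesized merging algorithm on the poset induced by $A$ and $B$ together with the surviving relations of $P$ between them. The query complexity then decomposes as (cost of producing $B$ by merge sort on $P-A$) $+$ (cost of the final merge of $A$ and $B$). I would bound each summand by a constant times $\log e(P)$ and add.

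First I would set up notation: let $n = |P|$, $a = |A|$ where $A$ is a maximum chain, and let $P_B = P - A$, so $|P_B| = n - a$. Run Algorithm~\ref{alg:merge_sort} on $P_B$; by Theorem~\ref{thm:nearopt} (instantiated at a fixed convenient $\varepsilon$, say $\varepsilon = 1$, or whatever optimizes the final constant) this costs at most $2\log e(P_B) + O(n-a)$ comparisons and returns a chain $B$. Now I need two facts. (i) $\log e(P_B) \le \log e(P)$: every linear extension of $P_B$, interleaved with the fixed chain $A$ in, say, the canonical way, gives a distinct linear extension of $P$ — more simply, $P_B$ is an induced subposet of $P$ and deleting elements cannot increase the number of linear extensions, since a linear extension of $P$ restricts to one of $P_B$ and the map is surjective. (ii) The linear term $O(n-a)$ must be absorbed into $O(\log e(P))$. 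This is exactly where the choice of $A$ as a \emph{maximum} chain is used: by Lemma~\ref{lem:bigchain}, $a = |A| \ge 2^{-H(\bar P)} n$, so $n - a \le (1 - 2^{-H(\bar P)})\,n \le \ln 2 \cdot n H(\bar P)$ by Lemma~\ref{lem:redfac}, and $nH(\bar P) \le 2\log e(P)$ by Theorem~\ref{th:c1}. Hence $n - a \le 2\ln 2 \cdot \log e(P) \le 1.387\log e(P)$, and likewise $n - a = O(\log e(P))$ with an explicit small constant, which disposes of every $O_\varepsilon(n)$ term.

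For the final merge: let $P'$ be the poset on $A \cup B = V$ whose order is generated by the chain orders of $A$ and $B$ together with all relations $v \leqslant_P w$ with $v \in A$, $w \in B$ or vice versa. Then $P'$ has width at most $2$ (it is covered by the two chains $A$ and $B$), and crucially $P'$ refines $P$ in the sense that $e(P') \le e(P)$ — every linear extension of $P'$ is also a linear extension of $P$, since $P'$ contains every relation of $P$ that survives (the relations internal to $A$ are all present as $A$ is a chain; those internal to $B$ are present since $B$ is a chain containing all of $P_B$'s relations and more; those across are kept by construction). Applying the assumed merging algorithm to $P'$ costs at most $c_3 \log e(P') \le c_3 \log e(P)$. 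Adding the two contributions gives total query complexity at most $2\log e(P) + 2\ln 2 \cdot \log e(P) + c_3 \log e(P) = (2 + 2\ln 2 + c_3)\log e(P) \approx (3.39 + c_3)\log e(P)$; to reach the claimed $9.09 + c_3$ one uses the Theorem~\ref{thm:nearopt} bound more carefully (it also carries a $(1+\varepsilon)\log e \cdot n$ term, which must again be routed through Lemma~\ref{lem:bigchain}/Theorem~\ref{th:c1}), optimizing over $\varepsilon$.

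The main obstacle I anticipate is purely bookkeeping of constants: Theorem~\ref{thm:nearopt} produces a bound of the shape $(1+\varepsilon)\log e(P) + \big((1+\varepsilon)(\log e + \log(1+1/\varepsilon)) + 1\big)n$ on the $P_B$-sort, and \emph{every} occurrence of $n$ (here really $n-a$, which is the size of $P_B$, but bounding it by $n$ is wasteful — one should keep $n-a$ and use $|B| = n - a$) must be converted to a multiple of $\log e(P)$ via $n - a \le 2\ln 2 \log e(P)$. Choosing $\varepsilon$ to minimize the resulting coefficient $(1+\varepsilon) + 2\ln2\big((1+\varepsilon)(\log e + \log(1+1/\varepsilon)) + 1\big)$ is a one-variable calculus exercise that yields the constant $9.09$. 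The conceptual content — $e(P_B) \le e(P)$, $e(P') \le e(P)$, and the maximum-chain bound turning the linear slack into logarithmic slack — is straightforward; no deep new idea is needed beyond assembling Theorems~\ref{th:c1}, \ref{thm:nearopt} and Lemmas~\ref{lem:bigchain}, \ref{lem:redfac}.
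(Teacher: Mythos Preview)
Your proposal is correct and follows the paper's proof essentially line for line: extract a maximum chain $A$, merge-sort $P-A$ via Theorem~\ref{thm:nearopt}, bound $|P-A|\le \ln 2\cdot nH(\bar P)\le 2\ln 2\,\log e(P)$ using Lemmas~\ref{lem:bigchain}, \ref{lem:redfac} and Theorem~\ref{th:c1}, use $e(P-A)\le e(P)$ and $e(P')\le e(P)$, and then optimize the constant over $\varepsilon$. Your final expression $(1+\varepsilon)+2\ln 2\big((1+\varepsilon)(\log e+\log(1+1/\varepsilon))+1\big)$ is exactly the paper's (after converting $\log$ to $\ln$), and the paper simply plugs in $\varepsilon=0.35$ to get $9.09$; your intermediate ``$3.39$'' figure is an artifact of dropping the $(1+\varepsilon)(\log e+\log(1+1/\varepsilon))$ factor in front of $|P-A|$, which you correctly flag and repair in the last paragraph.
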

\begin{proof}
Let Algorithm \ref{alg:MUPI} be the hypothesized algorithm for merging under partial information. 
(Such an algorithm will be given in Section \ref{sec:MUPIsec}.) 
Consider the following algorithm, illustrated in Figure~\ref{fig:mupireduction}.

\begin{algorithm}[h!]
\caption{Improved mergesort-like algorithm for sorting under partial information} \label{alg:cautious_merge_sort}
\begin{algorithmic}[1]
\STATE find a maximum chain $A \subseteq P$
\STATE \label{line:merge_B} apply Algorithm \ref{alg:merge_sort} to the poset $P - A$, yielding a chain $B$
\STATE \label{line:call_MUPI} apply Algorithm \ref{alg:MUPI} to the current partial information $P'$
\RETURN the resulting chain
\end{algorithmic}
\end{algorithm}

\begin{figure}
\begin{center}
\includegraphics[scale=1]{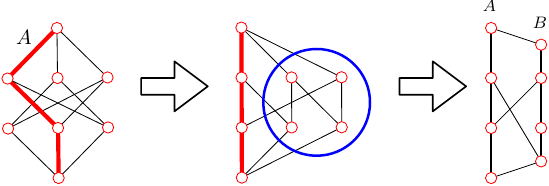}
\end{center}
\caption{\label{fig:mupireduction}Illustration of Algorithm~\ref{alg:cautious_merge_sort}.}
\end{figure}

From Lemma~\ref{lem:bigchain}, we have $|A|\geq 2^{-H(\bar P)} n$, and therefore (using Lemma~\ref{lem:redfac}):
\begin{equation}
\label{eq:rest}
|B| = |P-A| \leq n (1-2^{-H(\bar P)}) \leq \ln 2 \cdot nH(\bar P) .
\end{equation}
Now from Theorem~\ref{thm:nearopt} the number of comparisons in lines \ref{line:merge_B} and \ref{line:call_MUPI} of Algorithm \ref{alg:cautious_merge_sort} is at most
$$
\begin{array}{cl}
& (1+\varepsilon) \log e(P-A) + \Big( (1+\varepsilon)\big(\log e + \log (1 + \frac{1}{\varepsilon})\big) + 1 \Big) |P-A| + c_3 \log e(P')\\[2ex]
\leq & (1+\varepsilon) \log e(P) + \Big( (1+\varepsilon)\big(1 + \ln (1 + \frac{1}{\varepsilon})\big) + \ln 2 \Big)\,nH(\bar P) + c_3 \log e(P) \qquad \text{(from (\ref{eq:rest}))}\\[2ex]
\leq & \Big( 1 + \varepsilon + 2 \Big( (1+\varepsilon)\big(1 + \ln (1 + \frac{1}{\varepsilon})\big) + \ln 2 \Big) + c_3\Big) \log e(P) \qquad \text{(Theorem~\ref{th:c1})}\\[2ex]
\leq & (9.09 + c_3) \log e(P) \qquad \text{(taking $\varepsilon = 0.35$)}.
\end{array}
$$
\end{proof}

Assuming that the global complexity of Algorithm \ref{alg:MUPI} is $O(T(n))$, the results of the previous section imply that the global complexity of Algorithm \ref{alg:cautious_merge_sort} is $O(n^{2.5}) + O(T(n))$.

\section{Merging under Partial Information}
\label{sec:MUPIsec}

In this section, we assume that $P$ is covered by two disjoint chains, denoted by $A$ and $B$. First, we describe a structural result by K\"orner and Marton~\cite{KM88} concerning the entropy of a bipartite graph. Second, we show how to use this to obtain an $O(n^2 \log^2 n)$ algorithm with query complexity at most $c_3 \log e(P)$, with $c_3 = 6$.

Before proceeding, we state a lemma providing several properties of the incomparability graph of $P$ that we repeatedly use subsequently. The proof is straightforward, thus omitted.

\begin{lemma}
\label{lem:width_2_prop}
Let $P$ be a poset covered by two disjoint chains $A$ and $B$, and let $G = \bar{G}(P)$. Then:
\begin{enumerate}[(i)]
\item The graph $G$ is bipartite, with bipartition $A$, $B$;
\item The neighborhood $N(u)$ of any vertex $u$ in $G$ is an interval in the opposite chain (thus, $G$ is {\em biconvex});
\item Consider two vertices $u$ and $v$ in the same chain, say $A$, and such that $u \leqslant_P v$. Let $[c_u,d_u]$ and $[c_v,d_v]$ denote the intervals of $B$ defined by $N(u)$ and $N(v)$, respectively. Then, we have $c_u \leqslant_P c_v$ and $d_u \leqslant_P d_v$. In particular, $N(w)$ is contained in the interval $[c_u,d_v]$ of $B$ with endpoints $c_u$ and $d_v$, whenever $w$ belongs to $A$ and $u \leqslant_P w \leqslant_P v$.
\end{enumerate}
\end{lemma}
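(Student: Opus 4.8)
The plan is to verify the three items directly from the definitions, using only transitivity of $\leqslant_P$ together with the fact that $A$ and $B$ are each chains. For (i), I would note that any two elements of the same chain are comparable in $P$, hence non-adjacent in $\bar{G}(P)$; so every edge of $G$ joins a vertex of $A$ to a vertex of $B$, which shows that $G$ is bipartite with bipartition $(A,B)$.

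For (ii), fix $u\in A$ (the case $u\in B$ being symmetric), take elements $b_1\leqslant_P b_2\leqslant_P b_3$ of $B$ with $b_1,b_3\in N(u)$, and show $b_2\in N(u)$, i.e.\ that $u$ is incomparable to $b_2$: if $u<_P b_2$ then $u<_P b_3$ by transitivity, contradicting $b_3\in N(u)$; if $b_2<_P u$ then $b_1<_P u$, contradicting $b_1\in N(u)$. Hence $N(u)$ is a convex subset, i.e.\ an interval, of the chain $B$, which is exactly the assertion that $G$ is biconvex.

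For (iii), let $u\leqslant_P v$ in $A$ and write $N(u)=[c_u,d_u]$ and $N(v)=[c_v,d_v]$ as intervals of $B$ afforded by (ii) (treating a vertex with empty neighborhood separately, or by convention, so that the interval statements remain meaningful). To obtain $d_u\leqslant_P d_v$ I would argue by contradiction: if $d_v<_P d_u$, then $d_u\notin N(v)$, so $v$ is comparable to $d_u$; the subcase $v<_P d_u$ yields $u<_P d_u$ via $u\leqslant_P v$, contradicting $d_u\in N(u)$, and the subcase $d_u<_P v$ yields $d_v<_P v$ via $d_v<_P d_u$, contradicting $d_v\in N(v)$. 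The inequality $c_u\leqslant_P c_v$ follows symmetrically, by considering whether $c_v<_P u$ (which contradicts $c_v\in N(v)$ through $c_v<_P u\leqslant_P v$) or $u<_P c_v$ (which, with $c_v<_P c_u$, contradicts $c_u\in N(u)$). The ``in particular'' clause is then immediate: for $w\in A$ with $u\leqslant_P w\leqslant_P v$, applying the monotonicity just proved to the pairs $(u,w)$ and $(w,v)$ gives $c_u\leqslant_P c_w$ and $d_w\leqslant_P d_v$, whence $N(w)=[c_w,d_w]\subseteq[c_u,d_v]$.

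None of this presents a genuine obstacle — consistent with the paper calling the proof straightforward. The only points needing a little care are the bookkeeping of which endpoint lies in which neighborhood throughout the case analysis, and fixing a convention for vertices with empty neighborhood so that the endpoint inequalities and containments in (iii) are still well posed.
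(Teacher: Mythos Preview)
Your proof is correct and proceeds exactly as one would expect for a result the paper declares ``straightforward, thus omitted'': a direct verification of each item from transitivity of $\leqslant_P$ and the chain property of $A$ and $B$. The only care points you flag---tracking endpoints through the case split and handling empty neighborhoods by convention---are precisely the right ones, and there is nothing to add.
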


\subsection{The Entropy of Bipartite Graphs}
\label{sec:H_bip}

As noted above in Lemma \ref{lem:width_2_prop}(i), the incomparability graph $\bar{G}(P)$ of $P$ is a bipartite graph. K\"orner and Marton~\cite{KM88} describe a method for computing the entropy of {\sl any\/} bipartite graph (see Theorem 3.8 in Simonyi's survey on graph entropy~\cite{S95}). Below, $h$ denotes the binary entropy function. Thus, we have $h(\xi) = -\xi \log \xi - (1-\xi) \log (1-\xi)$ for $\xi \in (0,1)$, and $h(0) = h(1) = 0$.

\begin{theorem}[\cite{KM88}]
\label{thm:KM}
Let $G$ be a bipartite graph of order $n$, with bipartition $A$, $B$. Then, one can find partitions $A = A_1 \cup \cdots \cup A_k$ and $B = B_1 \cup \cdots \cup B_k$ such that
$$
H(G) = \sum_{i=1}^k \frac{|A_i| + |B_i|}{n} \, h\left(\frac{|A_i|}{|A_i| + |B_i|}\right).
$$
\end{theorem}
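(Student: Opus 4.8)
The plan is to establish Theorem~\ref{thm:KM} by exploiting the perfection of bipartite graphs together with the convex-duality characterization from Lemma~\ref{lem:perf}, and by analysing the structure of optimal stable sets in a bipartite graph. The key observation is that in a bipartite graph $G$ with bipartition $A$, $B$, a stable set is obtained by choosing, on each connected component, either a subset of the $A$-side or a subset of the $B$-side together with all isolated-type freedom allowed; more usefully, one can restrict attention to the \emph{maximal} stable sets, which on each component are exactly ``all of the $A$-part'' or ``all of the $B$-part''. The idea is that an optimal point $x \in \STAB(G)$ for~\eqref{def-H} can be taken to be a convex combination $x = \sum_j \mu_j \chi^{S_j}$ of such maximal stable sets, and that on the support of each component the coordinates of $x$ are constant on the $A$-side and constant on the $B$-side (by a symmetrisation/averaging argument: permuting vertices inside one side of a component is a graph automorphism, so averaging over the symmetric group preserves feasibility and does not increase the convex objective $-\frac1n\sum\log x_v$). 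This forces $x$ to be determined, on a component with $a$ vertices in $A$ and $b$ vertices in $B$, by a single parameter $\xi \in [0,1]$ giving $x_v = \xi$ for $v$ on the $A$-side and $x_v = 1-\xi$ on the $B$-side.

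The next step is to carry out the optimisation. Once $x$ has the above form, with one parameter $\xi_c$ per connected component $c$ of sizes $(a_c, b_c)$, the objective becomes $-\frac1n\sum_c\bigl(a_c\log\xi_c + b_c\log(1-\xi_c)\bigr)$, which is separable; minimising each summand $a_c\log\xi_c + b_c\log(1-\xi_c)$ over $\xi_c$ gives $\xi_c = a_c/(a_c+b_c)$ and summand value $-(a_c+b_c)\,h\!\bigl(a_c/(a_c+b_c)\bigr)$. Summing over components and grouping components with the same ratio (or simply indexing components as $i = 1,\dots,k$ and setting $A_i$, $B_i$ to be the intersections of the $i$th component with $A$, $B$) yields exactly the claimed formula $H(G) = \sum_i \frac{|A_i|+|B_i|}{n}\,h\!\bigl(\frac{|A_i|}{|A_i|+|B_i|}\bigr)$. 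I would also check that the partition can be merged across components with equal ratios if one wants the $A_i$ to be ``interval-like'' but, as stated, any partition into the connected components works, so no merging is needed.

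The two places requiring care are: first, justifying rigorously that an optimal $x$ may be assumed component-wise constant on each side — this is where I expect the main obstacle to lie. The clean way is convexity plus automorphism-averaging: if $x$ is optimal and $\pi$ is an automorphism of $G$, then $\pi(x)$ is also feasible and optimal (the objective is permutation-symmetric), hence so is the average $\frac1{|\Gamma|}\sum_{\pi\in\Gamma}\pi(x)$ over any subgroup $\Gamma$ of automorphisms, by convexity of $-\log$ and convexity of $\STAB(G)$ — and this average \emph{strictly} improves unless $x$ was already $\Gamma$-invariant, so the unique optimum is invariant under the full product of symmetric groups acting within each side of each component, forcing the two-value-per-component structure. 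One must note that $\STAB(G)$ is indeed invariant under vertex permutations inside a colour class of a component: permuting vertices of $A$ within a component is a graph automorphism since all such vertices have, up to this permutation, the same adjacency pattern restricted to that component — here I would invoke biconvexity/bipartiteness only to the extent that the argument is really about components, and within one colour class of one component the relevant fact is simply that the stable-set polytope is automorphism-invariant. Second, I would double-check the boundary behaviour: components that are single vertices (degree $0$) have $b_c = 0$ or $a_c = 0$, giving $\xi_c \in\{0,1\}$ and contributing $h(0)=h(1)=0$ times their size, i.e.\ nothing — consistent with the fact that an isolated vertex can be assigned $x_v = 1$, contributing $-\log 1 = 0$ to the entropy. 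Finally, feasibility of the constructed optimal $x$ should be verified directly: the point with value $a_c/(a_c+b_c)$ on the $A$-side and $b_c/(a_c+b_c)$ on the $B$-side of component $c$ is the convex combination $\frac{a_c}{a_c+b_c}\chi^{B_c} + \frac{b_c}{a_c+b_c}\chi^{A_c}$ of the two maximal stable sets of that component, hence lies in $\STAB(G)$, and its entropy matches the formula, so it is both an upper bound and (by the argument above) exactly $H(G)$.
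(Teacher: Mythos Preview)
Your argument has a fatal gap at precisely the point you flagged as delicate: the claim that permuting vertices of $A$ inside a connected component of $G$ is a graph automorphism is simply false. Two $A$-vertices in the same bipartite component need not have the same neighbourhood, so your averaging step is illegitimate and the conclusion that an optimal $x$ is constant on each colour class of each component does not follow. A concrete counterexample: take $A=\{a_1,a_2\}$, $B=\{b_1,\dots,b_5\}$, with $a_1$ adjacent only to $b_1$ and $a_2$ adjacent to all of $b_1,\dots,b_5$. This graph is connected, but the unique optimum of \eqref{def-H} is $x_{a_1}=x_{b_1}=1/2$, $x_{a_2}=1/5$, $x_{b_2}=\cdots=x_{b_5}=4/5$, with entropy $\tfrac{2}{7}h(1/2)+\tfrac{5}{7}h(1/5)\approx 0.801$. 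The best point that is constant on each side is $x_{a_i}=2/7$, $x_{b_j}=5/7$, with entropy $h(2/7)\approx 0.863$, strictly worse. So the partition in the theorem is \emph{not} the partition into connected components.

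What actually happens, and what the paper records immediately after the statement, is that the blocks $A_i$, $B_i$ are built by an iterative ratio-maximisation: $A_i$ is a subset of the remaining $A$-vertices maximising $|A_i|/|N_{G'}(A_i)|$, and $B_i:=N_{G'}(A_i)$. In the example above the first iteration picks $A_1=\{a_1\}$, $B_1=\{b_1\}$ (ratio $1$), and the second picks $A_2=\{a_2\}$, $B_2=\{b_2,\dots,b_5\}$ (ratio $1/4$), splitting a single connected component into two blocks with different ratios. The proof of optimality of this construction (which the paper cites from K\"orner--Marton rather than reproving) hinges on showing that the resulting $x$ lies in $\STAB(G)$ and that the point $z_v:=1/(nx_v)$ lies in $\STAB(\bar G)$, after which Lemma~\ref{lem:perf} certifies optimality; the non-obvious part is the feasibility of $z$, which uses precisely the extremality of the ratios and has no shortcut via symmetry.
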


The partitions are constructed iteratively. Let $N_G(X)$ denote the neighborhood of a set $X$ of vertices in the graph $G$. For $i \in \{1,\ldots,k\}$, K\"orner and Marton define $A_i$ as any subset of $A' := A - A_1 - \cdots - A_{i-1}$ that maximizes
\begin{equation}
\label{eq:bip_ratio}
\frac{|A_i|}{|N_{G'}(A_i)|}
\end{equation}
in the graph $G'$ obtained from $G$ by removing all vertices contained in some $A_j$ or some $B_j$ with $j < i$, and define $B_i$ as $N_{G'}(A_i)$. By convention, if there is a vertex $u$ in $A'$ that is isolated, then we let $A_i = \{u\}$ and $B_i = \varnothing$. If $A'$ is empty and $B' := B - B_1 - \cdots - B_{i-1}$ is not, we pick a vertex $v$ in $B'$, let $A_i = \varnothing$ and $B_i = \{v\}$.

Given partitions as in Theorem \ref{thm:KM}, the point $x \in \STAB(G)$ achieving the minimum in (\ref{def-H}) is obtained by simply letting
$$
x_u := \frac{|A_i|}{|A_i| + |B_i|}\ \ \text{whenever}\ \ u \in A_{i},
\qquad \text{and} \qquad
x_v := \frac{|B_i|}{|A_i| + |B_i|}\ \ \text{whenever}\ \ v \in B_{i}.
$$

\subsection{Local Optimality and rebalancing}
\label{sec:locimp}

Let $G = \bar{G}(P)$, and let $E$ denote the edge set of $G$. Because $G$ is bipartite, 
$$
\STAB(G) = \{x \in \mathbb{R}^V : x_u + x_v \leq 1\ \forall uv \in E,\ 0 \leq x_v \leq 1\ \forall v \in V\}.
$$
Consider a point $x$ in $\STAB(G)$. The point $x$ is a feasible solution of the convex program \eqref{def-H}. If $x_v = 0$ for some vertex $v \in V$, then the objective function value is infinite, and the point $x$ is useless. In order to prevent this, we mostly consider points in $\STAB^*(G) := \STAB(G) \cap \{x \in \mathbb{R}^V : x_v > 0\ \forall v \in V\}$. 

An edge $uv \in E$ is said to be {\sl tight\/} with respect to the solution $x$ if $x_{u} + x_{v}=1$. Let $G(x)$ denote the graph whose vertices are those of $G$ and whose edges are the edges of $G$ that are tight.

We begin with a lemma that governs much of the structure of $G(x)$, about edges that `cross'. Recall that $P$ is covered by two disjoint chains $A$ and $B$. We say that two edges $uv$ and $u'v'$ of $G$, with $u, u' \in A$ and $v, v' \in B$, {\sl cross\/} if $u <_P u'$ and $v' <_P v$, or $u' <_P u$ and $v <_P v'$. 

\begin{lemma}
\label{lem:crossing}
Let $P$ be a poset covered by two disjoint chains $A$, $B$, and let $G = \bar{G}(P)$. Consider a point $x \in \STAB(G)$ and two edges $uv, u'v' \in E(G)$ that are tight with respect to $x$, with $u, u' \in A$ and $v, v' \in B$. If $uv$ and $u'v'$ cross, then both $uv'$ and $u'v$ are edges of $G$, and both are tight with respect to $x$.
\end{lemma}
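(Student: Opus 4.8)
The plan is to exploit the two convexity/monotonicity facts from Lemma~\ref{lem:width_2_prop}, namely that neighborhoods are intervals in the opposite chain and that these intervals move monotonically along each chain. Suppose $uv$ and $u'v'$ cross; without loss of generality $u <_P u'$ in $A$ and $v' <_P v$ in $B$. First I would show that $uv'$ and $u'v$ are edges of $G$. Since $u$ is adjacent to $v$ and $v' <_P v$, and $u'$ is adjacent to $v'$, I can apply part (iii) of Lemma~\ref{lem:width_2_prop}: the interval $N(u)$ of $B$ has left endpoint $\le_P$ that of $N(u')$ and right endpoint $\le_P$ that of $N(u')$. Because $v'$ lies below $v \in N(u)$ and $v'$ is at or above the left endpoint of $N(u')$, which is at or below that of $N(u)$... more directly: $v'$ and $v$ both lie in the interval of $B$ spanned jointly by $N(u) \cup N(u')$, and using the nesting of endpoints, $v'$ lies in $N(u)$ (it sits between the left endpoint of $N(u)$, which is $\le_P$ the left endpoint of $N(u') \ni v'$, and $v \in N(u)$) and symmetrically $v$ lies in $N(u')$. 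Hence $uv', u'v \in E(G)$.

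For the tightness part, I would use the submodular-type inequality coming from tightness in a bipartite graph. Since $uv$ and $u'v'$ are tight, $x_u + x_v = 1$ and $x_{u'} + x_{v'} = 1$, so $x_u + x_v + x_{u'} + x_{v'} = 2$. On the other hand $uv'$ and $u'v$ are edges of $G$, so feasibility of $x$ in $\STAB(G)$ gives $x_u + x_{v'} \le 1$ and $x_{u'} + x_v \le 1$. Adding these, $x_u + x_{v'} + x_{u'} + x_v \le 2$. But the left-hand side equals $2$ by the previous line, so both inequalities must hold with equality, i.e.\ $x_u + x_{v'} = 1$ and $x_{u'} + x_v = 1$. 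Thus $uv'$ and $u'v$ are tight.

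The main obstacle is the first step — carefully deducing from Lemma~\ref{lem:width_2_prop}.(iii) that the two "diagonal" pairs $uv'$ and $u'v$ are in fact edges. This requires correctly tracking the partial order relations among the four endpoints of the neighborhood intervals $N(u) = [c_u, d_u]$ and $N(u') = [c_{u'}, d_{u'}]$ in $B$: from $u <_P u'$ we get $c_u \le_P c_{u'}$ and $d_u \le_P d_{u'}$, and from $uv$ tight with $v \in N(u)$ and $u'v'$ tight with $v' \in N(u')$ together with the crossing hypothesis $v' <_P v$, I need $c_u \le_P v' \le_P d_u$ (giving $uv' \in E$) and $c_{u'} \le_P v \le_P d_{u'}$ (giving $u'v \in E$). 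The inequality $v \le_P d_{u'}$ follows since $v \in N(u)$ so $v \le_P d_u \le_P d_{u'}$; the inequality $c_{u'} \le_P v$ follows since $c_{u'} \le_P v'$ (as $v' \in N(u')$) and $v' <_P v$; similarly $v' \le_P d_u$ since $v' <_P v \le_P d_u$, and $c_u \le_P v'$ since $c_u \le_P c_{u'} \le_P v'$. Once this bookkeeping is done, the tightness argument in the second paragraph is immediate. The symmetric case $u' <_P u$, $v <_P v'$ is handled by relabeling.
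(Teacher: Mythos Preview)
Your proof is correct and follows essentially the same route as the paper: invoke Lemma~\ref{lem:width_2_prop}(iii) to get that $uv'$ and $u'v$ are edges, then combine the two tightness equalities with the two feasibility inequalities for the new edges. The only cosmetic difference is that the paper derives a contradiction from assuming one of $uv'$, $u'v$ is not tight, whereas you add the two feasibility inequalities directly and compare to the sum $x_u+x_v+x_{u'}+x_{v'}=2$; these are the same argument in different dress.
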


\begin{proof}
By Lemma \ref{lem:width_2_prop}(iii), $uv'$ and $u'v$ belong to $E(G)$. Assume, by contradiction, that $uv'$ is not tight. Then
\begin{eqnarray*}
x_{v} &=& 1-x_{u} \quad \text{(because $uv$ is tight)}\\
    &>& x_{v'} \quad \text{(because $uv'$ is not tight)}\\
    &=& 1-x_{u'} \quad \text{(because $u'v'$ is tight)}\\
    &\geq& x_{v} \quad \text{(because $u'v$ is an edge and $x$ is feasible),}
\end{eqnarray*}
a contradiction. We conclude that both $uv'$ and $u'v$ are tight.
\end{proof}

The point $x \in \STAB(G)$ is called {\sl locally optimal\/} if, for every (connected) component $K$ of $G(x)$:
\begin{equation}
\label{K:weight}
x_{u} = \frac{|A \cap K|}{|K|}\ \ \text{for all}\ \ u \in A \cap K, 
\qquad\text{and}\qquad
x_{v} = \frac{|B \cap K|}{|K|}\ \ \text{for all} \ \ v \in B \cap K.
\end{equation}
We say that the component $K$ is {\sl balanced\/} if the local optimality condition \eqref{K:weight} holds. Otherwise $K$ is {\sl unbalanced}. 

Consider a point $x \in \STAB^*(G)$. A component of $G(x)$ is {\sl trivial\/} if it consists of a unique vertex, {\sl non-trivial\/} otherwise. A trivial component can be either balanced or unbalanced, in which case it is said to be {\sl loose}. Observe that a trivial component $\{v\}$ can be balanced (that is, $x_v=1$) only if $v$ is a cut-point of $P$, that is, $v$ is comparable to every other vertex of $P$. (Here we use the assumption $x_u > 0$ for the vertices $u \in N_G(v)$.)

The first part of the next lemma states that a component $K$ of $G(x)$ typically determines two (possibly trivial, or even empty) intervals, one in the chain $A$ and the other in the chain $B$. The exceptions are characterized by the following definition: we say that a component $L$ of $G(x)$ is an {\sl inlay\/} of another component $K$ if there exists a vertex $w \in L$ and vertices $u, u'' \in K$ in the same chain as $w$ (that is, $u, u'' \in A$ iff $w \in A$) such that $u \leqslant_P w \leqslant_P u''$. 

The second part of the lemma implies that $P$ induces a linear order on the non-trivial components of $G(x)$. This linear order naturally extends to all components of $G(x)$, provided that no such component is loose. 

Below, when $S$ and $T$ are two disjoint subsets of the poset $P$, we write $S \leqslant_P T$ whenever $u \leqslant_P v$ holds for every $u \in S$ and every $v \in T$.

\begin{lemma}
\label{lem:int}
Let $P$ be a poset covered by two disjoint chains $A$, $B$, and let $G = \bar{G}(P)$. Consider a point $x$ in $\STAB^*(G)$. Then
\begin{enumerate}[(i)]
\item if a component $L$ of $G(x)$ is an inlay of a component $K$ of $G(x)$, then $L$ is trivial and loose;
\item if $K$, $L$ are distinct non-trivial components of $G(x)$, then either $K \leqslant_P L$ or $L \leqslant_P K$.
\end{enumerate}
\end{lemma}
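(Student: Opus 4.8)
The plan is to analyze the structure of tight edges in $G(x)$ using Lemma~\ref{lem:crossing} together with the biconvexity properties of $\bar{G}(P)$ recorded in Lemma~\ref{lem:width_2_prop}. For part~(i), suppose $L$ is an inlay of $K$, witnessed by $w\in L$ and $u,u''\in K$ in the same chain as $w$ with $u\leqslant_P w\leqslant_P u''$. Without loss of generality $w,u,u''\in A$. The idea is to trace a tight path inside $K$ from $u$ to $u''$; since $K$ is connected in $G(x)$, such a path exists, and it alternates between $A$ and $B$. Using Lemma~\ref{lem:width_2_prop}(iii), the neighborhoods of the $A$-vertices along this path sweep monotonically through $B$, so some tight edge $ab$ of this path (with $a\in A\cap K$, $b\in B\cap K$) must ``straddle'' $w$ in the sense that it crosses every edge incident to $w$ in $G$ — or more precisely, one shows $b\in N_G(w)$ would force, via Lemma~\ref{lem:crossing}, that the edge $wb$ is tight, putting $w$ in the same component $K$ as $b$, a contradiction with $w\in L\neq K$. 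Hence $w$ has \emph{no} tight edge at all, i.e.\ the component $L=\{w\}$ is trivial; and since $w$ is sandwiched between $u$ and $u''$ which lie in $A$, $w$ is incomparable in $P$ to at least one vertex of $B$ (namely any $B$-vertex comparable-to-neither-side is ruled out, but a $B$-vertex in $N_G(u)\cap N_G(u'')$'s span is incomparable to $w$), so $w$ is not a cut-point of $P$. By the observation preceding Lemma~\ref{lem:int}, $\{w\}$ is therefore loose.

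For part~(ii), let $K$ and $L$ be distinct non-trivial components. By part~(i), neither is an inlay of the other, so the $A$-part of $K$ and the $A$-part of $L$ are ``interval-separated'' in the chain $A$: one lies entirely $\leqslant_P$ the other (and similarly for the $B$-parts). The plan is to first argue that the $A$-parts are comparable as sets — say $A\cap K \leqslant_P A\cap L$ — and then show the $B$-parts must be ordered the same way. For the latter, pick tight edges $ab$ in $K$ and $a'b'$ in $L$ with $a\in A\cap K$, $a'\in A\cap L$, $b\in B\cap K$, $b'\in B\cap L$; then $a<_P a'$. If we had $b'<_P b$, the edges $ab$ and $a'b'$ would cross, so by Lemma~\ref{lem:crossing} the edge $ab'$ is tight, which would merge $K$ and $L$ into a single component of $G(x)$ — contradiction. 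Hence $b<_P b'$ (they cannot be equal since $K\neq L$), giving $B\cap K\leqslant_P B\cap L$, and combined with the $A$-ordering this yields $K\leqslant_P L$. The remaining case $A\cap K \leqslant_P A\cap L$ failing is handled symmetrically, and the edge cases where $A\cap K$ or $B\cap K$ is empty are immediate since a non-trivial component has both parts nonempty.

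The main obstacle I anticipate is making the ``straddling'' argument in part~(i) fully rigorous: one needs to be careful that a tight path from $u$ to $u''$ inside $K$ really does pass a tight edge whose $B$-endpoint lies in $N_G(w)$, as opposed to jumping over the entire interval $N_G(w)$. This is where Lemma~\ref{lem:width_2_prop}(iii) does the real work — the monotone sweep of neighborhoods along the path guarantees that the interval $N_G(w)\subseteq B$ (which is nonempty and lies within the span determined by $N_G(u)$ and $N_G(u'')$) is met by the $B$-vertices of the path — but one must also rule out the degenerate possibility that the tight path only uses $A$-vertices whose neighborhoods skip past $N_G(w)$, which requires observing that consecutive $A$-vertices on a path in $G(x)$ share a common tight $B$-neighbor. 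The rest of the argument is a routine application of Lemma~\ref{lem:crossing} and the cut-point characterization of balanced trivial components.
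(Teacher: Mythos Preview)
Your treatment of part~(ii) matches the paper's argument. For part~(i), the overall strategy (crossing lemma plus biconvexity) is right, but your ``straddling'' formulation is confused and the obstacle you flag is real. You ask for a tight edge $ab$ in $K$ that ``crosses every edge incident to $w$ in $G$'', then restate this as ``$b\in N_G(w)$''; these are incompatible. If $a<_P w$ and $ab$ crosses every $ww'$ with $w'\in N_G(w)=[c,d]$, then $b>_P d$, so $b\notin N_G(w)$; symmetrically if $a>_P w$. What is actually needed is a vertex $v\in B\cap K$ that is tight-adjacent to some $a\leqslant_P w$ \emph{and} to some $a''\geqslant_P w$, both in $A\cap K$: then $N_G(v)\supseteq[a,a'']\ni w$, and any tight edge $ww'$ from $w$ crosses $av$ or $a''v$, forcing $wv$ tight by Lemma~\ref{lem:crossing} and putting $w$ into $K$.

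The paper obtains exactly this $v$ by first sharpening the hypothesis: replace $u,u''$ by \emph{consecutive} elements of $A\cap K$ (in the order of $A$) with $w$ strictly between them. Connectivity of $K$ in $G(x)$, together with Lemma~\ref{lem:crossing} applied along a tight $u$--$u''$ path, then produces a single $v\in B\cap K$ tight-adjacent to both $u$ and $u''$. This consecutive-pair reduction is the missing idea that dissolves the obstacle you flagged; without it your path-tracing argument can be salvaged but is considerably messier. Finally, once such a $v$ is in hand, looseness of $\{w\}$ is immediate: if $\{w\}$ were balanced then $x_w=1$, but $vw\in E(G)$ and $x_v>0$ give $x_w+x_v>1$, contradicting $x\in\STAB(G)$. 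No separate cut-point discussion is needed.
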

\begin{proof}
(i) Suppose otherwise. Let $w \in L$ and $u, u'' \in K$ be as above. Without loss of generality, we may assume that all three vertices belong to $A$ and $u' \notin K$ whenever $u' \in A$ and $u <_P u' <_P u''$. By Lemma \ref{lem:width_2_prop}(iii), because $K$ is a component of $G(x)$ containing $u$ and $u''$, there is a vertex $v$ of $K \cap B$ adjacent to both $u$ and $u''$ in $G(x)$.

By Lemma \ref{lem:width_2_prop}(ii), the neighborhood of $v$ in $G$ is an interval in $A$ containing $u$ and $u''$. Thus, it also contains $w$. Because $w$ does not belong to $K$, the edge $vw$ is not tight with respect to $x$. 

First, suppose that $L$ is non-trivial. Thus there exists $w' \in B, w' \neq v$ such that $ww' \in E(G)$ and the edge $ww'$ is tight with respect to $x$. However, $ww'$ crosses either $uv$ or $u''v$, implying in both cases that $vw$ is tight by Lemma \ref{lem:crossing}, a contradiction. Hence, $L$ is trivial and $L = \{w\}$.

Second, suppose that $L$ is balanced. Because $x \in \STAB^*(G)$, we have $x_w + x_v = 1 + x_v > 1$, a contradiction. Hence, $L$ is loose.

(ii) On the contrary, suppose that neither $K \leqslant_P L$ nor $L \leqslant_P K$ holds. From what precedes, neither $K$ nor $L$ is an inlay. Consequently, we may assume $K \cap A \leqslant_P L \cap A$ and at the same time $L \cap B \leqslant_P K \cap B$, without loss of generality. Let $uv$ and $u'v'$ be edges of the components $K$ and $L$, respectively, with $u,u' \in A$ and $v,v' \in B$. By our assumption, these two edges cross. Hence, by Lemma \ref{lem:crossing} the edge $uv'$ is tight with respect to $x$. This implies that $u$ and $v'$ are in the same component of $G(x)$, a contradiction. 
\end{proof}
  
Our algorithm for merging under partial information will take in input a locally optimal point $x \in \STAB(G)$. It will repeatedly modify $G$ and $x$, and then ``rebalance'' $x$ so that it becomes locally optimal again.

The rebalancing algorithm is described in Algorithm~\ref{alg:rebalance}. Its input is a point $x \in \STAB^*(G)$. Given a component $K$ of $G(x)$, the {\sl slack} %% find a better name??
of $K$ is defined as the real $\sigma$ minimizing 
$$\left|x_{v} + \sigma - \frac{|A \cap K|}{|K|}\right|$$
under the constraint that $x' \in \STAB^*(G)$, where $v$ is any vertex in $A\cap K$ and $x'$ is obtained from $x$ by adding $\sigma$ to $x_{w}$ for all $w\in A\cap K$, and substracting $\sigma$ from $x_{w}$ for all $w\in B\cap K$. In other words, $\sigma$ represents the maximum quantity by which we can ``rebalance'' $K$ without losing feasibility. (Note that the slack could be negative.)

\begin{algorithm}[h!t]
\caption{rebalancing Algorithm} \label{alg:rebalance}
\begin{algorithmic}[1]
\WHILE{there exists an unbalanced component $K$ in $G(x)$}
\STATE \label{line:compute_slack} compute the slack $\sigma$ of $K$
\STATE \label{line:add_slack} put $x_{v} := x_{v} + \sigma$ for all $v \in A \cap K$, and $x_{v} := x_{v} - \sigma$ for all $v \in B \cap K$
\ENDWHILE
\end{algorithmic}
\end{algorithm}

Here are a few properties of Algorithm~\ref{alg:rebalance} which are easy to check. Consider an iteration of the while-loop, and let $x'$ be the modified point $x$ at the end of the iteration. 

When the component $K$ is ``rebalanced'', either it becomes a balanced component of the graph $G(x')$, or there is at least one edge in $G$ between $K$ and another component of $G(x)$ that became tight with respect to $x'$, and hence $K$ is ``merged'' with other components of $G(x)$ into a single component of $G(x')$. 

In order to capture when this happens, we say that $K$ {\sl touches\/} another component $L$ of $G(x)$ if $K$ and $L$ are linked by an edge of $G$ (that is not tight with respect to $x$) and there exists no non-trivial component $M$ distinct from $K$ and $L$ such that some edge $yz$ with both endpoints in $M$ is ranked between $K$ and $L$, that is, $K \cap A \leqslant_P \{y\} \leqslant_P L \cap A$ and $K \cap B \leqslant_P \{z\} \leqslant_P L \cap B$, or $L \cap A \leqslant_P \{y\} \leqslant_P K \cap A$ and $L \cap B \leqslant_P \{z\} \leqslant_P K \cap B$ (we assume $y \in A$ and $z \in B$). 

It follows from Lemma \ref{lem:int}(ii) that $K$ touches at most two non-trivial components of $G(x)$. 

\begin{lemma}
\label{lem:rebalancing_touch}
Let $x \in \STAB^*(G)$ and $K$ be as above. If $K$ merges with a component $L$ of $G(x)$ then it touches $L$.
\end{lemma}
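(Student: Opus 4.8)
The plan is to argue by contradiction: suppose $K$ merges with some component $L$ of $G(x)$ during the iteration, but $K$ does not touch $L$. Since $K$ and $L$ merge, by the description of Algorithm~\ref{alg:rebalance} there is an edge $e = yz$ of $G$ joining $K$ and $L$ that becomes tight with respect to $x'$ (and was not tight with respect to $x$); say $y \in A$ and $z \in B$, with one of $y,z$ in $K$ and the other in $L$. Because $K$ does not touch $L$, by definition there must be a non-trivial component $M$, distinct from $K$ and $L$, together with an edge $y'z'$ having both endpoints in $M$ (with $y' \in A$, $z' \in B$) such that $y'$ lies (weakly, in $P$) between the $A$-parts of $K$ and $L$ and $z'$ lies between the $B$-parts of $K$ and $L$.

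The heart of the argument is then a crossing argument, using Lemmas~\ref{lem:crossing} and~\ref{lem:int}. First I would observe that $K$ and $L$, being joined by an edge of $G$, must be ``consecutive'' in the linear order that $P$ induces on the non-trivial components (Lemma~\ref{lem:int}(ii)): the $A$-part of one precedes the $A$-part of the other, and the $B$-parts are ordered the opposite way (otherwise the joining edge $yz$ would already force $K$ and $L$ to be the same component, since an edge of $G$ whose endpoints lie ``the same way'' in both chains would cross a spanning edge of one of the components and invoke Lemma~\ref{lem:crossing}). Without loss of generality assume $K \cap A \leqslant_P L \cap A$ and $L \cap B \leqslant_P K \cap B$. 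Then the edge $y'z'$ of $M$ sits strictly between them in both chains. Taking a spanning edge $u_K v_K$ of $K$ and a spanning edge $u_L v_L$ of $L$ (these exist since $K$, $L$ are non-trivial), the edge $y'z'$ crosses one of $u_K v_K$, $u_L v_L$ — in fact it crosses both, given the ``sandwiched'' position of $y', z'$. Applying Lemma~\ref{lem:crossing} to the crossing pair $y'z'$ and $u_K v_K$ (both tight with respect to $x$) shows that $u_K z'$ is a tight edge of $G(x)$, so $z' \in K$; but $z' \in M$, contradicting $M \neq K$.

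I expect the main obstacle to be the bookkeeping of which vertex of the joining edge $yz$ lies in $K$ versus $L$, and making the ``sandwiched'' position of $M$ precise enough to guarantee the crossings needed for Lemma~\ref{lem:crossing} — one has to be careful that $M$'s witnessing edge $y'z'$ genuinely crosses a spanning edge of $K$ (or of $L$) rather than merely lying alongside it, which is exactly where the hypothesis that $M$ is \emph{non-trivial} (hence has a tight edge with endpoints in both chains) is used. Once the crossing is set up correctly, Lemma~\ref{lem:crossing} does the rest almost immediately, and the contradiction $z' \in K \cap M$ (or the symmetric $y' \in L \cap M$) closes the argument.
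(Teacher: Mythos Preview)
Your proof has a genuine error in the geometry of the setup. You claim that ``the $A$-part of one precedes the $A$-part of the other, and the $B$-parts are ordered the opposite way,'' and later assume $K\cap A \leqslant_P L\cap A$ while $L\cap B \leqslant_P K\cap B$. This directly contradicts Lemma~\ref{lem:int}(ii): if $K$ and $L$ are both non-trivial components of $G(x)$, then one is entirely below the other in $P$, in both chains simultaneously. In particular there is then no edge of $G$ between $K$ and $L$ at all, so they cannot merge; your implicit assumption that both $K$ and $L$ are non-trivial (you take spanning edges $u_Kv_K$, $u_Lv_L$) is therefore incompatible with the hypothesis of the lemma. The same ordering issue kills your crossing argument: since $M$ is ranked between $K$ and $L$ with the \emph{same} orientation in both chains, the edge $y'z'$ of $M$ has $u_K <_P y'$ in $A$ and $v_K <_P z'$ in $B$, so $y'z'$ and $u_Kv_K$ do \emph{not} cross, and Lemma~\ref{lem:crossing} gives nothing.

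The paper's proof avoids this by applying Lemma~\ref{lem:crossing} with respect to $x'$ rather than $x$, and to a different pair of edges: the newly tight edge $vw$ (with $v\in K$, $w\in L$) and the edge $yz$ of $M$. Because $yz$ is ranked between $K$ and $L$ in both chains, the edge $vw$ genuinely crosses $yz$. Both are tight for $x'$ (the latter because $x'$ agrees with $x$ outside $K$), so Lemma~\ref{lem:crossing} yields a tight edge between $M$ and $L$ with both endpoints outside $K$; this edge is then also tight for $x$, forcing $M$ and $L$ into the same component of $G(x)$ --- the desired contradiction. The key idea you are missing is to use the $K$--$L$ edge itself (tight only for $x'$) as one of the two crossing edges, rather than an internal edge of $K$.
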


\begin{proof}
Consider any edge $vw$ that caused $K$ and $L$ to merge, with $v \in K$ and $w \in L$. If $K$ and $L$ do not touch, there exists a component $M$ and an edge $yz$ as above. By Lemma~\ref{lem:crossing} applied to $G$ and $x'$, where $x' \in \STAB^*(G)$ is defined as precedingly, we conclude that $M$ and $L$ are contained in the same component of $G(x')$. Because $x'_u = x_u$ for all vertices outside $K$, this implies that $M$ and $L$ are contained in the same component of $G(x)$, a contradiction.
\end{proof}

Considering a point $\tilde x \in \STAB^*(G)$, we color the components of $G(\tilde x)$ as follows: a component is colored {\sl red} if it has at least as many vertices in $A$ than in $B$, {\sl blue} otherwise. The point $\tilde x$ is said to be {\sl color consistent}
if for every component $\tilde K$ of $G(\tilde x)$, and vertices $u \in A \cap \tilde K$ and $v \in B \cap \tilde K$, we have $\tilde x_{u} \geq 1/2$ and $ \tilde x_{v} \leq 1/2$ if $\tilde K$ is red, and $\tilde x_{u} < 1/2$ and $\tilde x_{v} > 1/2$ if $\tilde K$ is blue. Observe that being color consistent is a relaxation of being locally optimal. 

\begin{lemma}
\label{lem:rebalancing_colors}
Let $x, x' \in \STAB^*(G)$ and $K$ be as above.
If $x$ is color consistent, then $x'$ is also color consistent. Moreover, $K$ cannot merge with components that have colors different from that of $K$, and the component of $G(x')$ containing $K$ has the same color as $K$.
\end{lemma}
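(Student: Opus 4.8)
The plan is to analyze a single iteration of the while-loop of Algorithm~\ref{alg:rebalance}, in which an unbalanced component $K$ of $G(x)$ is rebalanced by shifting weight $\sigma$ from $B\cap K$ to $A\cap K$, producing $x'$. We first record the sign of $\sigma$: since $x$ is color consistent and $K$ is unbalanced, if $K$ is red we have $|A\cap K|/|K| \geq 1/2$ while every $u\in A\cap K$ currently satisfies $x_u \geq 1/2$ and every $v\in B\cap K$ satisfies $x_v\leq 1/2$; because $K$ is not balanced, the common value $x_u$ is strictly below $|A\cap K|/|K|$ (using that within a component of $G(x)$ all $A$-weights are equal and all $B$-weights are equal, as edges are tight), so $\sigma \geq 0$. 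Symmetrically, if $K$ is blue then $\sigma \leq 0$. Thus rebalancing a red component only raises $A$-weights (keeping them $\geq 1/2$) and lowers $B$-weights (keeping them $\leq 1/2$); rebalancing a blue component does the reverse. This already shows that the vertices of $K$ itself keep weights on the correct side of $1/2$ after the step, and that no weight of $K$ crosses $1/2$ during the shift.

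Next I would handle the merging. Suppose during this step $K$ merges with a component $L$ of $G(x)$; by Lemma~\ref{lem:rebalancing_touch}, $K$ touches $L$, so there is a non-tight edge $vw$ of $G$ with $v\in K$, $w\in L$ that becomes tight in $x'$. Since $x'$ agrees with $x$ off $K$, the weight of $w$ is unchanged, so $x'_w = 1 - x'_v$. If $K$ is red, then $v$'s weight either increased (if $v\in A\cap K$) or decreased (if $v\in B\cap K$); in the first case $x'_v\geq 1/2$ forces $x'_w\leq 1/2$, and in the second case the edge $vw$ could only have become tight by $x'_v$ increasing, contradicting $\sigma\geq 0$ — so in fact $v\in A\cap K$, $w\in B$, and $x_w = x'_w = 1-x'_v \leq 1/2 \leq x_v^{\text{old}} $; but $w$'s weight was already on the side dictated by $L$'s color and equals $\leq 1/2$, which by color consistency of $x$ forces $L$ to be red (a component all of whose $B$-weights are $\leq 1/2$ and $A$-weights $\geq 1/2$ is red by definition, since blue components have $B$-weights strictly above $1/2$). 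The same argument with inequalities reversed shows that if $K$ is blue then any merging partner $L$ is blue. Hence $K$ merges only with components of its own color, which is the second assertion.

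For the final assertions — that the merged component of $G(x')$ has the same color as $K$, and that $x'$ is still color consistent — I would argue as follows. The new component $\tilde K'$ of $G(x')$ containing $K$ is the union of $K$ with all components $L_1,\dots,L_m$ that merged into it, all of the same color as $K$ by the previous paragraph; since color is determined by the sign of $|A\cap\tilde K'| - |B\cap\tilde K'|$ and each $L_j$ contributes with the same sign as $K$, $\tilde K'$ inherits that color. Color consistency of $x'$ then needs to be checked component by component: components untouched by the step are fine since $x'=x$ there; for $\tilde K'$, every $A$-vertex of $K$ has weight $\geq 1/2$ (resp. $<1/2$ if blue) by the first paragraph, and every $A$-vertex of each $L_j$ still has its old weight, which was on the correct side for $L_j$'s color $=$ $\tilde K'$'s color — one only has to note that the red case needs weak inequalities $\geq 1/2$, $\leq 1/2$ (which the old $L_j$-weights satisfy), and the blue case needs the strict inequalities, which hold because a blue $L_j$ had strict inequalities in $x$ and these weights are unchanged. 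I expect the main obstacle to be the bookkeeping in the merging case: making sure one correctly rules out the ``wrong-chain'' endpoint of the merging edge using the sign of $\sigma$, and correctly deduces the color of the merging partner from the single unchanged endpoint weight rather than from its whole component; once the sign of $\sigma$ is pinned down everything else is a direct case check against the definitions of color and color consistency.
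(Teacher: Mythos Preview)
Your argument contains a genuine gap: the claim that $\sigma \geq 0$ when $K$ is red (and $\sigma \leq 0$ when $K$ is blue) is unjustified and in general false. Color consistency only tells you that the common $A$-value $x_u$ on a red component satisfies $x_u \geq 1/2$; it does \emph{not} say that $x_u \leq |A\cap K|/|K|$. For instance, take $|A\cap K| = |B\cap K| = 2$ (so $K$ is red with target $1/2$) and $x_u = 0.7$, $x_v = 0.3$: this is color consistent and unbalanced, yet $\sigma < 0$. In that regime the $B$-weights in $K$ \emph{increase}, so a merging edge $vw$ with $v \in B\cap K$ is perfectly possible --- the case you tried to rule out. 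Consequently your restriction to $v \in A\cap K$ (when $K$ is red) is unwarranted, and two of the four cases go unhandled.

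The paper's proof avoids this by never fixing the sign of $\sigma$. It first observes that $x'_v$ lies between $x_v$ and the target $|A\cap K|/|K|$ (or its complement), and since color consistency puts \emph{both} endpoints of this interval on the same side of $1/2$, one gets $x'_v \geq 1/2 \Leftrightarrow x_v \geq 1/2$ on $A\cap K$ and $x'_v > 1/2 \Leftrightarrow x_v > 1/2$ on $B\cap K$. For the merging edge $vw$ with $v\in K$, $w\in L$, the paper uses only the inequality $x'_v > x_v$ (since $x_v + x_w < 1 = x'_v + x'_w$ and $x'_w = x_w$), and then runs through all four cases (chain of $v$) $\times$ (color of $K$): in each case the preserved side-of-$1/2$ for $x'_v$ forces the side of $x_w = 1 - x'_v$, and color consistency of $x$ on $L$ then pins down the color of $L$. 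Your first-paragraph \emph{conclusion} (weights of $K$ stay on the correct side of $1/2$) and your third paragraph are fine; it is the middle step that needs to be replaced by this four-case analysis.
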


\begin{proof}
First observe that, because $x$ is color consistent, the weight modification is such that, for $v \in A \cap K$, we have $x'_{v} \geq 1/2$ iff $x_{v} \geq 1/2$, and similarly, for $v \in B \cap K$, we have $x'_{v} > 1/2$ iff $x_{v} > 1/2$. Thus $x'$ is also color consistent if $K$ does not merge with other components.
Now assume that $K$ does merge with other components.
Consider an edge $vw$ of $G$ between $K$ and another component $L$ of $G(x)$ that became tight with respect to $x'$, with $v \in K$ and $w\in L$. Since $x'_{w} = x_{w}$ and $x_{v} + x_{w} < x'_{v} + x'_{w} = 1$, we have $x'_{v} > x_{v}$. There are four cases to consider:
\begin{itemize}
\item\ $v \in A$ and $K$ is red in $G(x)$. Then $x'_{v} > x_{v} \geq 1/2$ and $x_{w} = x'_{w} < 1/2$. 
\item\ $v \in A$ and $K$ is blue in $G(x)$. Then $x_{v} < 1/2$, hence $x'_{v} < 1/2$ and $x_{w} = x'_{w} > 1/2$. 
\item\ $v \in B$ and $K$ is red in $G(x)$. Then $x_{v} \leq 1/2$, hence $x'_{v} \leq 1/2$ and $x_{w} = x'_{w} \geq 1/2$. 
\item\ $v \in B$ and $K$ is blue in $G(x)$. Then $x'_{v} > x_{v} > 1/2$ and $x'_{w} = x_{w} < 1/2$.  
\end{itemize}
In each case we conclude that $L$ had the same color as $K$ in $G(x)$, as claimed. Considering all components $L$ that merge with $K$, a direct consequence of what precedes is that the component of $G(x')$ containing $K$ has also the same color as $K$. Furthermore, $x'$ is color consistent.
\end{proof}

Finally, observe that the entropy of $x'$ is at most that of $x$. This is because the function
$$
\xi \mapsto \frac{|A \cap K|}{|K|} \log \xi +
\frac{|B \cap K|}{|K|} \log (1-\xi)
$$
is strictly convex over the interval $(0,1)$ with a minimum in $\xi = |A \cap K| / |K|$.

\subsection{The Core of the Algorithm}
\label{sec:MUPI_core}

At the heart of our algorithm for merging under partial information is the following procedure. Given a locally optimal point $x \in \STAB(G)$, carefully pick a non-trivial component of $G(x)$ and merge the two corresponding chains. This causes all the edges between the two chains to disappear from $G$. This also creates loose components. Then, make $x$ color consistent by increasing certain coordinates of $x$ to $1/2$ or $1/2+\varepsilon$. Finally, make $x$ locally optimal again by rebalancing it. At all times, the point $x$ remains feasible, that is, $x\in \STAB(G)$. This procedure is repeated as long as it is necessary. In the process, some vertices become cut-points, which reveals their respective ranks, and are copied in an output chain.

This time, for merging a pair of chains, we use the Hwang-Lin algorithm~\cite{HL72}. This is a simple near optimal algorithm for merging two disjoint chains $X$ and $Y$ of different lengths. It proceeds by splitting the longest chain, say $X$, into blocks of size $2^{\lfloor \log (|X|/|Y|)\rfloor}$. Then every vertex in the smallest chain $Y$ is inserted into $X$, by first performing a linear search among the blocks, then a bisection within a block. The vertices in $Y$ are inserted in order, so that once a block of $X$ is discarded, it is never looked at again.

In the analysis of Algorithm \ref{alg:MUPI}, we will use the following bound on the number of comparisons performed by the Hwang-Lin algorithm.

\begin{lemma}
\label{lem:HL_bound}
Provided $|X| \geq |Y|$, the number of comparisons performed by the Hwang-Lin algorithm for merging two disjoint chains $X$ and $Y$ is at most $|Y| \log (4|X|/|Y|)$.
\end{lemma}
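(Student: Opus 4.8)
The plan is to analyze the Hwang--Lin merging procedure directly. Write $|X| = m$, $|Y| = k$, and $t := \lfloor \log(m/k) \rfloor$, so that the long chain $X$ is split into blocks of size $2^t$, giving $\lceil m/2^t \rceil$ blocks. Each of the $k$ elements of $Y$ is inserted into $X$ in two stages: a linear search that scans forward through the remaining blocks until it finds the block into which the element falls, followed by a binary search \emph{within} that block, which costs at most $\lceil \log(2^t) \rceil = t$ comparisons (actually $t$ comparisons, since a block of size $2^t$ needs at most $t$ bisection steps; one should be slightly careful whether one needs $t$ or $t+1$, but $t$ suffices because each comparison in the bisection halves the candidate interval). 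So the within-block cost, summed over all of $Y$, is at most $k t$.

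The subtler part is bounding the total cost of the linear scans through the blocks. The key observation is that the searches for the successive elements of $Y$ are \emph{monotone}: once a block of $X$ has been passed over (discarded) during the insertion of some element of $Y$, it is never examined again, because later elements of $Y$ are larger. Hence each of the $\lceil m/2^t \rceil$ blocks contributes at most one ``scanning'' comparison across the whole execution for being discarded, and in addition each of the $k$ elements of $Y$ contributes at most one comparison that locates its own block (the last comparison of its linear scan, which does not discard a block). This gives a total linear-search cost of at most $\lceil m/2^t \rceil + k$. I would then want to be a little careful about off-by-one effects in how ``discard'' versus ``locate'' comparisons are counted, but the bottom line is that the linear-search cost is $O(m/2^t + k)$, and more precisely at most $m/2^t + k$ after a clean accounting (using $\lceil m/2^t\rceil \le m/2^t + 1 \le m/2^t + k$ when needed, or by a tighter argument that the very last block need not be counted).

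Adding the two contributions, the total number of comparisons is at most
\begin{equation*}
k t + \frac{m}{2^t} + k .
\end{equation*}
Now use the definition $t = \lfloor \log(m/k)\rfloor$, which gives $2^t \ge (m/k)/2$, hence $m/2^t \le 2k$, and also $t \le \log(m/k)$. Therefore the bound is at most
\begin{equation*}
k \log\frac{m}{k} + 2k + k = k\log\frac{m}{k} + 3k \le k\log\frac{m}{k} + k\log 8 = k \log\frac{8m}{k}.
\end{equation*}
This is slightly weaker than the claimed $k\log(4m/k)$, so the final step is to sharpen the constant: one saves a factor of $2$ either by the tighter linear-search accounting mentioned above (the last block traversed for each $Y$-element is the one it lands in, and one can charge its ``locate'' comparison against the within-block budget, or note $\lceil m/2^t\rceil \le m/2^t$ when $2^t \mid m$ and absorb the ceiling elsewhere), or by observing that the within-block binary search on a block of size $2^t$ actually costs at most $t$ comparisons while the linear scan plus locate costs at most $m/2^t + 1 \le 2k+1$, and $kt + 2k + 1 \le k\log(m/k) + 2k + 1 \le k\log(4m/k)$ once $k \ge 1$ (using $\log 4 = 2$ and discarding the harmless additive $1$ by noting that for $k=1$ the merge is a single binary search costing $\le \lceil\log(m+1)\rceil \le \log(4m)$). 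The main obstacle is exactly this constant-chasing: getting the clean $4$ rather than $8$ requires being precise about the ceilings in the block count and in the within-block bisection, and about the distinction between comparisons that discard a block and the final comparison that locates an element's block. Everything else is the standard monotone two-stage search analysis.
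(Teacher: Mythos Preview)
Your derivation of the intermediate bound is essentially correct: the expression $kt + k + m/2^{t}$ (with $t=\lfloor\log(m/k)\rfloor$) is, up to a harmless $-1$ and a floor, exactly the Hwang--Lin bound $k(1+t)+\lfloor m/2^{t}\rfloor-1$ that the paper simply quotes from \cite{HL72}. So the ``analyze the algorithm'' part of your plan is fine and in fact more self-contained than the paper's proof.

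The genuine gap is in the final simplification. You bound $t\le\log(m/k)$ and $m/2^{t}\le 2k$ \emph{independently}, which throws away the correlation between them and lands you at $k\log(8m/k)$. Your two proposed fixes do not recover the constant~$4$: absorbing the ``locate'' comparison into the within-block binary search does not actually save a comparison (a block of size $2^{t}$ still needs $t$ bisection steps after the locate), and the inequality $kt+2k+1\le k\log(4m/k)$ fails already because $k\log(4m/k)=k\log(m/k)+2k$ while the left side exceeds $k\log(m/k)+2k$. The clean fix, which is what the paper does, is to parametrize the fractional part: write $t=\log(m/k)-\xi$ with $\xi\in[0,1)$, so that $m/2^{t}=k\,2^{\xi}$ exactly, and then
\[
kt+k+\frac{m}{2^{t}}=k\Big(\log\frac{m}{k}+1-\xi+2^{\xi}\Big)\le k\Big(\log\frac{m}{k}+2\Big)=k\log\frac{4m}{k},
\]
using the one-variable inequality $1-\xi+2^{\xi}\le 2$ for $\xi\in[0,1)$ (equality at $\xi=0$ and $\xi\to 1$, strict in between). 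This is the missing idea.
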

\begin{proof}
It is known \cite{HL72} that the number of comparisons performed by the Hwang-Lin merging algorithm is at most
$$
|Y| \left(1 + \left\lfloor \log \frac{|X|}{|Y|} \right\rfloor\right) + \left\lfloor \frac{|X|}{2^{\lfloor \log \frac{|X|}{|Y|} \rfloor}} \right\rfloor - 1.
$$
Let $\xi \in [0,1)$ be such that
$$
\left\lfloor \log \frac{|X|}{|Y|} \right\rfloor = \log \frac{|X|}{|Y|} - \xi.
$$
Then the number of comparisons is at most
$$
|Y| \Big(1-\xi+2^\xi + \log \frac{|X|}{|Y|}\Big) \leq |Y| \log\frac{4|X|}{|Y|},
$$
where the last inequality follows from $1-\xi+2^\xi \leq 2$ for $\xi \in [0,1)$. The result follows.
\end{proof}

Consider a locally optimal point $x \in \STAB(G)$, and a non-trivial component $K$ of $G(x)$. By Lemma \ref{lem:int}(i), $K$ consists of two disjoint chains, namely $A \cap K$ and $B \cap K$, which form intervals in the  chains $A$ and $B$, respectively. 

We define the {\sl small chain\/} of $K$ to be $A \cap K$ if $K$ is blue and $B \cap K$ if $K$ is red. The {\sl big chain} of $K$ is the other one. Thus the small chain of $K$ is the one that has minimum cardinality, except that in case the two chains $A \cap K$ and $B \cap K$ have the same cardinality then $K$ is red by definition and the small chain is $B \cap K$. Because $x$ is color consistent, all vertices $v$ in the small chain of $K$ have $x_v \leq 1/2$ (and even $x_v < 1/2$ when $K$ is blue). This is why vertices of the small chain of $K$ are called {\sl small vertices}. Similarly, the vertices in the big chain of $K$ are called {\sl big vertices}.

The component $K$ is said to be {\sl good} if all the edges of $G$ having one endpoint in its small chain have their other endpoint either in the other chain of $K$ or in a component whose color is distinct from that of $K$.

\begin{lemma}
\label{lem:good_component}
Suppose $x \in \STAB(G)$ is locally optimal. If $G(x)$ has at least one non-trivial red component, then one of them is good. The same is true for non-trivial blue components.
\end{lemma}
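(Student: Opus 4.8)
The plan is to argue by contradiction: suppose $G(x)$ has a non-trivial red component but none of them is good. Pick a non-trivial red component $K$ that is extremal in the linear order on non-trivial components given by Lemma~\ref{lem:int}(ii) — say, the one whose small chain sits furthest to the ``left'' in $P$ among all non-trivial red components (we will have to be a little careful about which direction to choose; the right choice is dictated by where the offending edge must point). Since $K$ is not good, there is an edge $e = vw$ of $G$ with $v$ in the small chain of $K$ and $w$ lying in a component $L$ of $G(x)$ that has the same color as $K$ (red), with $L \neq K$ (and $L$ is not the opposite chain of $K$). Note $e$ is not tight with respect to $x$, since its endpoints lie in different components of $G(x)$. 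The goal is to derive a contradiction either with local optimality of $x$, with the extremal choice of $K$, or with Lemma~\ref{lem:crossing}/Lemma~\ref{lem:int}.

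First I would pin down where $w$ can be. By Lemma~\ref{lem:width_2_prop}(ii), $N_G(v)$ is an interval in the opposite chain, so $w$ lies in that interval; combining with Lemma~\ref{lem:width_2_prop}(iii) and Lemma~\ref{lem:int}(ii), the component $L$ is comparable to $K$ in $P$, and the edge $vw$ forces $L$ to be ``close'' to $K$ — in fact adjacent to it in the linear order, or touching it in the sense defined before Lemma~\ref{lem:rebalancing_touch}, because any non-trivial component strictly between $K$ and $L$ would produce a tight edge that crosses $vw$, and then Lemma~\ref{lem:crossing} would make $vw$ tight, a contradiction. So $L$ is the non-trivial component immediately on one side of $K$. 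The key numerical point is that $v$ is a small vertex of a red component, so $x_v \le 1/2$, and feasibility gives $x_w \le 1 - x_v$, so $x_w \ge$ is not forced up; but $w$ lies in the red component $L$, and I need to ask whether $w$ is in the big or the small chain of $L$. If $w$ is in the big chain of $L$, then $x_w \ge 1/2$ (local optimality of $x$ on $L$, plus $L$ red). Then $x_v + x_w \le 1$ forces $x_v \le 1/2$ with equality only if $x_w = 1/2$, i.e.\ $|A\cap L| = |B \cap L|$ — but then $L$ red means its small chain is $B\cap L$; one then chases the edge $vw$ together with a tight edge inside $L$ incident to $w$ and applies Lemma~\ref{lem:crossing} to show $v$ would be tight to a vertex of $L$, merging $v$'s component with $L$ — contradicting $v \in K \neq L$. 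If instead $w$ is in the small chain of $L$, then both $v$ and $w$ are small vertices of distinct red components, and I would use the extremal choice of $K$: the edge $vw$ witnesses that the small chain of $L$ reaches back toward $K$, contradicting the assumption that $K$'s small chain was furthest out.

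The main obstacle I anticipate is the bookkeeping of orientations and the precise extremal choice: ``red'' is a statement about $|A\cap K|$ versus $|B\cap K|$, and the small chain of a red component is in $B$, so an edge out of a small vertex of $K$ goes into the $A$-side; tracking which of the two possible neighbors $L$ (left or right of $K$) can actually receive such an edge, and ruling out both, is where the crossing lemma has to be invoked carefully and where the ``$L$ same color as $K$'' hypothesis is essential — if $L$ had the opposite color the definition of good would already be satisfied. I expect that once the extremal $K$ is chosen so that no non-trivial red component lies on the relevant side, the edge $vw$ either violates this extremality (if $w$ is small in $L$) or produces, via Lemma~\ref{lem:crossing} and local optimality, a tight edge joining $K$ to $L$ (if $w$ is big in $L$, or in the borderline equal-cardinality case), contradicting that $K$ and $L$ are distinct components of $G(x)$. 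The argument for blue components is identical with the roles of $A$ and $B$, and of $\le 1/2$ and $<1/2$, interchanged, using that blue small vertices have $x_v < 1/2$ strictly.
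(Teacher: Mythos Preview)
Your proposal has a genuine gap. The crux is the extremal choice of $K$: you pick $K$ extremal in the \emph{linear order} on non-trivial components (Lemma~\ref{lem:int}(ii)), whereas the paper picks $K$ minimizing the \emph{ratio} $|A\cap K|/|K|$ among non-trivial red components. The latter choice is what makes the proof a three-liner, and your choice does not deliver a contradiction.

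Here is why your argument breaks. Since $K$ is red, its small chain is $B\cap K$, so $v\in B$; since $G$ is bipartite with bipartition $A,B$, necessarily $w\in A$. Hence if $w$ lies in a non-trivial red component $L$, it lies in $A\cap L$, which is the \emph{big} chain of $L$. Your ``Case~2'' ($w$ in the small chain of $L$) therefore never occurs, and the extremality of $K$ in the linear order is never used. You are left entirely in Case~1, where local optimality only gives $x_v = |B\cap K|/|K| \le 1/2$ and $x_w = |A\cap L|/|L| \ge 1/2$. These two inequalities are perfectly compatible with $x_v + x_w < 1$ (take $x_v = 1/3$, $x_w = 1/2$), so there is no immediate contradiction. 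Your attempt to repair this via Lemma~\ref{lem:crossing} cannot work: that lemma requires \emph{both} crossing edges to be tight, but $vw$ is not tight; and the tight edges that do exist inside $K$ and inside $L$ (say $uv$ with $u\in A\cap K$ and $wv'$ with $v'\in B\cap L$) are parallel rather than crossing, since $K$ and $L$ are comparable in $P$.

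The paper's choice of $K$ sidesteps all of this. With $K$ minimizing $|A\cap K|/|K|$ over non-trivial red components, any neighbor $w$ of $v\in B\cap K$ outside $K$ has $x_w < 1$ (the edge is non-tight), hence $w$ lies in a non-trivial component $L$ (trivial components have weight~$1$ by local optimality). If $L$ were red, then $x_w = |A\cap L|/|L| \ge |A\cap K|/|K| = 1 - x_v$, forcing $x_v + x_w \ge 1$, i.e.\ $vw$ tight, a contradiction. So $L$ is blue and $K$ is good.
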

\begin{proof}
Suppose $G(x)$ has at least one non-trivial red component.
Let $K$ be such a component which minimizes $|A \cap K|/|K|$.
We will show that $K$ is a good component.
Consider a vertex $v\in B \cap K$ and suppose 
$w$ is a neighbor of $v$ in $G$  which is outside $K$.
Since $vw$ is not tight, $x_v + x_w < 1$. In particular, $x_w < 1$, and hence
$w$ belongs to another non-trivial component $L$ of $G(x)$. 
If $L$ is red, by our choice of $K$ we have 
$|A \cap L|/ |L| \geq |A \cap K|/|K|$. Thus
$$
x_v + x_w = \frac{|B \cap K|}{|K|} + \frac{|A \cap L|}{|L|}
\geq \frac{|B \cap K|}{|K|} + \frac{|A \cap K|}{|K|} = 1,
$$
implying that $vw$ is tight, a contradiction. 
Hence $L$ must be blue, as claimed.

The case of blue components is handled similarly.
\end{proof}

We are now ready to formally state the algorithm. For the sake of simplicity, the algorithm makes four assumptions. First, the given point $x \in \STAB(G)$ is locally optimal. Second, the contribution of the red components to the entropy of $x$ does not exceed that of the blue components. (The second assumption can be made without loss of generality: if this is not the case, simply exchange the chains $A$ and $B$.) Third, the constant $\varepsilon$ on line \ref{line:make_color_consistent} of the algorithm is set to $\varepsilon := \frac{1}{2n}$. Last, all the cut-points that $P$ initially has have already been copied to the output chain at their respective final positions.

\begin{algorithm}[h!]
\caption{Core of the Algorithm for Merging under Partial Information} \label{alg:MUPI_core}
\begin{algorithmic}[1]
\WHILE{$G(x)$ has a non-trivial component}
\STATE \label{line:pick} pick a good component $K$, giving higher priority to red components
\STATE \label{line:HL} merge the chains $X := A \cap K$ and $Y := B \cap K$ with the Hwang-Lin algorithm 
\FOR{$v\in K$}\label{line:begin_for-loop1}
\STATE \label{line:make_color_consistent} put $x_{v} := \max \{x_v,1/2\}$ if $v \in A$, and $x_v := \max \{x_v, 1/2 + \varepsilon\}$ if $v \in B$
\ENDFOR\label{line:end_for-loop1}
\STATE \label{line:rebalance_core} rebalance $x$ using Algorithm~\ref{alg:rebalance}
\FOR{$v\in K$}\label{line:begin_for-loop2}
\IF{$x_{v}=1$}
\STATE copy $v$ at its final position in the chain $C$
\ENDIF
\ENDFOR\label{line:end_for-loop2}
\ENDWHILE
\RETURN $C$
\end{algorithmic}
\end{algorithm}

When the chains $X$ and $Y$ are merged (see line \ref{line:HL} of Algorithm \ref{alg:MUPI_core}), all the edges of $G$ between vertices of $K$ disappear (in other words, the vertices in $K$ become comparable elements of $P$). As a result, every vertex of $K$ forms a loose trivial component. We will prove that increasing the corresponding coordinates of $x$ to at least $1/2$ or $1/2 + \varepsilon$ (see the for-loop in lines \ref{line:begin_for-loop1}--\ref{line:end_for-loop1} of Algorithm \ref{alg:MUPI_core}) makes $x$ color consistent, so that Algorithm \ref{alg:rebalance} can be applied.

Before proving this last fact, we now study in more detail the evolution of the graph during an iteration of the algorithm. Again, we use different symbols to denote the current objects (graph, point) at different moments of the algorithm: $G$ and $x$ respectively denote the graph and point at beginning of an iteration of the main loop (line \ref{line:pick}), $G'$ denotes the graph after the merging (lines \ref{line:begin_for-loop1}--\ref{line:end_for-loop2}), and $x'$ denotes the point just after the first for-loop (line \ref{line:rebalance_core}).

As the reader can verify, $G'$ is a spanning subgraph of $G$ containing none of the edges of $G$ with both endpoints in $K$ and some of the edges of $G$ with exactly one endpoint in $K$.

%The other edges of $G$, that is, the edges of $G$ with both endpoints outside $K$, survive in $G'$ if and only
%if both of their endpoints are ranked below $K$, or both are ranked above $K$. In particular, all the edges
%linking two vertices of the same component of $G(x)$ survive in $G'(x)$, provided this component is distinct %from $K$.

\begin{lemma}
\label{lem:MUPI_colors}
Using the notations above, $x'$ is a color consistent point of $\STAB(G')$.
\end{lemma}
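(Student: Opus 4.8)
The plan is to verify the two defining conditions of color consistency for $x'$ with respect to the graph $G'$: first that $x' \in \STAB(G')$, and second that the coloring of the components of $G'(x')$ is consistent with the coordinate values of $x'$. For feasibility, I would examine the edges of $G'$ one by one. Edges with both endpoints in $K$ do not occur in $G'$, so they are irrelevant. For an edge $uv \in E(G')$ with both endpoints outside $K$, the coordinates $x'_u = x_u$ and $x'_v = x_v$ are unchanged, so $x'_u + x'_v = x_u + x_v \leq 1$ by feasibility of $x$. The remaining case is an edge $uv$ with exactly one endpoint, say $v$, in $K$; here $x'_v$ may have been increased to $1/2$ or $1/2+\varepsilon$. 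Since $K$ was chosen to be a \emph{good} component and $uv$ survives in $G'$ (so it is not tight with respect to $x$, as the tight edges of $G$ inside $K$ disappeared), the endpoint $v$ must be a \emph{big} vertex of $K$: by goodness, any edge leaving a small vertex of $K$ goes to the other chain of $K$ (but then both endpoints are in $K$) or to a component of color distinct from $K$'s. I would then split into the sub-cases according to whether $v \in A$ or $v \in B$ and whether $K$ is red or blue, using color consistency of $x$ restricted to the surviving neighbor $u$ (which lies in a component of $G(x)$ that persists in $G'(x)$) to bound $x_u$ away from $1/2$ in the right direction, and thereby conclude $x'_v + x_u \leq 1$. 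The delicate point is the strict-versus-nonstrict inequalities: for a blue component the big vertices are in $B$ and get value $\geq 1/2$, while the neighbor in a red component has $A$-value $\geq 1/2$ and $B$-value $\leq 1/2$, so one has to check the $\varepsilon$ bookkeeping does not break feasibility — this is exactly why $\varepsilon$ was chosen as small as $1/(2n)$, but for the feasibility of a single edge one really only needs that big $B$-vertices of a blue $K$ neighbor small $A$-vertices of a red component, whose value is $< 1/2$.

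For the color-consistency condition itself, I would first identify the components of $G'(x')$. The edges of $G'(x')$ are the tight edges of $G'$ with respect to $x'$. I claim: inside $K$ there are now no edges at all (they were deleted), so each vertex of $K$ sits in its own, possibly trivial, component of $G'(x')$ — or more precisely, a vertex of $K$ could attach via a surviving cross-edge to a component outside $K$, but I would argue using Lemma~\ref{lem:crossing} and the structure from Lemma~\ref{lem:int} that the merge of $K$ with such a component does not occur at this stage (that happens during the subsequent rebalancing on line~\ref{line:rebalance_core}, which is handled by Lemma~\ref{lem:rebalancing_colors}, not here). For components of $G'(x')$ disjoint from $K$: every edge with both endpoints in a common component of $G(x)$ distinct from $K$ survives in $G'$, and the coordinates of such vertices were not touched, so that component persists unchanged as a component of $G'(x')$ with the same vertex set and the same color, and $x'$ agrees with $x$ on it; hence color consistency there is inherited from $x$. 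For the (now singleton or small) components coming from $K$: each such vertex $v$ got $x'_v \geq 1/2$ if $v \in A$ and $x'_v \geq 1/2 + \varepsilon > 1/2$ if $v \in B$; a singleton $A$-vertex forms a red component and needs its value $\geq 1/2$ — satisfied — while a singleton $B$-vertex forms a blue component and needs value $> 1/2$ — satisfied by the $\varepsilon$.

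The main obstacle I anticipate is the bookkeeping around which vertices of $K$ actually become isolated in $G'(x')$ versus which remain attached to surviving outside components, and making sure the color assigned to whatever component a former $K$-vertex ends up in is consistent with its new coordinate value. The cleanest way around this is probably to observe that immediately after the first for-loop (before rebalancing), the only edges of $G'$ that are tight with respect to $x'$ and incident to $K$ are edges to outside components, and to show via the goodness of $K$ plus the case analysis above that along any such edge the outside endpoint already forces the component's color to match $x'_v$: e.g. if $v \in A\cap K$ is tight to $u$ outside, then $x'_u = 1 - x'_v \leq 1/2$, and combined with color consistency of the outside (unchanged) component one pins down its color as red, which is exactly what a vertex with $A$-value $\geq 1/2$ must be in. I would then simply note that once these observations are in place, checking the two clauses of the definition of color consistency for every component of $G'(x')$ is a routine finite case check, and conclude.
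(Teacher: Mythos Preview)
Your proposal has two genuine gaps, both of which the paper's proof addresses with ingredients you omit.

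First, your reading of ``goodness'' is off. You write that the endpoint $v\in K$ of a surviving edge ``must be a \emph{big} vertex of $K$'', but goodness only says that edges leaving the \emph{small} chain of $K$ go either into the other chain of $K$ or into a component of the \emph{opposite} colour. So small vertices of $K$ can perfectly well have surviving edges to the outside; it is precisely these edges that are delicate, because the small vertices are the ones whose coordinates get raised. You then carry out the case analysis for big vertices (whose values are unchanged, so there is nothing to check) rather than for small ones.

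Second, for the small--vertex case you appeal only to colour consistency of $x$ to bound the outside neighbour's value ``away from $1/2$ in the right direction''. Colour consistency gives you only a strict inequality, e.g.\ $x_w < 1/2$; it does not give the quantitative gap you need to absorb the $+\varepsilon$. The paper uses that $x$ is \emph{locally optimal}, so $x_w$ is a ratio of integers with denominator at most $n$, forcing $x_w \le 1/2 - 1/n$ when $x_w < 1/2$. That is what makes $x'_v + x'_w = (1/2+\varepsilon) + x_w < 1$ work when $K$ is red and $v\in B\cap K$.

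Finally, you never invoke the priority rule (red components are processed before blue ones). In the paper this is what handles the case $K$ blue: by Lemma~\ref{lem:good_component}, once no good red component remains there is no non-trivial red component at all, and since $x$ is locally optimal no trivial component is loose; hence a small vertex $v\in A\cap K$ has \emph{no} neighbours outside $K$ in $G$ and is isolated in $G'$. Without this, a small $A$-vertex with $x'_v = 1/2$ could be tight to a red component with $B$-value exactly $1/2$, and your ``routine finite case check'' fallback would have to track how several $K$-vertices may attach to various outside components simultaneously --- something you have not done. The paper's route (prove every vertex of $K$ is a loose singleton of $G'(x')$) sidesteps all of this.
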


\begin{proof}
Observe that $\{v\}$ is a loose component of $G'(x)$ for every $v\in K$. We will show that this remains true in the graph $G'(x')$.

Let $v\in K$. First suppose $K$ is red in $G(x)$. If $v\in A$, then $x_{v} \geq 1/2$, thus
$x'_{v}=x_{v}$, and hence $\{v\}$ remains loose in $G'(x')$. If $v\in B$, then $x_{v} \leq 1/2$,
implying $x'_{v} = 1/2 + \varepsilon$. However, since $K$ was a good component, 
every neighbor $w$ of $v$ in $G'$ belonged to a blue component of $G(x)$, and 
thus $x_{w} < 1/2$. Since $x$ is locally optimal, this implies 
$x_{w} \leq 1/2 - 1/n = 1/2 - 2\varepsilon$.
It follows
$$
x'_{v} + x'_{w} = (1/2 +\varepsilon) + x_{w} \leq (1/2 +\varepsilon) + (1/2 - 2\varepsilon) < 1,
$$
%,
implying that $vw$ is not tight w.r.t.\ $x'$. Therefore, $\{v\}$ is loose in $G'(x')$.

Now assume $K$ is blue in $G(x)$. If $v\in B$, then $x_{v} > 1/2$, and thus 
$x_{v} \geq 1/2 + 1/n = x_{v} \geq 1/2 + 2\varepsilon$ (because $x$ is locally optimal). Hence, 
$x'_{v} = x_{v}$, and the component $\{v\}$ is loose in $G'(x')$.
If $v\in A$, then $x_{v} < 1/2$ and thus $x'_{v} = 1/2$.
Since the algorithm gives priority to good components that are red, Lemma~\ref{lem:good_component}
implies that all red components in $G(x)$ are trivial. 
Since $x$ is locally optimal, none of them is loose.
Since $x_{v} > 0$ and since $K$ was a good component, it follows that $v$ is only adjacent to 
vertices in $B\cap K$ in $G$. Therefore, $v$ is an isolated vertex of $G'$, and the component
$\{v\}$ is loose in $G'(x')$.  

It follows that the components of $G'(x)$ are exactly those of $G(x)$ that are distinct from $K$ 
plus the loose components $\{v\}$ for all $v\in K$. Since 
$x'_{v} \geq 1/2$ if $v \in A \cap K$ and $x'_{v} > 1/2$ if $v \in B \cap K$, and 
because $x'_{w} = x_{w}$ for every $w \in V(G) - K$, we deduce that $x'$ is color consistent. 
\end{proof}

Observe that after the first for-loop (lines \ref{line:begin_for-loop1}--\ref{line:end_for-loop1}), all small vertices of $K$ become big, and all big vertices stay big. By Lemmas \ref{lem:rebalancing_colors} and \ref{lem:MUPI_colors}, the rebalancing step (line \ref{line:rebalance_core}) preserves the status of all the vertices, that is, small vertices stay small and big vertices stay big.

\begin{lemma}
\label{lem:MUPIqc}
Let $P$ be a poset of order $n$ covered by two disjoint chains $A$, $B$, and let $G = \bar{G}(P)$. Assume that $x \in STAB(G)$ is a locally optimal point such that the contribution of the red components to $\ent(x)$ is not larger than that of the blue components. Then, Algorithm~\ref{alg:MUPI_core} merges $A$ and $B$ in at most $3n\ent(x)$ comparisons.  
\end{lemma}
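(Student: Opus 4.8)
The plan is to track the potential $n\ent(x)$ through the execution of Algorithm~\ref{alg:MUPI_core} and to charge the comparisons of each iteration against the corresponding drop in the potential. Concretely, fix an iteration in which we pick a good component $K$ (prioritising red) and merge its two chains $X := A\cap K$ and $Y := B\cap K$; let $x$ and $x'$ be the point before the iteration and after the rebalancing step respectively. I want to establish an inequality of the form
\[
(\text{comparisons this iteration}) \;\leq\; 3n\bigl(\ent(x) - \ent(x')\bigr),
\]
and then sum over all iterations, using that the final point has every coordinate equal to $1$ (all vertices became cut-points), so $\ent(x_{\mathrm{final}}) = 0$, and that each intermediate step only decreases the entropy (the rebalancing remarks at the end of Section~\ref{sec:locimp}, and the convexity observation about the per-component entropy function). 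This telescopes to the bound $3n\ent(x)$.

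The first ingredient is to bound the comparisons performed in line~\ref{line:HL}. Without loss of generality $K$ is the component we pick; let $s := |Y|$ and $b := |X|$ if $K$ is red (so $Y$ is the small chain), and symmetrically $s := |X|$, $b := |Y|$ if $K$ is blue. By Lemma~\ref{lem:HL_bound}, the Hwang--Lin merge costs at most $s\log(4b/s) = s\log 4 + s\log(b/s)$. The second ingredient is to lower-bound the entropy drop $\ent(x)-\ent(x')$ in terms of $s$ and $b$. Since $x$ is locally optimal, the vertices of $K$ contribute $\tfrac{|K|}{n}\,h\!\left(\tfrac{|A\cap K|}{|K|}\right) = \tfrac{s+b}{n}\,h\!\left(\tfrac{s}{s+b}\right)$ to $\ent(x)$ (where I use that $h$ is symmetric). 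After the merge those $|K|=s+b$ vertices are split off into loose singleton components; after the first for-loop and rebalancing, each small vertex has weight pushed up to (at least) $1/2$ (resp.\ $1/2+\varepsilon$) and then rebalanced further upward, while big vertices keep weight $\geq 1/2$. The key point is that the component of $G'(x')$ containing a former small vertex of $K$ is disjoint from $K$ (Lemma~\ref{lem:MUPI_colors} shows the $\{v\}$ stay loose right after the for-loop, and Lemma~\ref{lem:rebalancing_colors} shows colors and small/big status are preserved under rebalancing), so after rebalancing each former small vertex $v$ of $K$ has $x'_v \geq 1/2$; thus its contribution to $\ent(x')$ is at most $-\tfrac{1}{n}\log(1/2) = \tfrac{1}{n}$, whereas under $x$ its contribution was $-\tfrac{1}{n}\log x_v$. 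More globally: the total contribution of the $s+b$ vertices of $K$ to $\ent(x)$ was $\tfrac{s+b}{n}h(\tfrac{s}{s+b})$, and after rebalancing these same vertices (wherever they now sit) contribute at most $\tfrac{s+b}{n}\cdot 1$ to the entropy of the current point (each has weight $\geq 1/2$); meanwhile, because rebalancing only decreases entropy overall and never touches the weights of vertices outside the affected components in a way that raises the total, the net drop satisfies
\[
\ent(x) - \ent(x') \;\geq\; \frac{s+b}{n}\,h\!\left(\frac{s}{s+b}\right) - \frac{s+b}{n}.
\]
This is not yet of the right shape; instead I will argue directly that the weight-raising in the for-loop already pays for the merge: each small vertex $v$ has $x_v \leq 1/2$ beforehand and $x'_v \geq 1/2$ afterwards, so $-\log x_v \geq -\log x'_v$, and more usefully, summing $-\log x_v$ over the small chain and comparing to $s\log 4 + s\log(b/s)$ is the heart of the calculation. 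The precise claim I will prove is that for a good, locally optimal component $K$,
\[
s\log(4b/s) \;\leq\; 3\sum_{v \in \text{small chain of }K} \bigl(-\log x_v\bigr)\;+\;(\text{entropy released by big vertices and loose components}),
\]
so that the right-hand side is at most $3n(\ent(x)-\ent(x'))$.

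The cleanest route to the displayed inequality is to note that for $v$ in the small chain, $x_v = \tfrac{s}{s+b}$ (local optimality), so $-\log x_v = \log\tfrac{s+b}{s} = \log(1 + b/s)$, and the small chain contributes $s\log(1+b/s)$ to $n\ent(x)$; the big chain contributes $b\log(1 + s/b) \geq s$ (since $\log(1+s/b) \geq \tfrac{s}{b}$ for... actually one uses $b\log(1+s/b)\ge s\log 2$ when $s\le b$, via concavity: $b\log(1+s/b) \ge s\log(1+1) = s$ since $t\mapsto b\log(1+t/b)$ evaluated... this needs $s \le b$, which holds because $K$'s small chain is the shorter). So the total contribution of $K$ to $n\ent(x)$ is at least $s\log(1+b/s) + s\log 2 \geq s\log\bigl(2(1+b/s)\bigr) \geq s\log(2b/s)$. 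After the iteration, these vertices contribute at most $(s+b)\cdot 1$ to $n\ent(\text{current})$... this still double-counts the big vertices. The resolution, which I expect to be the main obstacle, is to be careful that the big vertices of $K$ do \emph{not} see their contribution increase: big vertices already have $x_v\ge 1/2$, so $x'_v = x_v$ after the for-loop, and rebalancing can only move them up toward $1$, decreasing their contribution; thus the big chain's contribution is nonincreasing and can be ignored in the accounting. Consequently the net drop in $n\ent$ due to this iteration is at least $\bigl(s\log(1+b/s) + (\text{big-chain drop})\bigr) - s \cdot 1 \geq s\log(1+b/s) - s$, and combined with the big-chain drop being at least... Here I will instead invoke the bound $s\log(1+b/s) - s \ge \tfrac13 s\log(4b/s)$, which holds for all $s \le b$ (an elementary one-variable inequality in $t = b/s \ge 1$: $\log(1+t) - 1 \ge \tfrac13\log(4t)$, equivalently $(1+t)^3 \ge 2^3 \cdot 4t = 32t$ for $t\ge 1$, true since at $t=1$ it reads $8 \ge 32$—false). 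Since the naive constant fails, the real argument must also harvest the entropy released by the big chain: the big chain of $K$ has $b$ vertices each of weight $\tfrac{b}{s+b}$ contributing $b\log\tfrac{s+b}{b}$ to $n\ent(x)$, and after rebalancing these contribute at most... since they keep weight $\ge 1/2$ only if they don't immediately become cut-points, but generically their weight does not go all the way to $1$ in this iteration. The honest statement is that the big chain's contribution drops from $b\log(1+s/b)$ to at most $b$, a drop that is \emph{negative} (since $b\log(1+s/b) \le b$), so it does not help—meaning the small chain alone must pay. Re-examining: the correct inequality to prove is $s\log(1+b/s) - (s+b) \cdot \bigl[\text{new contribution per vertex}\bigr] + b\log(1+s/b) \ge \tfrac13 s\log(4b/s)$, i.e. $(s+b)h(s/(s+b)) - (s+b) \ge \tfrac13 s\log(4b/s)$ is false, so the new per-vertex contribution must be strictly below $1$. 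I therefore expect the actual proof to exploit that after rebalancing the small vertices typically reach weight strictly above $1/2$ (they merge with neighboring big components of the opposite... no, opposite color), pushing their contribution below $\tfrac1n$, and to use the color-priority to bound how far. Given the delicacy here, the main obstacle is pinning down the exact post-rebalancing weights well enough to make the constant $3$ come out; I would carry this out by a direct amortized analysis tracking each vertex's weight monotonically from its initial value up to $1$, charging every comparison in which a small vertex $v$ participates to the integral $\int$ of $-\,d\log(\text{weight of }v)$ along its lifetime, which totals $-\log x_v$ summed appropriately and is globally at most $n\ent(x_{\text{initial}})$ since entropy only decreases; the factor $3$ absorbs the $\log 4$ overhead of Hwang--Lin plus the constant-factor slack between "comparisons charged to $v$ in one iteration" and "$\log$ of the ratio by which $v$'s weight increases".
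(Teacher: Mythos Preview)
Your diagnosis is sound up to the point where you discover that the bare potential $n\ent(x)$ does not telescope with constant $3$: in an iteration on a red component the small vertices only move from weight $s/(s+b)$ to $1/2$, so the drop in $n\ent(x)$ is merely $s\log\tfrac{s+b}{s} - s$, and you correctly check that this falls short. The missing idea is to augment the potential to $\phi := n\ent(x) + r$, where $r$ is the current number of small \emph{red} vertices. In a red iteration the $s$ small vertices of $K$ become big, so $r$ drops by at least $s$, exactly cancelling the $-s$ deficit and giving $\phi_j - \phi_{j+1} \ge s\log\tfrac{s+b}{s}$. The elementary inequality $(s+b)^2 \ge 4sb$ then yields $s\log\tfrac{s+b}{s} \ge \tfrac{1}{2}\,s\log\tfrac{4b}{s}$, so each iteration's comparisons are at most $2(\phi_j-\phi_{j+1})$; telescoping gives $q \le 2\phi_1 = 2n\ent(x) + 2r_1$. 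The red/blue hypothesis is used only now, to bound $r_1$: each small red vertex has $x_v \le 1/2$, hence $r_1 \le \sum_{v\ \text{small red}}(-\log x_v) \le (\text{red contribution to }n\ent(x)) \le \tfrac{1}{2}\,n\ent(x)$, giving $q \le 3n\ent(x)$.

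For \emph{blue} iterations you assume the small vertices also end at weight $\ge 1/2$, incurring the same $-s$ loss; this would break the argument, since blue vertices do not decrement $r$. The structural point you miss is that the priority rule guarantees that when a blue $K$ is processed there are no non-trivial red components left; and since $x$ is locally optimal, the trivial ones are balanced, i.e.\ cut-points, with no neighbours in $G$. Now $K$ is good, so every edge out of its small chain (which lies in $A$) going outside $K$ would land in a red component---but we have just seen there is no such edge. Hence the small vertices of a blue $K$ are isolated in $G'$ after the merge and their weight goes all the way to $1$, not $1/2$; the blue entropy drop is the full $s\log\tfrac{s+b}{s}$ with no deficit, and $r$ stays put. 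Your closing speculation about weights ending ``strictly above $1/2$'' was groping toward this, but the mechanism is the good/priority interaction, not the rebalancing.
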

\begin{proof}
Let $k$ be the total number of iterations of the while-loop.
Consider the $j$th iteration. 
Let $G_{j}$ denote the graph $G$ at the beginning of that iteration.
In this proof, we deviate from the notations used above, and denote by $x'$, $x''$, and $x'''$ 
the feasible point under consideration at the beginning,
at the end of the first for-loop, and at the end of the while-loop, respectively. 
(We keep the notation $x$ for the original point given in input.)
Let $K$ be the good component chosen at that iteration.
Let $s_{j}$ be the number of small vertices in $K$. (Thus, $s_{j}= |K\cap A|$
if $|K\cap A| < |K\cap B|$, and $s_{j}= |K\cap B|$ otherwise.)
Let similarly $t_{j}$ be the number of big vertices in $K$.

Let $r_{j}$ be the number of small red vertices in $G_{j}(x')$,
and let $\phi_{j}:= n\ent(x') + r_{j}$. Let also $r_{k+1} := 0$ and $\phi_{k+1} := 0$.

From Lemma~\ref{lem:HL_bound}, we know that when the two chains $K\cap A$ and $K \cap B$ are merged,
the Hwang-Lin algorithm spends at most $s_{j} \log \frac{4t_{j}}{s_{j}}$ comparisons. 

First suppose $K$ is red in $G_{j}(x')$. During the first for-loop (lines \ref{line:begin_for-loop1}--\ref{line:end_for-loop1}), the algorithm increases $x'_{v}$ to at least $1/2$ for every small vertex $v$ in $K$. Thus, while such a vertex contributed $-\frac 1n \log \frac {s_{j}}{s_{j}+t_{j}}$ to the entropy of $x'$, its contribution to that of $x''$ is at most $- \frac 1n \log 1/2 = \frac 1n$. It follows 
$$
\ent(x') - \ent(x'') \geq 
-\frac {s_{j}}n \log \frac{s_{j}}{s_{j}+t_{j}} -  \frac{s_{j}}{n}
=  \frac {s_{j}}n \log \frac{s_{j}+t_{j}}{s_{j}} - \frac{s_{j}}{n}.
$$
Since the rebalancing algorithm does not increase the entropy, we have
$\ent(x''') \leq \ent(x'')$, and hence
\begin{equation}
\label{eq:delta_ent_red}
\ent(x') - \ent(x''') \geq 
\frac {s_{j}}n \log \frac{s_{j}+t_{j}}{s_{j}} - \frac{s_{j}}{n}.
\end{equation}

Let us look at the difference $r_{j} - r_{j+1}$.
Every small vertex in $K$ becomes big at the end of the for-loop,
and all other vertices keep their status. Also, as we have already seen, the status of the vertices do not change during the rebalancing algorithm. Thus
\begin{equation}
\label{eq:delta_r_red}
r_{j} - r_{j+1} \geq s_{j}.
\end{equation}

Now assume $K$ is blue in $G_{j}(x')$. Since the algorithm gives priority
to good components that are red, and there is at least one such component if there is a red component,
it follows that every component of $G_{j}(x')$ is blue.
Then it can be checked that $x'_{v}$ is increased to $1$ for every small vertex $v$ in $K$
by the algorithm. (In fact, this is also true for every big vertex in $K$, though we will not use
that fact.)  We have
$$
\ent(x') - \ent(x'') \geq 
-\frac {s_{j}}n \log \frac{s_{j}}{s_{j}+t_{j}}
=  \frac {s_{j}}n \log \frac{s_{j}+t_{j}}{s_{j}}.
$$
Again, we have $\ent(x''') \leq \ent(x'')$, which implies
\begin{equation}
\label{eq:delta_ent_blue}
\ent(x') - \ent(x''') \geq 
\frac {s_{j}}n \log \frac{s_{j}+t_{j}}{s_{j}}.
\end{equation}

Here, there are no red vertices anymore. Thus
\begin{equation}
\label{eq:delta_r_blue}
r_{j} - r_{j+1} = 0.
\end{equation}

It follows from \eqref{eq:delta_ent_red} and \eqref{eq:delta_r_red} (in the red case)
and from \eqref{eq:delta_ent_blue} and \eqref{eq:delta_r_blue} (in the blue case)
that
\begin{equation}
\label{eq:delta_phi}
\phi_{j} - \phi_{j+1}
\geq s_{j}\log \frac{s_{j}+t_{j}}{s_{j}}.
\end{equation}

The right-hand side of \eqref{eq:delta_phi} 
can be bounded as follows:
\begin{equation}
\label{eqn:onemerge}
s_{j} \log \frac{s_{j}+t_{j}}{s_{j}} \geq \frac 12 s_{j} \log \frac{4t_{j}}{s_{j}}.
\end{equation}
This inequality follows from the fact that 
$\left(\frac{s_{j}+t_{j}}{s_{j}}\right)^{2}
= \frac{(s_{j}-t_{j})^{2}}{s_{j}^{2}} + \frac{4t_{j}s_{j}}{s_{j}^{2}} \geq \frac{4t_{j}}{s_{j}}$.

Now, let $q$ be the total number of comparisons done by the algorithm.
Using \eqref{eq:delta_phi}, \eqref{eqn:onemerge}, and Lemma~\ref{lem:HL_bound}, we obtain
$$
\phi_{1} = \sum_{j=1}^{k} (\phi_{j} - \phi_{j+1})
\geq \sum_{j=1}^{k}  s_{j} \log \frac{s_{j}+t_{j}}{s_{j}}
\geq \sum_{j=1}^{k} \frac 12 s_{j} \log \frac{4t_{j}}{s_{j}} 
\geq \frac{q}{2}.
$$
Thus
$$
q \leq 2\phi_{1} = 2n \ent(x) + 2 r_{1}.
$$

The number $r_{1}$ of small red vertices in $G_{1}(x)$ 
is equal to $n \ent(\tilde{x})$, where $\tilde x$ is the point defined
by letting $\tilde{x}_{v}= 1/2$ for every small red vertex $v$ in $G_{1}(x)$, and letting
$\tilde{x}_{v}= 1$ for every other vertex. 
The entropy of $\tilde{x}$ is at most the contribution 
of the red components in $G_{1}(x)$ to the entropy of $x$. The latter contribution
is in turn, by our assumption, at most $\ent(x)/2$. Therefore, 
$$
q \leq 2n \ent(x) + 2 r_{1} = 2n \ent(x) + 2n \ent(\tilde{x}) \leq 3n \ent(x),
$$
as claimed.
\end{proof}

\subsection{Putting Pieces Together: the Final Algorithm}
\label{sec:MUPI}

Our algorithm for the problem of merging under partial information is given below, see Algorithm \ref{alg:MUPI}. By combining Theorem \ref{th:c1} and Lemma \ref{lem:MUPIqc}, we obtain the following result. In the next section, we prove that the algorithm can be implemented so that its global complexity is $O(n^2 \log^2 n)$.

\begin{theorem}
Let $P$ be a poset covered by two disjoint chains $A$, $B$, and let $G = \bar{G}(P)$. Then Algorithm \ref{alg:MUPI} merges $A$ and $B$ in at most $6 \log e(P)$ comparisons.
\end{theorem}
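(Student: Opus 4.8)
The statement to prove is that Algorithm~\ref{alg:MUPI} merges $A$ and $B$ in at most $6\log e(P)$ comparisons, given the poset $P$ covered by two disjoint chains. The plan is to combine three ingredients already available: Lemma~\ref{lem:MUPIqc}, which bounds the number of comparisons by $3n\,\ent(x)$ where $x$ is the locally optimal starting point; the fact (from Section~\ref{sec:H_bip}, Theorem~\ref{thm:KM} and the explicit optimal point it produces) that a locally optimal point of $\STAB(G)$ achieving $\ent(x) = H(G) = H(\bar P)$ can be computed; and Theorem~\ref{th:c1}, which gives $nH(\bar P) \le 2\log e(P)$.

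\textbf{Step 1: construct the input point.} First I would spell out what Algorithm~\ref{alg:MUPI} does beyond Algorithm~\ref{alg:MUPI_core}: it preprocesses $P$ by computing the K\"orner--Marton partitions $A = A_1\cup\cdots\cup A_k$, $B = B_1\cup\cdots\cup B_k$ from Theorem~\ref{thm:KM} and the associated point $x$ with $x_u = |A_i|/(|A_i|+|B_i|)$ for $u\in A_i$ and $x_v = |B_i|/(|A_i|+|B_i|)$ for $v\in B_i$. By Theorem~\ref{thm:KM} this point lies in $\STAB(G)$ and has $\ent(x) = H(G) = H(\bar P)$. One must check that this $x$ is \emph{locally optimal} in the sense of \eqref{K:weight}: the tight edges of $G$ with respect to $x$ are exactly those joining $A_i$ to $B_i$ (an edge between $A_i$ and $B_j$ with $i\neq j$ cannot be tight because the ratios differ, using the maximality in \eqref{eq:bip_ratio}), so each nontrivial component of $G(x)$ is contained in one $A_i\cup B_i$ and the weights have precisely the form required by \eqref{K:weight}. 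One also needs to arrange the second hypothesis of Lemma~\ref{lem:MUPIqc}, that the red components contribute at most half of $\ent(x)$; this is done by swapping the roles of $A$ and $B$ if necessary, which does not change $H(\bar P)$.

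\textbf{Step 2: chain the inequalities.} With such an $x$ in hand, Lemma~\ref{lem:MUPIqc} gives that the number of comparisons is at most $3n\,\ent(x) = 3\,nH(\bar P)$. Applying Theorem~\ref{th:c1}, $nH(\bar P) \le 2\log e(P)$, hence the number of comparisons is at most $3\cdot 2\log e(P) = 6\log e(P)$, which is exactly the claimed bound (so $c_3 = 6$, matching the constant $15.09 \simeq 9.09 + 6$ promised for Algorithm~3 via Theorem~\ref{thm:mupired}). I would also note that initially-existing cut-points of $P$ are output directly in the preprocessing phase (the fourth assumption of Algorithm~\ref{alg:MUPI_core}), and that doing so costs no comparisons, so the count is unaffected.

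\textbf{Main obstacle.} The genuinely non-routine point is the first step: verifying that the explicit K\"orner--Marton point is locally optimal and correctly identifying its tight-edge graph $G(x)$, so that the hypotheses of Lemma~\ref{lem:MUPIqc} are literally met. The rest is just substitution of the two inequalities. A secondary thing to be careful about is that Lemma~\ref{lem:MUPIqc} is stated for Algorithm~\ref{alg:MUPI_core}, whereas the theorem speaks of Algorithm~\ref{alg:MUPI}; the proof must observe that Algorithm~\ref{alg:MUPI} is precisely Algorithm~\ref{alg:MUPI_core} run on the point produced in Step~1 (plus the preprocessing and the cut-point handling), so the query complexity is the same.
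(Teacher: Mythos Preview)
Your proposal is correct and follows exactly the paper's route: the theorem is obtained by combining Lemma~\ref{lem:MUPIqc} (giving $\le 3n\,\ent(x)$ comparisons) with $\ent(x)=H(\bar P)$ for the optimal point and Theorem~\ref{th:c1} ($nH(\bar P)\le 2\log e(P)$). One small remark on your Step~1: you need not analyze the K\"orner--Marton tight-edge structure to check local optimality---any optimal $x$ is automatically locally optimal, since an unbalanced component could be rebalanced to strictly decrease the entropy (cf.\ the convexity observation at the end of Section~\ref{sec:locimp}); incidentally, your claim that edges between $A_i$ and $B_j$ with $i\neq j$ are never tight fails when the two blocks happen to have equal ratios, though local optimality still holds in that case.
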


\begin{algorithm}[h!]
\caption{Algorithm for Merging under Partial Information} \label{alg:MUPI}
\begin{algorithmic}[1]
\STATE \label{line:H(P)_width_2} compute a point $x \in \STAB(G)$ with $H(x)$ minimum
\IF{the contribution of the red components to $H(x)$ exceeds that of the blue components} 
\STATE exchange the chains $A$ and $B$
\ENDIF
\FOR{$v \in A \cup B$} 
\IF{$v$ is a cut-point} 
\STATE{copy $v$ at its final position in the chain $C$}
\ENDIF
\ENDFOR
\STATE call Algorithm \ref{alg:MUPI_core}
\RETURN $C$
\end{algorithmic}
\end{algorithm}

\subsection{Complexity}
\label{sec:MUPI_complexity}

In this section, we sketch an efficient implementation of the main steps of Algorithm \ref{alg:MUPI}, namely computing the entropy of a poset of width at most $2$ (line \ref{line:H(P)_width_2} of Algorithm \ref{alg:MUPI}), merging a pair of disjoint chains (line \ref{line:HL} of Algorithm \ref{alg:MUPI_core}, called by Algorithm \ref{alg:MUPI}), and updating after a merging (lines \ref{line:begin_for-loop1}--\ref{line:end_for-loop2} of Algorithm \ref{alg:MUPI_core}, called by Algorithm \ref{alg:MUPI}). 

There are some differences between the way the algorithms are described above, and the way they are implemented here: for the sake of efficiency, we sometimes change the order of some steps or use ways to accelerate some others. 

We start by briefly discussing the data structures used.
 
\paragraph*{Data Structures} 

The two chains $A = \{a_1,\ldots,a_{|A|}\}$ and $B = \{b_1,\ldots,b_{|B|}\}$ are kept in separate vectors which are never modified during the course of the algorithm (throughout, we assume $a_{i} \leqslant_P a_{i+1}$ for $i = 1, \ldots, |A|-1$ and $b_{j} \leqslant_P b_{j+1}$ for $j = 1, \ldots, |B|-1$). The output chain $C$ is a vector of size $n$, initialized arbitrarily. As soon as the `true' rank (that is, the rank in the linear order $\leqslant$) of a vertex is known, it is copied to the corresponding entry of $C$.

The data structure for the incomparability graph $G$ has two parts: a static part and a dynamic part. The dynamic part also contains information that allows us to monitor the evolution of the point $x \in \STAB(G)$, and in particular the components of $G(x)$.

The static part records the initial neighborhood of each vertex of $G$, as it is at the beginning of the algorithm. Because each of these neighborhoods is an interval of either $A$ or $B$, it suffices to record the indices of the first and last vertex within each interval. For instance, consider a vertex $v \in B$, and let $[a_i,a_j] := \{a_i,\ldots,a_j\}$ denote its neighborhood in $G$. Then we record the pair $(i,j)$. We allow $j = i-1$ when $v$ is a cut-point. (In this case, the rank of $v$ in the linear order $\leqslant$ is precisely its rank in the chain $B$, plus $i-1$.)

The dynamic part consists of the list of non-trivial components of $G$ and, for each such component of $G$, the list of non-trivial components of $G(x)$ contained in the corresponding component of $G$. The order of the components in each list is kept consistent with $P$. (Recall that $x\in \STAB^{*}(G)$ during the whole algorithm, and hence $P$ induces a linear ordering on the non-trivial components of $G(x)$ by Lemma~\ref{lem:int}.)  Trivial components (balanced or not) are not explicitly stored. 

Extra information is stored in the nodes of these lists. Consider a component $Z$ of $G$. Then, by Lemma \ref{lem:width_2_prop}, both $A \cap Z$ and $B \cap Z$ are intervals. We store the indices of the first and last vertices of each of these intervals, in the node for $Z$. This is used to implicitly maintain the neighborhood of each vertex of $G$: the current neighborhood of a vertex is the intersection of its initial neighborhood and of the component of $G$ that contains it.

In the node corresponding to a non-trivial component $K$ of $G(x)$, we store the indices of the first and last vertices of the intervals of $A \cap K$ and $B \cap K$. We also store the value of $x_u$ for some $u \in A \cap K$ and of $x_v$ for some $v \in B \cap K$ (because $K$ is a component of $G(x)$, the point $x$ is constant on both $A \cap K$ and $B \cap K$).

On the side, we maintain the list of unbalanced components of $G(x)$ (here, the order of the components in the list is arbitrary). Extra information is placed in the lists of components of $G(x)$ so that locating a given unbalanced component takes constant time. Similarly, we maintain the list of good components of $G(x)$ (the red components are systematically placed before the blue ones).

\paragraph*{Computing the Entropy}

In Appendix~\ref{app:bip_convex}, we prove the next result which implies that line \ref{line:H(P)_width_2} of Algorithm \ref{alg:MUPI} can be performed in $O(n^2 \log^2 n)$ time. This is due to the fact that, in virtue of Lemma \ref{lem:width_2_prop}, the incomparability graph of a poset of width at most $2$ is bipartite and biconvex, thus in particular bipartite and convex.

\begin{lemma}
\label{lem:H(P)_convex}
The entropy of an $n$-vertex convex bipartite graph $G$ can be computed in time $O(n^2 \log^2 n)$. 
\end{lemma}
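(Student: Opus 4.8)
The plan is to give an algorithm based on the Körner--Marton construction (Theorem~\ref{thm:KM}) for computing $H(G)$ when $G$ is convex bipartite, and to show that each of the $k \le n$ iterations of that construction can be carried out in $O(n \log^2 n)$ time. Recall that at iteration $i$ we must find a subset $A_i$ of the remaining $A$-side vertices $A'$ maximizing the density $|A_i| / |N_{G'}(A_i)|$ in the current graph $G'$, and then set $B_i := N_{G'}(A_i)$ and delete $A_i \cup B_i$. So the work reduces to two subproblems repeated $O(n)$ times: (a) finding a maximum-density subset of one side of a bipartite graph; and (b) maintaining $G'$ under these deletions so that neighborhoods can still be queried cheaply.

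For subproblem~(a), the key structural fact is that $G$ (hence every $G'$ obtained by deleting vertices) is convex: after deleting vertices we still have that $N(u)$ is an interval of the opposite chain for every vertex $u$. I would first observe that it suffices to look for $A_i$ among a polynomial family of candidate sets. In a convex bipartite graph where each $N(a)$ is an interval $[\ell_a, r_a]$ of $B$, the neighborhood $N(S)$ of a set $S \subseteq A$ is itself an interval (or union controlled by the extremes), and a standard exchange argument shows an optimal maximizer can be taken to be of the form $\{a \in A : N(a) \subseteq [p,q]\}$ for some interval $[p,q]$ of $B$; equivalently one considers, for each choice of left endpoint $p$ and right endpoint $q$ among the $O(n)$ candidate $B$-vertices, the set of $A$-vertices whose interval is contained in $[p,q]$, and picks the pair maximizing its size divided by $q-p+1$. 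Naively this is $O(n^2)$ per iteration, giving $O(n^3)$ overall, which is too slow; to reach $O(n \log^2 n)$ per iteration one sorts the $A$-vertices by (say) left endpoint, and for each right endpoint $q$ uses a balanced search tree / sweep keyed on the left endpoints to retrieve, in $O(\log n)$ time, the best left endpoint $p$ — this is the classic maximum-density-interval trick, and careful bookkeeping keeps it at $O(n \log n)$ or $O(n\log^2 n)$ per iteration. The $O(\log^2 n)$ (rather than $O(\log n)$) factor is presumably the price of the dynamic data structure that survives the deletions.

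For subproblem~(b), I would maintain the graph via the interval endpoints of each vertex in a balanced binary search tree (or a van Emde Boas / augmented BST), together with the convention for isolated vertices ($A_i$ a singleton, $B_i = \varnothing$) and for the case $A' = \varnothing$, $B' \ne \varnothing$ described after Theorem~\ref{thm:KM}. Deleting the $O(n)$ vertices of $A_i \cup B_i$ — and correspondingly truncating the intervals of the surviving opposite-side vertices, which by convexity is again an interval operation — costs $O(\log n)$ amortized per deleted vertex, hence $O(n \log n)$ total across all iterations. Finally, once the partitions $A = A_1 \cup \cdots \cup A_k$, $B = B_1 \cup \cdots \cup B_k$ are in hand, evaluating $H(G) = \sum_i \frac{|A_i|+|B_i|}{n}\,h\!\left(\frac{|A_i|}{|A_i|+|B_i|}\right)$ is $O(n)$ arithmetic operations. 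Summing: $O(n)$ iterations, each $O(n \log^2 n)$, gives the claimed $O(n^2 \log^2 n)$ bound.

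The main obstacle I anticipate is the correctness and efficiency of subproblem~(a) simultaneously: proving that the density-maximizing set can be restricted to the interval-parametrized family relies on an exchange/submodularity argument specific to convex bipartite graphs (monotonicity of $N(\cdot)$ under nested intervals, and the fact that $|N(S)|$ for such $S$ is exactly the length of the relevant $B$-interval), and then one must verify that this survives vertex deletions — i.e., that convexity is preserved and the candidate family is still sufficient in $G'$. The data-structural side (getting per-iteration cost down to polylogarithmic rather than linear-times-polylog, which would blow the total up to $O(n^3)$) is the other delicate point; I would handle it by a sweep over $B$-endpoints combined with a priority-search-tree-like structure, amortizing the $O(n)$ deletions over the whole run rather than bounding each iteration in isolation. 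Since the statement is deferred to Appendix~\ref{app:bip_convex}, I would expect the full treatment there to spell out exactly which augmented BST achieves the $O(\log^2 n)$ query/update, and to check the boundary cases (cut-points, isolated vertices, empty sides) against the Körner--Marton recipe.
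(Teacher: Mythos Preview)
Your outer skeleton matches the paper: both run the K\"orner--Marton procedure (Theorem~\ref{thm:KM}) for at most $n$ iterations and aim for $O(n\log^2 n)$ per iteration. The difference lies entirely in how the maximum-density set $A_i$ is found.

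The paper does \emph{not} use a structural characterization of the maximizer. Instead, it does a parametric search: it binary-searches the optimal ratio over the $O(n^2)$ candidate values $\beta/\alpha$ (hence $O(\log n)$ guesses), and for each guess reduces the decision ``is there $A_i$ with $|A_i|/|N_{G'}(A_i)|>\beta/\alpha$?'' to a minimum $s$--$t$ cut, equivalently a maximum $b$-matching in the convex bipartite graph $G'$. The latter is solved in $O(n\log n)$ by an adaptation of Glover's algorithm for convex bipartite matching with a heap. Extracting the actual maximizer is then a BFS in the residual network, also $O(n\log n)$ because the flow support has linear size. This cleanly gives $O(\log n)\cdot O(n\log n)=O(n\log^2 n)$ per iteration.

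Your route is different and has two soft spots. First, your structural claim---that some optimal $A_i$ has the form $\{a:N(a)\subseteq[p,q]\}$ with $N(A_i)=[p,q]$---is in fact true (if $S^*$ is optimal one may assume $N(S^*)$ is an interval by the mediant inequality, then enlarge $S^*$ to all $a$ with $N(a)\subseteq N(S^*)$ without changing the neighborhood), but you do not prove it, and calling it a ``standard exchange argument'' undersells the work needed. Second, and more seriously, your complexity claim is not substantiated: for fixed $q$ the function $p\mapsto |\{a:\ell_a\ge p,\,r_a\le q\}|/(q-p+1)$ is a ratio of two monotone quantities with no monotonicity of its own, so ``retrieve the best $p$ in $O(\log n)$ from a BST'' is not a step one can take for granted. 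Getting this down to $O(n\log^2 n)$ per iteration by direct enumeration would itself require a parametric or fractional-programming argument of roughly the same sophistication as the paper's bisection---at which point you have essentially reinvented the paper's approach. The paper's route via Glover's algorithm is the missing ingredient that makes the per-iteration bound rigorous.
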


\paragraph*{Rebalancing} 

Assume for now that the current point $x$ in Algorithm \ref{alg:rebalance} is such that $G(x)$ has no loose component. Then, processing an unbalanced component $K$ of $G(x)$ (see lines \ref{line:compute_slack}--\ref{line:add_slack} of Algorithm \ref{alg:rebalance}) can be done in constant time: it suffices to check the components of $G(x)$ that touch $K$, and perform the necessary updates. Letting $Z$ denote the component of $G$ that contains $K$, these components are the neighbors of $K$ in the list of components of $G(x)$ contained in $Z$. Thus there are at most two components to check. 

Now, if there were loose components in $G(x)$, we handle them separately
before calling Algorithm \ref{alg:rebalance}: These components are treated
simultaneously and in constant time during a specific ``updating'' phase right after the merging of the two chains $X$ and $Y$. This is explained below.

\paragraph*{Merging}

We implement the Hwang-Lin algorithm (line \ref{line:HL} of Algorithm \ref{alg:MUPI_core}, for a description see Section~\ref{sec:MUPI_core} or the original paper~\cite{HL72}) so that its complexity is proportional to the number of comparisons it performs, plus the number of cut-points discovered. This is possible because none of the vertices in $A$ or $B$ is moved. Each cut-point is copied to the output chain as soon as it is found (more details are given below).

\paragraph*{Updating After a Merging}

After the chains $X$ and $Y$ are merged, they are both split in at most three intervals: $X_1$ contains the vertices in $X$ that are ranked below all vertices in $Y$ in the merged chain, $X_2$ contains the vertices in $X$ that are ranked between two vertices in $Y$ in the merged chain and $X_3$ contains all the other vertices in $X$, that is, all those that are ranked above all vertices in $Y$. The intervals $Y_1$, $Y_2$ and $Y_3$ are defined similarly. Obviously, either $X_1$ or $Y_1$ is empty, and the same holds for $X_3$ and $Y_3$.

The vertices of the middle intervals $X_2$ and $Y_2$ become cut-points and are thus copied in the output chain $C$. Some extra vertices in $X_1$, $Y_1$, $X_3$ or $Y_3$ may also become cut-points. This is determined by inspecting the neighborhoods of at most four vertices in the components of $G(x)$ that touch the component $K$. More precisely, letting $J$ and $L$ denote the components of $G(x)$ that touch $K$ and directly precede or follow $K$, respectively (possibly, $J$ or $L$ is not defined), then we only have to inspect the last vertices of $J \cap A$ and $J \cap B$ and the first vertices of $L \cap A$ and $L \cap B$.

The above information, which describes the precise way in which components of $G$ and $G(x)$ change, can be obtained during the merging, essentially at no extra cost. Knowing it, we can update the list of components of $G$, and the lists of components of $G(x)$: the component of $G$ containing $K$ is typically split in two components, $K$ is deleted, and the components of $G(x)$ that touch the component $K$ are updated, as is explained in the next paragraph.
 
The vertices in $X_1 \cup Y_1$ that do not become cut-points (if any) are incorporated in the component $J$ (these vertices exactly correspond to the loose components of $G'(x)$ that touch $K$), and the vertices in $X_3 \cup Y_3$ that do not become cut-points (if any) are incorporated in the component $L$ (these vertices exactly correspond to the loose components of $G'(x)$ that touch $L$). 

Thus we do not implement lines \ref{line:begin_for-loop1}--\ref{line:rebalance_core} of Algorithm \ref{alg:MUPI_core} as is, but we rather process all loose components simultaneously, and then continue the rebalancing step normally. As said previously, a similar remark is in order for lines 
\ref{line:begin_for-loop2}--\ref{line:end_for-loop2} of Algorithm \ref{alg:MUPI_core}: we actually copy cut-points in the output chain as soon as possible.

\begin{figure}[ht]
\begin{center}
\includegraphics[scale=.6]{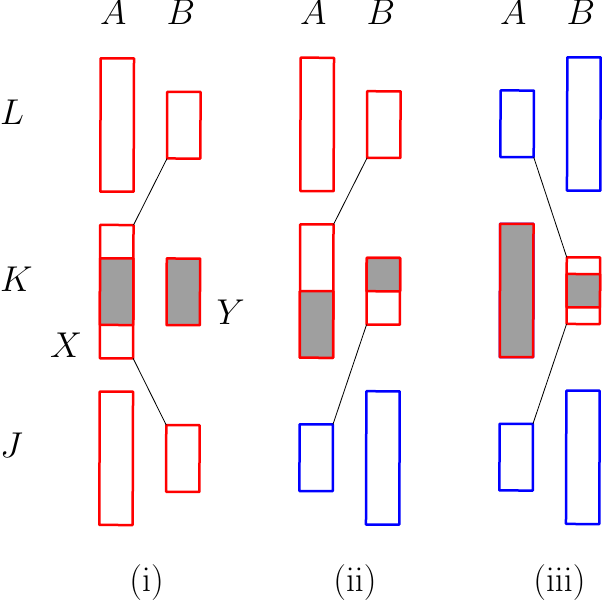}
\end{center}
\caption{Evolution of the components after a merging: three main cases.\label{fig:compupdate}}
\end{figure}

The possible evolution of the components after a merging is shown in Figure \ref{fig:compupdate}. We have illustrated three cases: (i) $J$ and $L$ have the same color as $K$, (ii) only $L$ has the same color as $K$, (iii) both $K$ and $L$ have different colors. The only edges with at least one endpoint in $K$ that may be present in $G'$ are displayed in the figure. Portions of the chains shown in gray depict vertices that become cut-points.

Because the number of operations, when the operations necessary for  merging pairs of chains or discovering cut-points is put aside, is linear in the number of components that $G(x)$ initially had, and each such operation takes constant time, we infer the following result, that is crucial to the next section.

\begin{lemma}
\label{lem:core_complexity}
Algorithm \ref{alg:MUPI_core} can be implemented so that its running time is $O(q)+O(n)$, where $q$ is the number of comparisons performed.
\end{lemma}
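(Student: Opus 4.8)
The plan is to split the running time of Algorithm~\ref{alg:MUPI_core} into two parts: the cost incurred inside the Hwang--Lin merge of line~\ref{line:HL}, and everything else. I would show that the first part is $O(q)+O(n)$ and the second is $O(n)$. Since each vertex becomes a cut-point at most once, and is copied to the output chain $C$ exactly at that moment, this yields the claimed $O(q)+O(n)$ bound. Throughout I would rely on the representation of $G$ and $x$ described above: the fixed vectors for $A$ and $B$, the static record of initial neighbourhoods, and the doubly-linked lists of non-trivial components of $G$ and of $G(x)$, equipped with cross-pointers so that any named component can be located in $O(1)$ time, together with the side lists of unbalanced and of good components.

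\emph{Bounding the number of iterations.} First I would argue that the main while-loop runs $O(n)$ times. At the top of each iteration $x$ is locally optimal --- this is the loop invariant restored by the rebalancing on line~\ref{line:rebalance_core}, and it holds initially by hypothesis --- so by Lemma~\ref{lem:int} the non-trivial components of $G(x)$ occupy intervals of $A$ and of $B$ and are linearly ordered by $P$. The iteration picks one such component $K$ (Lemma~\ref{lem:good_component} supplies one of the required colour) and merges its two chains; afterwards, by Lemma~\ref{lem:MUPI_colors}, exactly the vertices of $K$ become loose, hence trivial, components of $G'(x)$. In the updating phase these loose vertices are either exposed as cut-points or absorbed into the at most two non-trivial components $J$, $L$ that touch $K$ (Lemma~\ref{lem:rebalancing_touch}, with ``touches'' well defined thanks to Lemma~\ref{lem:int}(ii)), and no new non-trivial component is ever created. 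Thus the number of non-trivial components of $G(x)$ strictly decreases at each iteration, and since it is at most $\lfloor n/2\rfloor$ to begin with, there are $O(n)$ iterations. The same monotone quantity bounds the total number of merges triggered by rebalancing over the whole execution; combined with the fact that each main-loop iteration leaves at most two components unbalanced, a routine amortization shows that the total number of iterations of the while-loop of Algorithm~\ref{alg:rebalance} is $O(n)$ as well.

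\emph{Cost per unit of work.} Next I would check that each of the $O(n)$ main-loop iterations, each of the $O(n)$ rebalancing steps, and each of the $O(n)$ cut-point discoveries costs $O(1)$, and that the merges cost $O(q)+O(n)$ in total. Picking a good component is $O(1)$ since the list of good components is kept with the red ones first. The Hwang--Lin algorithm never moves a vertex of the fixed vectors $A$, $B$, so it can be run in time proportional to the number of comparisons it performs (bounded via Lemma~\ref{lem:HL_bound}) plus the number of cut-points it uncovers; summed over all iterations this is $O(q)+O(n)$. The for-loops of lines~\ref{line:begin_for-loop1}--\ref{line:end_for-loop2} are not executed literally: using Lemmas~\ref{lem:width_2_prop}, \ref{lem:int}, \ref{lem:rebalancing_colors} and~\ref{lem:MUPI_colors}, every loose component created by the merge touches $K$ and can only attach to $J$ or $L$, so all of them are processed simultaneously in $O(1)$ time by inspecting the neighbourhoods of at most four boundary vertices --- the last of $J\cap A$ and of $J\cap B$, the first of $L\cap A$ and of $L\cap B$ --- and then performing a constant number of updates: splitting the component of $G$ that contained $K$, deleting $K$, correcting the stored interval endpoints and the values of $x$, and repairing the lists of components of $G$, of $G(x)$, of unbalanced components, and of good components. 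The cross-pointers make each such edit $O(1)$. Finally, once no loose component remains, a rebalancing step only examines the at most two components adjacent to the current unbalanced component in its list and so costs $O(1)$. Summing, $O(q)$ for the comparisons, $O(n)$ for the cut-points, and $O(n)$ for the iteration and rebalancing bookkeeping, giving $O(q)+O(n)$.

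\emph{Main obstacle.} The substance of the argument is not any single estimate but the claim that one iteration has a genuinely local effect --- that after a merge only a bounded number of components of $G$ and of $G(x)$ change, and that the cascade of rebalancing merges it may trigger is globally $O(n)$. This rests entirely on the structural results of Section~\ref{sec:locimp}: the interval and linear-order structure of the components of $G(x)$ (Lemma~\ref{lem:int}), the fact that a rebalanced component merges only with components it touches and of its own colour (Lemmas~\ref{lem:rebalancing_touch} and~\ref{lem:rebalancing_colors}), and the precise identification of which vertices become loose after a merge (Lemma~\ref{lem:MUPI_colors}). The rest is a case analysis, organised along the three cases of Figure~\ref{fig:compupdate}, verifying that each data-structure operation the implementation needs --- locating an unbalanced or a good component, splitting a component of $G$, inserting and deleting list nodes, updating the implicitly stored neighbourhoods --- really does take constant time; this is tedious but mechanical.
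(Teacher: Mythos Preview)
Your proposal is correct and follows essentially the same approach as the paper: the paper's justification (which is the surrounding discussion rather than a formal proof environment) likewise argues that the Hwang--Lin calls cost $O(q)$ plus $O(1)$ per cut-point discovered, that all loose components created by a merge are handled simultaneously in $O(1)$ time by inspecting the same four boundary vertices you name, and that the remaining bookkeeping is linear in the initial number of components of $G(x)$ because each rebalancing or main-loop iteration either balances or merges components. Your write-up is somewhat more explicit about the amortization bounding the total number of rebalancing steps, but the underlying argument is the same.
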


Lemmas \ref{lem:H(P)_convex} and \ref{lem:core_complexity} together imply that Algorithm \ref{alg:MUPI} can be implemented so that its running time is $O(n^2 \log^2 n)$.

\section{Reducing the Sorting Complexity}
\label{sec:reuse}

Recall that the preprocessing complexity is the number of operations performed before the first comparison, while the remaining operations account for the sorting complexity. Our goal in this section is to provide an algorithm whose sorting complexity is $O(\log e(P)) + O(n)$. By confining the entropy computation in the preprocessing phase, we are able to reuse the result of this preprocessing to sort any other instance with the same partial information. 

\begin{algorithm}[h!]
\caption{Sorting under Partial Information with reduced sorting complexity} \label{alg:SUPIpre}
\begin{algorithmic}[1]
\STATE \COMMENT{\textit{Phase 1 (preprocessing)}} 
\STATE find a maximum chain $A \subseteq P$
\STATE let $G$ be the bipartite spanning subgraph of $\bar{G}(P)$ 
containing all the edges between $A$ and $P-A$, and no other edge
\STATE compute a point $x\in \STAB(G)$ with $\ent(x)$ minimum
\STATE compute the function $f$ \COMMENT{\textit{(see below)}}
\STATE compute a greedy chain decomposition and the corresponding Huffman tree for the poset $P - A$
\STATE \COMMENT{\textit{Phase 2 (sorting)}} 
\STATE apply Algorithm~\ref{alg:merge_sort} to the poset $P - A$, yielding a chain $B$
\STATE \label{line:updateG} update the graph $G$ 
\STATE \label{line:updatecomp} compute the components of $G(x)$ and eliminate inlays by improving $x$
\STATE \label{line:loosevert} handle loose vertices of $G(x)$
\STATE \label{line:rebalance} rebalance $x$ using Algorithm~\ref{alg:rebalance}
\IF{the contribution of the red components to $\ent(x)$ exceeds that of the blue components} 
\STATE exchange the chains $A$ and $B$
\ENDIF
\STATE call Algorithm~\ref{alg:MUPI_core} 
\RETURN the resulting chain $C$
\end{algorithmic}
\end{algorithm}

The main idea of Algorithm~\ref{alg:SUPIpre} is to compute a minimum entropy point $x$ of a bipartite graph $G$ that can be defined before the sorting phase, solely on the basis of the initial partial information $P$. The following lemma shows that the entropy of $x$ provides enough information to guide
the sorting phase.

\begin{lemma}
Algorithm~\ref{alg:SUPIpre} is an algorithm for the problem of sorting under partial information with query complexity at most $15.09 \log e(P)$.
\end{lemma}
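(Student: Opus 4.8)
The plan is to split the comparisons made by Algorithm~\ref{alg:SUPIpre} into two batches and bound each separately. Only two lines of the algorithm actually perform comparisons: the call to Algorithm~\ref{alg:merge_sort} on $P-A$, which produces the chain $B$, and the final call to Algorithm~\ref{alg:MUPI_core}, which merges $A$ and $B$; every other line merely processes data (updating $G$, computing components, eliminating inlays, handling loose vertices, rebalancing $x$, and possibly swapping $A$ and $B$). Writing $q_1$ and $q_2$ for the numbers of comparisons in these two phases, I would establish $q_1\le 9.09\log e(P)$ and $q_2\le 6\log e(P)$, and add the two bounds.

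For $q_1$, I would invoke Theorem~\ref{thm:nearopt} for the run of Algorithm~\ref{alg:merge_sort} on the poset $P-A$ (of order $n-|A|$), giving $q_1\le(1+\varepsilon)\log e(P-A)+\bigl((1+\varepsilon)(\log e+\log(1+1/\varepsilon))+1\bigr)\,|P-A|$. Since the chain $A$ can always be merged into any linear extension of $P-A$, we have $e(P-A)\le e(P)$; and since $A$ is a maximum chain, Lemma~\ref{lem:bigchain} gives $|A|\ge 2^{-H(\bar{P})}n$, so by Lemma~\ref{lem:redfac} $|P-A|=n-|A|\le n\bigl(1-2^{-H(\bar{P})}\bigr)\le\ln 2\cdot nH(\bar{P})$, and Theorem~\ref{th:c1} gives $nH(\bar{P})\le 2\log e(P)$. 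Substituting and choosing $\varepsilon=0.35$, exactly as in the proof of Theorem~\ref{thm:mupired}, yields $q_1\le 9.09\log e(P)$.

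For $q_2$, the substance is to verify that the point $x$ fed to Algorithm~\ref{alg:MUPI_core} meets the hypotheses of Lemma~\ref{lem:MUPIqc} and has small entropy. Once $P-A$ has been merged into $B$, the current partial information $P'$ is covered by the two chains $A$ and $B$, so $\bar{G}(P')$ is the incomparability graph of a poset of width at most $2$; moreover incomparability in $P'$ implies incomparability in $P$, hence $E(\bar{G}(P'))\subseteq E(G)$ for the bipartite graph $G$ computed during preprocessing. Therefore, after line~\ref{line:updateG} resets $G$ to $\bar{G}(P')$, the preprocessed point $x$ --- which belongs to $\STAB^{*}(G)$, being a finite-entropy minimizer --- is still feasible. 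Lines~\ref{line:updatecomp}--\ref{line:rebalance} are exactly the operations analyzed in Section~\ref{sec:locimp}: removing inlays by improving $x$, raising the coordinates of loose vertices to at least $1/2$, and rebalancing; each keeps $x$ in $\STAB^{*}(G)$, none increases $\ent(x)$ (rebalancing does not increase entropy, raising coordinates cannot increase any term $-\log x_v$, and the inlay-removal step improves and hence lowers the entropy), and after them $x$ is locally optimal. The conditional exchange of $A$ and $B$ then makes the contribution of the red components to $\ent(x)$ no larger than that of the blue ones. Lemma~\ref{lem:MUPIqc} now gives $q_2\le 3n\,\ent(x)$.

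It remains to bound $n\,\ent(x)$. By the monotonicity just noted, $\ent(x)$ is at most the entropy of the preprocessed point, which by construction equals $H(G)$; and since $G$ is a spanning subgraph of $\bar{G}(P)$, every stable set of $\bar{G}(P)$ is stable in $G$, so $\STAB(\bar{G}(P))\subseteq\STAB(G)$ and $H(G)\le H(\bar{G}(P))=H(\bar{P})$. Combining with Theorem~\ref{th:c1}, $n\,\ent(x)\le nH(\bar{P})\le 2\log e(P)$, whence $q_2\le 6\log e(P)$, and $q_1+q_2\le 15.09\log e(P)$. The main obstacle I anticipate is not any single inequality but the bookkeeping in the previous paragraph: confirming that the ``offline'' point $x$, computed before $B$ is known, still serves --- after the cheap update, inlay-removal, loose-vertex and rebalancing steps --- as a legitimate locally optimal input to Algorithm~\ref{alg:MUPI_core} whose entropy has not grown beyond $H(\bar{P})$.
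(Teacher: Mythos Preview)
Your proposal is correct and follows essentially the same route as the paper: split the comparisons into the merge-sort phase on $P-A$ (bounded by $9.09\log e(P)$ via the argument of Theorem~\ref{thm:mupired}) and the call to Algorithm~\ref{alg:MUPI_core} (bounded by $3n\ent(x)\le 3nH(\bar P)\le 6\log e(P)$ via Lemma~\ref{lem:MUPIqc}, the subgraph inequality $H(G)\le H(\bar P)$, and Theorem~\ref{th:c1}). You in fact supply more of the bookkeeping than the paper does --- verifying that the preprocessed point stays feasible after $G$ shrinks and that lines~\ref{line:updatecomp}--\ref{line:rebalance} never increase its entropy --- which the paper's two-line proof leaves implicit.
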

\begin{proof}
The proof is the same as that of Theorem~\ref{thm:mupired}, but where the query complexity of Algorithm~\ref{alg:MUPI_core} for merging under partial information is now at most $3n\ent (x)$ from Lemma~\ref{lem:MUPIqc}.

Since $G$ is a subgraph of $\bar{G}(P)$, its entropy is at most that of $\bar{G}(P)$. Combining this with Theorem~\ref{th:c1}, we get:
$$
3n\ent (x) \leq 3n\ent(\bar{P}) \leq 6\log e(P) .
$$
Hence, following the same reasoning as in Theorem~\ref{thm:mupired}, we obtain that Algorithm~\ref{alg:SUPIpre} has query complexity at most $(9.09 + 6)\log e(P) = 15.09 \log e(P)$.
\end{proof}

\subsection{Preprocessing}

The preprocessing phase involves computing the entropy of a convex bipartite graph $G$. This can be done in $O(n^2 \log^2 n)$ time, see Lemma \ref{lem:H(P)_convex}.

During this phase, we also compute the function $f$ that associates to each interval $[c,d]$ of the chain $A$, the maximum $M$ of $x_{u}$ over all $u \in [c, d]$, together with the largest interval $[c', d'] \subseteq [c, d]$ such that $x_{c'} = x_{d'}= M$. This can be done in $O(n^{2})$ time and space using a straightforward dynamic program.

\subsection{Sorting}

We now have to show that the sorting complexity is $O(\log e(P)) + O(n)$. Thus all operations of the sorting phase of Algorithm~\ref{alg:SUPIpre} have to be implemented with a $O(n)$ overhead. The main issues in that respect are the complexities of lines~\ref{line:updateG}--\ref{line:rebalance}.

\paragraph*{Updating the Graph $G$}
In line~\ref{line:updateG} of the algorithm, we modify $G$ so that it becomes the incomparability graph of the partial information we have right after the chain $B$ has been computed. This is an update, in the sense that the new graph $G$ will be a spanning subgraph of the old one.

To perform this update in linear time, we make use of the structural observations of Lemma~\ref{lem:width_2_prop}. In particular, property (iii) of this lemma allows us to recover the incomparability interval of every vertex in $A$ ($B$) by scanning twice the chain $B$ ($A$, respectively). More precisely, we first scan the chain $A$ from bottom to top and find, for each vertex in $A$, the lower endpoint of its incomparability interval in $B$. These endpoints are increasing; hence, this does not require any backtracking in $B$. A second scanning from top to bottom yields the upper endpoints. The incomparability intervals for vertices in $B$ are computed similarly. Therefore, the whole graph $G$ can be computed in $O(n)$ time.

\paragraph*{Finding the Components of $G(x)$}
At line~\ref{line:updatecomp} of the algorithm, we aim at computing the components of $G(x)$ and encoding them in the data structure described in Section~\ref{sec:MUPI_complexity}. During this step, we also modify the point $x$ so that inlays in $G(x)$ are avoided. These weight modifications consist simply in increasing $x_{v}$ for some vertices $v$ (without loosing feasibility); hence the entropy of $x$ can only decrease during this step.

We proceed by scanning $B$ from bottom to top. First, for every vertex $v$ in $B$, we apply the function $f$ on the incomparability interval $[c,d]$ of $v$, and obtain the corresponding maximum $M(v)$ and an interval $[c', d'] \subseteq [c,d]$. If $x_{v} + M(v) = 1$, then $v, c'$, and $d'$ belong to the same component of $G(x)$ (possibly $c' = d')$. We save such ``tight'' intervals in a list, together with the corresponding vertices $v$, and forget about the intervals that are not tight.

Next, we compute in linear time the union of all tight intervals in the list (by scanning the list once and merging consecutive intervals when they intersect). This results in a collection of disjoint intervals $[u_{1},u'_{1}], \dots, [u_{\ell},u'_{\ell}]$ of $A$. For each such interval $[u_{i},u'_{i}]$, we can compute in constant time the smallest vertex $v_{i} \in B$ and largest vertex $v_{i}' \in B$ such that the tight intervals of $v_{i}$ and $v_{i}'$ are included in $[u_i,u'_i]$ (possibly $v_{i} = v'_{i}$). Observe that the intervals $[v_{1},v'_{1}], \dots, [v_{\ell},v'_{\ell}]$ of $B$ are also disjoint. Also, it can be checked that for every $i$, we have $x_{v_{i}} = x_{v'_{i}} = 1 - x_{u_{i}} = 1 - x_{u'_{i}}$. Moreover, for every $v \in [v_{i}, v_{i}']$, we have $x_{v} \leq x_{v_{i}} = x_{v'_{i}}$. Similarly, for every $u \in [u_i,u'_i]$, we have  $x_{u} \leq x_{u_{i}} = x_{u'_{i}}$. Thus, for every $i \in \{1, \dots, \ell\}$, we can safely update the point $x$ as follows: we increase $x_{v}$ to $x_{v_{i}}$ for every $v \in [v_{i}, v_{i}']$, and similarly
increase $x_{u}$ to $x_{u_{i}} = 1 - x_{v_{i}}$ for every $u \in [u_{i},u'_{i}]$. 

The non-trivial components of $G(x)$ (with $x$ updated as above) are exactly given by the collection $[u_{i},u_{i}'] \cup [v_{i}, v_{i}']$ for $i \in \{1, \dots, \ell\}$. Notice that the components of $G(x)$ are now free of inlays (as defined in Section~\ref{sec:locimp}).

\paragraph*{Handling Loose Vertices}

It remains to process vertices that are not incident to any tight edge in $G(x)$, but that are not cut-points either (line~\ref{line:loosevert} of the algorithm). We again scan $B$ bottom-up, and for each loose vertex $v\in B$, check the weights associated with the non-trivial components touching the component $\{v\}$. There are at most two such components. We also apply the function $f$ to the interval of vertices in $A$ strictly between those components, and within the bounds of the incomparability interval of $v$. This allows us to determine in constant time a slack value by which we can increase $x_{v}$. The vertex $v$ may now be included in a previously defined component of $G(x)$, or form a new component with loose vertices of $A$.

Afterwards, the remaining loose vertices of $A$ can be eliminated in a similar fashion. For those, however, no new component can be created, as there are no loose vertices remaining in $B$.

It can be checked that the above procedure involves only a constant number of operations per vertex. Therefore, the complexity of lines~\ref{line:updatecomp}--\ref{line:loosevert} is $O(n)$.

\paragraph*{Rebalancing}
The rebalancing step (line~\ref{line:rebalance}) involves Algorithm~\ref{alg:rebalance}. At every iteration of this algorithm, the number of components of $G(x)$ decreases, hence there can be at most a linear number of iterations. Every iteration takes constant time using the data structure described in Section~\ref{sec:MUPI_complexity} for the components (this data structure can be used because $G(x)$ has no loose components). Thus the complexity of the rebalancing step is $O(n)$ as well.

\section*{Acknowledgments and a Final Remark}

We thank an anonymous referee for the numerous insightful comments and pointers to relevant references. 

In particular, the referee pointed out to us the possibility of an efficient implementation of Linial's algorithm for merging under partial information~\cite{L84}. The latter algorithm takes advantage of the fact that computing the number of linear extensions of partial orders $P$ that can be covered by two disjoint chains (the ones we deal with in Section~\ref{sec:MUPIsec}) can be done in polynomial time. Hence we can find an efficient query, that is, a query ``is $v_i\leq v_j$?'' that splits the space of linear extensions as evenly as possible, in polynomial time. Linial suggests the use of determinants to count linear extensions, which is likely to be inefficient. It is however possible to improve on this and compute those numbers via a simple dynamic program over the downsets of $P$. When a query is answered, it is possible to update the dynamic programming table locally, so as to reuse as much information as possible from the previous steps. In order to avoid the problems of dealing with huge numbers, the arithmetic operations can be performed with limited precision. 

It is likely that this algorithm would be competitive with the solution proposed here as Algorithm 7, and conceptually much simpler. It does not have the property, however, to have separated preprocessing and sorting phases, which is the main point of the current developments and Algorithm 7.

\bibliographystyle{plain}
\bibliography{supi}

\appendix

\section{Greedy Chain Decompositions}
\label{app:greedy}

Any given poset $P$ can be canonically decomposed into ``levels''. To construct this decomposition, we find the set $L_1$ of minimal elements of $P$ (that is, the elements of $P$ without predecessor), then set $L_2$ of minimal elements of $P-L_1$, and continue likewise until we find a set $L_h$ such that $P-L_1-\cdots-L_h$ is empty. The set $L_i$ is the $i$th {\sl level\/} of $P$, and $h$ is the {\sl height\/} of $P$. By construction, every element of $L_i$ has a predecessor in $L_{i-1}$, for $i=2,\ldots,h$. Thus $P$ contains a chain of size $h$. Because each level is an antichain, the maximum size of a chain in $P$ is precisely $h$.

The levels of a poset $P$ of order $n$ can be found in time $O(n^2)$. If, while constructing the levels, we record for each vertex in a level $L_i$ with $i \ge 2$ one of its predecessors in the previous level $L_{i-1}$, a maximum chain of $P$ can be then found in time $O(h)$.

\begin{proposition}
There is a $O(n^{2.5})$ algorithm finding a greedy chain decomposition of any poset of order $n$.
\end{proposition}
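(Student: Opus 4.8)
The plan is to build the greedy chain decomposition in \emph{rounds}. In each round I would compute the level (height) decomposition of the current poset and then pull out, all at once, a \emph{maximal} family of pairwise disjoint \emph{longest} chains, before recursing on what remains. Two observations should make this work: such a maximal family is a legitimate prefix of a greedy chain decomposition, so I never need a single longest chain in isolation; and the number of rounds will be only $O(\sqrt n)$.

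In more detail: let $P'$ be the current poset and $L_1,\dots,L_h$ its level decomposition, computable in $O(n^2)$ time as recalled at the start of this appendix, so that $h$ is the height of $P'$ and any length-$h$ chain of $P'$ has exactly one vertex in each $L_i$; note that for $u\in L_i$ and $v\in L_{i+1}$ one has $u<_{P'}v$ if and only if $u\lessdot_{P'}v$, since there is no room for an element strictly between them. I would build a layered network in which a source $s$ is joined to all of $L_1$, all of $L_h$ are joined to a sink $t$, there is an arc $u\to v$ for every comparability $u<_{P'}v$ with $u\in L_i$ and $v\in L_{i+1}$, and every vertex has capacity $1$. An integral $s$--$t$ flow decomposes into vertex-disjoint \emph{saturated} chains (one vertex per level), so a \emph{blocking} flow — one that leaves no $s$--$t$ path avoiding the used vertices — yields a maximal family $\mathcal F$ of pairwise disjoint chains of $P'$ of length $h$; such a blocking flow is found in $O(|V|+|E|)=O(n^2)$ time by the standard depth-first search with current-arc pointers (unit capacities, each arc explored $O(1)$ times).

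For correctness I would argue as follows. Removing the chains of $\mathcal F$ one at a time, each has length $h$ and, when it is removed, the remaining sub-poset still contains the yet-unremoved chains of $\mathcal F$, hence still has height $h$; so every chain of $\mathcal F$ is a longest chain of the poset at the moment it is removed, and $\mathcal F$, in any order, is a valid prefix of a greedy chain decomposition. Moreover, after all of $\mathcal F$ is removed the remaining poset $P''$ has height $<h$: a length-$h$ chain of $P''\subseteq P'$ would have to use one vertex of each $L_i$ with consecutive vertices forming covers in $P'$, hence would give an $s$--$t$ path in the network avoiding all used vertices, contradicting that the flow is blocking. Recursing on $P''$ and concatenating then yields a greedy chain decomposition of the original poset.

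For the running time, let $h^{(1)}>h^{(2)}>\cdots>h^{(R)}\ge 1$ be the heights seen in the successive rounds; these are distinct positive integers, and round $i$ deletes at least $h^{(i)}$ vertices (at least one chain, of length $h^{(i)}$), so $\sum_i h^{(i)}\le n$ and therefore $R(R+1)/2\le n$, i.e.\ $R=O(\sqrt n)$. Each round costs $O(n^2)$ (the level decomposition, building the network, and one blocking flow), so the total is $O(R\cdot n^2)=O(n^{2.5})$. The step I expect to be the main obstacle is precisely the decision to harvest a whole \emph{maximal} (not maximum) family of longest chains in one round, together with the ``distinct decreasing heights'' count: without this, one is forced to recompute a level decomposition after every single extraction, and there may be $\Theta(n)$ of them — for instance on a disjoint union of short chains — giving only the trivial $O(n^3)$ bound.
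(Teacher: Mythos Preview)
Your proof is correct and takes a genuinely different route from the paper's. The paper uses a two-phase argument: while the height exceeds $\sqrt{n}$, it rebuilds the level decomposition from scratch before each single extraction (at most $\sqrt{n}$ such extractions since each removes $>\sqrt{n}$ vertices); once the height drops to at most $\sqrt{n}$, it switches to an incremental scheme that maintains, for each element, a table of predecessors indexed by level difference, and updates these tables after each extraction. The second phase is analysed via a potential (``score'') argument: each relation $u\leqslant_P v$ carries the sum of the level indices of $u$ and $v$, initially $O(\sqrt{n})$, and every touch of that relation decreases its score, giving $O(n^{2}\sqrt{n})$ total work. Your approach sidesteps this machinery entirely: by extracting, in one $O(n^{2})$ blocking-flow pass, a \emph{maximal} family of disjoint height-$h$ chains, you force the height to strictly decrease between rounds, and the distinct-decreasing-heights bound $\sum_i h^{(i)}\le n$ then gives $O(\sqrt{n})$ rounds directly. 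The correctness step --- that such a maximal family is a valid prefix of a greedy decomposition because each chain in it is longest at the moment it is pulled --- is exactly the right observation; the paper never needs it because it always extracts a single chain. Your argument is shorter and conceptually cleaner; the paper's has the minor advantage of not invoking any flow subroutine.

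One small point worth making explicit in a final write-up: to get the blocking flow in $O(|E|)$ time you should note that, after vertex splitting, all arc capacities can be taken to be $1$ (any arc $u_{\mathrm{out}}\to v_{\mathrm{in}}$ carries flow at most the unit capacity of $u_{\mathrm{in}}\to u_{\mathrm{out}}$), so the network is genuinely unit-capacity and the standard DFS-with-current-arcs argument applies verbatim.
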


\begin{proof}
We assume we know all the relations of $P$. If needed, we compute a transitive closure in time $\widetilde{O}(n^\omega)$, where $\omega$ is any real such that any two $n \times n$ matrices can be multiplied by performing $O(n^\omega)$ arithmetic operations, e.g., $\omega =  2.376$.
%Adleman, Booth, Preparata and Ruzzo give O(n^\omega \log n \log \log \log n \log \log \log \log n)

While the height of $P$ exceeds $\sqrt{n}$, we repeat the following steps: build the decomposition of $P$ into levels from scratch, find a maximum chain $C$ in $P$, record $C$ and remove $C$ from $P$. This first phase takes $O(\sqrt{n}\,n^2) = O(n^{2.5})$ time.

Now the height of $P$ is at most $\sqrt{n}$. We continue as before except that instead of rebuilding the levels each time from scratch, we update them. To this end, we maintain for each element $v$ of $P$ a {\sl table of predecessors}. Suppose $v$ lies in level $L_i$. Then the $j$th entry of the table gives the list of predecessors of $v$ lying $j$ levels down, in level $L_{i-j}$.

Updating the levels is done as follows. First, for each element $u$ of the chain $C$, we delete $u$ from $P$ and update the table of predecessors of every successor of $u$. We mark every element $v \in P - C$ such that the first component of the predecessor table for $v$ becomes empty. Second, for $i = 1, \dots, h$, we process the $i$th level $L_i$: For each element $u$ that is marked, we determine the minimum index $j$ such that the $j$th component of the predecessor table for $u$ is non-empty, move $u$ in level $L_{i-j}$, update the predecessor table for $u$ and the predecessor table of every successor $v$ of $u$. Again, we mark every element $v$ such that the first component of the predecessor table for $v$ becomes empty.

In order to analyze the algorithm, we assign to each relation of $P$ a ``score''. The {\sl score\/} of $u \leqslant_P v$ is $i + j$, where $i$ and $j$ are the indices of the levels containing $u$ and $v$, respectively. Initially, the score of each relation is $O(\sqrt{n})$. Each time a relation is considered, its score is decreased by at least one. Hence, a given relation is considered $O(\sqrt{n})$ times through all the updates. Thus, the second phase of the algorithm also takes $O(n^2\sqrt{n}) = O(n^{2.5})$ time.

Therefore, a greedy chain decomposition can be found in $O(n^{2.5}$) time.
\end{proof}

\section{Computing the Entropy of Convex Bipartite Graphs}
\label{app:bip_convex}

\begin{proof}[Proof of Lemma \ref{lem:H(P)_convex}]
Let $A, B$ denote a bipartition of the vertices of $G$. Without loss of generality, $G$ is $A$-convex, that is, there is a linear ordering on the vertices in $A$ such that the neighborhood of every vertex of $B$ is an interval in $A$. 

We explain how to implement one iteration of the method of K\"orner and Marton~\cite{KM88} described in Section \ref{sec:H_bip}. As previously, we denote by $G'$ the current graph, and by $A'$, $B'$ its current bipartition. Thus $G'$ is $A'$-convex.

Vertices in $A'$ that are isolated in $G'$ are dealt with first and separately. Thus, we may assume that no vertex in $A'$ is isolated in $G'$. Similarly, we may assume that $A'$ is nonempty. 

First, the algorithm determines the maximum ratio \eqref{eq:bip_ratio} achievable by a subset $A_i \subseteq A'$, by bisection. Let
$$
\rho := \beta / \alpha
$$ 
denote the guessed ratio, with $\beta, \alpha \in \{1,\ldots,n\}$. Since there are $O(n^2)$ possible ratios, the number of guesses is $O(\log n)$. Next, we prove that we can decide in $O(n \log n)$ time whether there exists a subset $A_i \subseteq A'$ whose ratio is larger than $\rho$, or whether no such subset exists.

Consider the network $D$ obtained from $G'$ by directing all its edges from $A'$ to $B'$, adjoining a source vertex $s$ sending a directed edge to each vertex of $A'$, and a sink vertex $t$ receiving a directed edge from each vertex of $B'$. The capacities of the directed edges incident to $s$ (resp.\ $t$) are set to $\alpha$ (resp.\ $\beta$). The capacities of the other directed edges are set to $\infty$. Because the $s$--$t$ cut defined by $\{s\} \cup A'$ has capacity $\alpha |A'|$, two cases can occur: either the minimum capacity of a $s$--$t$ cut in $D$ equals $\alpha |A'|$ ({\sl case (i)\/}), or it is less than $\alpha |A'|$ ({\sl case (ii)\/}). 

We claim that there exists a subset $A_i \subseteq A'$ such that the ratio \eqref{eq:bip_ratio} is larger than $\rho$ if and only if case (ii) arises. Indeed, if such a subset $A_i$ exists then the capacity of the cut defined by $\{s\} \cup A_i \cup N_{G'}(A_i)$ equals $\alpha(|A'| - |A_i|) + \beta |N_{G'}(A_i)|$, which is less than $\alpha |A'|$ because \eqref{eq:bip_ratio} is larger than $\beta / \alpha$. Conversely, if case (ii) arises then consider a minimum $s$--$t$ cut defined by $\{s\} \cup X \cup Y$, where $X \subseteq A'$ and $Y \subseteq B'$. By minimality of the cut, it follows that $Y = N_{G'}(X)$.
Because the capacity of the cut is less than $\alpha |A'|$, we conclude that
$$
\frac{|X|}{|N_{G'}(X)|} > \frac{\beta}{\alpha}.
$$
The claim follows. By the max-flow min-cut theorem, case (ii) arises if and only if the maximum value of a $s$--$t$ flow in $D$ is strictly smaller than $\alpha |A'|$. 

Computing a maximum $s$--$t$ flow in $D$ amounts to computing a maximum $b$-matching in the convex bipartite graph $G'$, where the weights of the vertices are defined as $b_u := \alpha$ whenever $u \in A'$ and $b_v := \beta$ whenever $v \in B'$. This can be done in $O(n \log n)$ time by adapting Glover's algorithm for computing a maximum matching in a convex bipartite graph \cite{G67} to the weighted case, and using a heap for storing vertices of $B'$. 

Second, once the maximum possible value of the ratio \eqref{eq:bip_ratio} is determined, we seek a maximizer $A_i \subseteq A'$. This amounts to converting the last maximum $s$--$t$ flow computed during the bisection into a minimum $s$--$t$ cut. Because case (i) arises, the value of the flow equals $\alpha |A'|$. Hence, all the directed edges incident to $s$ are saturated. If all the directed edges incident to $t$ are also saturated, then $\alpha |A'| = \beta |B'|$ and we may take $A_i := A'$. Otherwise, we perform a BFS from $t$ in an auxiliarly network obtained from $D$ by deleting all saturated directed edges, reversing all non-saturated edges (in particular, all the edges from $A$ to $B$) and adding all the directed edges $(u,v)$ from $A$ to $B$ that carry a nonzero flow. Because $G'$ is $A'$-convex and the support of the maximum $s$--$t$ flow in $D$ constructed by Glover's algorithm is of size $O(n)$, we can perform the BFS in $O(n \log n)$ time. We then define $A_i$ as the vertices of $A'$ that cannot be reached from $t$ in the auxiliary network. The lemma follows.
\end{proof}

\end{document}